\renewcommand{\subsubsection}[1]{\vspace{0.5ex}\noindent\textbf{{#1}}\,}
\providecommand*{\ifempty}[3]{\ifthenelse{\isempty{#1}}{#2}{#3}}
\newcommand{\parensmathoper}[2]{\ensuremath{#1\ifempty{#2}{}{(#2)}}}
\newcommand{\N}{\mathbb{N}}
\newcommand{\Q}[1][+]{\mathbb{Q}_{#1}}
\newcommand{\R}[1][+]{\mathbb{R}_{#1}}
\newcommand{\M}{\mathcal{M}}
\newcommand{\Exp}{\parensmathoper{\mathit{Exp}}}
\newcommand{\Distr}{\parensmathoper{\mathcal{D}}}
\newcommand{\stateat}[2]{#1 [ #2 ]}
\newcommand{\timeat}[2]{#1 \langle #2 \rangle}  
\newcommand{\prefix}[2]{ #1|^{#2}}
\newcommand{\tail}[2]{#1 |_{#2}} 
\newcommand{\cyl}{\parensmathoper{\mathfrak{C}}}
\newcommand{\paths}{ \parensmathoper{\Pi} }
\newcommand{\pathsC}{ \parensmathoper{\Pi^2}}
\newcommand{\sigmapaths}[1]{ \Sigma_{\paths{}\ifempty{#1}{}{(#1)}} }
\renewcommand{\Pr}[2][\M]{\mathbb{P}^{#1}_{#2}}
\newcommand{\false}{\bot}
\newcommand{\Next}[2][I]{\mathsf{X}^{#1} #2}
\newcommand{\Until}[3][I]{#2 \mathbin{\mathsf{U}^{#1}} #3}
\newcommand{\MTL}[1][]{\text{\normalfont MTL}^{#1}}
\newcommand{\DTA}{\mathrm{DTA}}
\newcommand{\sDTAR}{1\text{-}\mathrm{RDTA}}
\newcommand{\A}{\mathcal{A}}
\newcommand{\clk}{\mathcal{X}}
\newcommand{\clkV}{\mathcal{V}(\clk)}
\newcommand{\clkC}{\mathcal{G}(\clk)}
\newcommand{\orth}{\mathrel{\bot}}
\newcommand{\trs}[1]{\xrightarrow{\;#1\;}}
\renewcommand{\L}{\parensmathoper{\mathcal{L}}}
\newcommand{\cl}{\overline}
\newcommand{\eqlbl}{\mathrel{\equiv_\ell}}
\newcommand{\coupling}[2]{\Omega(#1, #2)}
\newcommand{\set}[2]{\left\{ #1 \ifempty{#2}{}{\mid #2} \right\}}
\newcommand{\pair}[2]{\langle #1,#2 \rangle}
\newcommand{\denot}[2][\M]{\llbracket #2 \rrbracket_{#1}}
\newcommand{\mDom}[1][S \times S]{[0,1]^{#1}}
\newcommand{\dist}[1][]{\delta_{\color{red}#1}}
\newcommand{\discr}[1]{\gamma^{\mathcal{#1}}}
\newcommand{\tv}[2]{\| #1 - #2 \|_{\text{\tiny TV}}}
\newcommand{\norm}[2][\infty]{\| #2 \|_{#1}}
\newcommand{\symdiff}{\mathbin{\triangle}}
\title{Topologies of Stochastic Markov Models: \\ Computational Aspects%
\thanks{Work supported by the Sino-Danish Basic Research Center IDEA4CPS, the EU Artemis M-BAT project, the EU FP7 CASSTING and SENSATION projects.}}
\author{Giorgio Bacci \and Giovanni Bacci \and Kim G.~Larsen \and Radu Mardare}
\institute{
Department of Computer Science, Aalborg University, Denmark \\
\email{\{grbacci,giovbacci,kgl,mardare\}@cs.aau.dk}
}
\authorrunning{G.\ Bacci, G.\ Bacci, K.\ Larsen, R.\ Mardare}
\begin{document}

\maketitle

\begin{abstract}
In this paper we propose two behavioral distances that support approximate reasoning on Stochastic Markov Models (SMMs), that are continuous-time stochastic transition systems where the residence time on each state is described by a generic probability measure on the positive real line. In particular, we study the problem of measuring the behavioral dissimilarity of two SMMs against linear real-time specifications expressed as Metric Temporal Logic (MTL) formulas or Deterministic Timed-Automata (DTA).

The most natural choice for such a distance is the one that measures the maximal difference that can be observed comparing two SMMs with respect to their probability of satisfying an arbitrary specification. We show that computing this metric is \textbf{NP}-hard. In addition, we show that any algorithm that approximates the distance within a certain absolute error, depending on the size of the SMMs, is \textbf{NP}-hard.

Nevertheless, we introduce an alternative distance, based on the Kantorovich metric, that is an over-approximation of the former and we show that, under mild assumptions on the residence time distributions, it can be computed in polynomial time.
\end{abstract}

\section{Introduction} \label{sec:intro}

Continuous-time probabilistic systems constitute the basic semantical tool to model random phenomena in complex real-time applications. They are successfully exploited in performance and dependability analysis 
and mostly used in applications such as systems biology, modeling/testing of cyber-physical systems, machine learning, and analysis of lossy network systems, etc.

In this context, the models are verified (or \emph{model checked}) against real-time specifications aiming at determining the probability by which these are attained.
In this way one may prove that unwanted behaviors are unlike to occur (e.g., with a low probability) within a given time horizon or that certain events happen according to a specific desired timing pattern with high probability.
Usually, real-time specifications are expressed as temporal logical formulas~\cite{Koymans90,AlurH93,AlurH94,AzizSSB00} or as the language recognized by automata-like formalisms such as Timed Automata (TAs)~\cite{AlurD94}. In this work we focus on the class of \emph{linear} real-time specifications expressed either as Metric Temporal Logic (MTL) formulas~\cite{Koymans90,AlurH93,AlurH94} or as the timed-languages recognized by Deterministic Timed Automata (DTAs)~\cite{AlurM04}.
Our attention on linear-time properties, opposed to branching-time properties, is motivated by the fact that in many applications the system to be modeled cannot be internally accessed, but only tested via observations performed over a set of random executions. For instance, this is mostly common in application domains such as systems biology, modeling/testing of cyber-physical systems, and machine learning.

However, when one aims at verifying properties of a real system out of its model representation, he should also take into consideration the degree of inaccuracy of the represention. Indeed, if the real-valued parameters of the model have been acquired from empirical data subject to error estimates, any analysis performed on it is itself subject to an inherent source of inaccuracy, that may lead to deceptive results on the original system. This motivated the study of \emph{behavioral metric semantics}, initially developed for discrete-time Markov chains (MCs)~\cite{DesharnaisGJP04}, then extended to continuous-time models, generalized semi-Markov processes~\cite{GuptaJP06}, and general Markov Processes~\cite{DesharnaisGJP04}, which provide for a formal notion of behavioral similarity between systems.

In this paper we study metrics aimed at helping the verification of continuos-time Markov systems against linear real-time specifications. In particular, we define two pseudometrics, respectively called \emph{MTL} and \emph{DTA} \emph{variation distances}, that measure the maximal difference that can be observed by comparing two models with respect to their likelihood of satisfying any specification expressed in the form of an MTL formula or as a timed language recognized by a DTA, respectively. Therefore, knowing that two models are at distance $\epsilon \geq 0$ from each other ensures that any result obtained by an analysis w.r.t.\ a specification (hence, an MTL formula or a DTA) on one system can be reflected to the other with an absolute error bounded by $\epsilon$.

\noindent The technical contributions of this paper can be summarized as follows.
\begin{enumerate}[topsep=0pt,noitemsep=1ex,fullwidth,itemindent=\parindent]
\item We introduce \emph{stochastic Markov models} (SMMs), that are continuous-time stochastic transition systems where the residence time on a state is specified by a generic probability measure on the positive real line. These generalize both MCs ---where probabilistic transitions happens instantaneously--- and time-ho\-mo\-ge\-ne\-ous Continuous-Time Markov Chains (CTMCs) ---where the residence time probability on states is characterized by a negative exponential distribution.
\item We study the measure-theoretical and topological properties of real-time specifications expressed as MTL formulas or DTAs. Specifically, we show that the $\sigma$-algebras generated, respectively, by MTL and DTA specifications, coincide. Then, we introduce a pseudometric between real-time specifications (that are, measurable sets of timed paths) and we single out a considerably simple family of specifications, namely those represented as resetting single-clock DTAs ($\sDTAR$), that is \emph{dense} in the whole $\sigma$-algebra. This will imply that the probability measured in any measurable set can be approximated arbitrarily close by an $\sDTAR$s. This has practical applications in quantitative model checking on CTMCs, since this allows one to exploit efficient algorithms for \emph{single clock} DTAs~\cite{KatoenLMCS11} to approximate model checking against DTAs or MTL formulas.

\item We consider two total variation distances on SMMs: one characterizing the maximal variation w.r.t. $\MTL$ specifications, the other w.r.t. $\DTA$ specifications. We show that these two pseudometrics coincide, and that the total variation can be obtained solely looking at $\sDTAR$s specifications.

\item We prove that the problem of computing the distance, \emph{exactly}, is \textbf{NP}-hard. This is done via a reduction from Max Clique, by generalizing an argument by Lyngs\o\ and Pedersen~\cite{LyngsoP02}. Furthermore, we show that even the problem of approximating the distance within a certain absolute error, depending on the size of the SMMs, is \textbf{NP}-hard. To the best of our knowledge, whether the distance is computable or not still remains an open problem.

\item Nevertheless, we provide a fixed point bisimilarity distance, based on the Kantorovich metric, that is an over-approximation of the former. Then, extending a result by Chen et al.~\cite{ChenBW12} for MCs, we prove that, under mild assumptions on the residence time distributions occurring in the SMMs, this distance can be computed in polynomial time using the ellipsoid algorithm.
\end{enumerate}

The paper has an appendix that contains some of the proofs that could not be included in the paper due to their size or complexity.

\section{Preliminaries} \label{sec:prelim}
In this section we recall the basic notions used in the paper and fix the notation. 

\subsubsection{Measure theory.} 
A \emph{field} over a set $X$ is a nonempty family $\mathcal{F}$ of subsets of $X$ closed under complement and finite union. A \emph{$\sigma$-algebra} $\Sigma$ over $X$ is a field also closed under countable union. The pair $(X, \Sigma)$ is called a \emph{measurable space} and the elements of $\Sigma$ \emph{measurable sets}. For a family $\mathcal{F}$ of subsets of $X$, the $\sigma$-algebra generated by $\mathcal F$ is the smallest $\sigma$-algebra containing $\mathcal{F}$, denote by $\sigma(\mathcal{F})$.

Consider two measurable spaces $(X, \Sigma_X)$ and $(Y, \Sigma_Y)$. A function $f \colon X \to Y$ is \emph{measurable} if for all $E \in \Sigma_Y$, $f^{-1}(E) = \{x \mid f(x) \in E\} \in \Sigma_X$. 
The \emph{product spaces}, $(X,\Sigma_Y) \otimes (Y, \Sigma_Y)$, is the measurable space $(X \times Y, \Sigma_X \otimes \Sigma_Y)$, where $\Sigma_X \otimes \Sigma_Y$ is the $\sigma$-algebra generated by the \emph{rectangles} $E \times F\in \Sigma_X\times\Sigma_Y$.

A \emph{measure} on a measurable space $(X, \Sigma)$ is a $\sigma$-additive function \mbox{$\mu \colon \Sigma \to [0,\infty]$}, i.e.~$\mu(\bigcup_{E \in \mathcal{F}} E) = \sum_{E \in \mathcal{F}} \mu(E)$ for all countable families $\mathcal{F}$ of pairwise disjoint measurable sets; it is a \emph{probability measure} if, in addition, $\mu(X) = 1$. We denote by $\Delta(X,\Sigma)$ the set of probability measures on $(X, \Sigma)$, and by $\Distr{X} = \Delta(X,2^X)$ the set of \emph{(discrete) probability distributions}. 

Given two measurable spaces $(X,\Sigma_X)$ and $(Y, \Sigma_Y)$, and a measurable function $f \colon X \to Y$, any measure $\mu$ on $(X,\Sigma_X)$ defines a measure $\mu \circ f^{-1}$ on $(Y,\Sigma_Y)$; this operation is called \emph{push forward}, denoted by $\mu\#f$. 

Given two measures $\mu$ and $\nu$, on $(X,\Sigma_X)$ and $(Y,\Sigma_Y)$, respectively, we define the \emph{product measure} $\mu \times \nu$ on $(X,\Sigma_X) \otimes (Y, \Sigma_Y)$, as the \emph{unique} measure such that $(\mu \times \nu)(E \times F) = \mu(E) \cdot \nu(E)$, for arbitrary $E \in \Sigma_X$ and $F \in \Sigma_Y$. A measure $\omega$ on $(X,\Sigma_X) \otimes (Y, \Sigma_Y)$ is a \emph{coupling} for $(\mu, \nu)$ if $\omega(E \times Y) = \mu(E)$ and $\omega(X \times F) = \nu(F)$, for arbitrary $E \in \Sigma_X$ and $F \in \Sigma_Y$; $\mu$ and $\nu$ are the \emph{left} and the \emph{right marginals} of $\omega$. We denote by $\coupling{\mu}{\nu}$ the set of couplings for $(\mu, \nu)$.

Throughout the paper $(\R, \Sigma_{\R})$, or simply $\R$, will denote the measurable space of positive real numbers with zero with Borel $\sigma$-algebra.

\subsubsection{Metric spaces.} 
Given a set $X$,  a function $d \colon X \times X \to \R$ is a \emph{pseudometric} on $X$ if $d(x,x) = 0$, $d(x,y) = d(y,x)$ and $d(x,y) + d(y,z) \geq d(x,z)$, for arbitrary $x,y,z \in X$; it is a \emph{metric} if, in addition, $d(x,y) = 0$ iff $x = y$. A pair $(X,d)$ where $d$ is a (pseudo)metric on $X$ is called a \emph{(pseudo)metric space}. 

Given a measurable space $(X,\Sigma)$, we consider two metrics on $\Delta(X,\Sigma)$: 
\begin{itemize}[label=\textbullet,itemsep=0.5ex,topsep=0pt]
\item the \emph{total variation distance}, defined for arbitrary $\mu, \nu \in \Delta(X,\Sigma)$ by
\begin{center}
\vspace{-1ex}
$\tv{\mu}{\nu} = \sup_{E \in \Sigma} |\mu(E)-\nu(E)|$,
\vspace{-1ex}
\end{center}
\item the \emph{Kantorovich (pseudo)metric}, defined for a (pseudo)metric $d$ on $X$ by
\begin{center}
\vspace{-1ex}
$\mathcal{K}_d(\mu, \nu) = \inf\set{ \int d\; \text{d}\omega}{\omega \in \coupling{\mu}{\nu}}$.
\vspace{-1ex}
\end{center}
\end{itemize}

\subsubsection{The space of timed paths.} 
A \emph{timed path} over a set $X$ is an alternating infinite sequence $\pi = x_0, t_0,x_1,t_1 \dots$ of elements $x_i \in X$ and \emph{time delays} \mbox{$t_{i} \in \R$}, for $i \in \N$. $\paths{X}$ denotes the set of timed paths over $X$. For arbitrary $i \in \N$, let $\stateat{\pi}{i} = x_i$, $\timeat{\pi}{i} = t_i$, $\prefix{\pi}{i} = x_0, t_0, \dots, t_{i-1}, x_i$ and $\tail{\pi}{i} = x_i, t_i, x_{i+1}, t_{i+1},\dots$.
For $X_i \subseteq X$, $R_i \subseteq \R$, $i=0..n$, let $\cyl{X_0, R_0, \dots, R_{n-1}, X_n}$ be the \emph{cylinder set} of the timed paths $\pi \in \paths{X}$ such that $\prefix{\pi}{n} \in X_0 \times R_0 \times \dots \times R_{n} \times X_n$.

For $(X,\Sigma)$ a measurable space, $\paths{X,\Sigma}$ is the measurable space of timed paths over $X$, with $\sigma$-algebra $\sigmapaths{X,\Sigma}$ generated by the \emph{measurable cylinders} $\cyl{S_0, R_0, \dots, R_{n-1}, S_n}$, where $S_i \in \Sigma$, $R_i \in \Sigma_{\R}$, $i=0..n$, and $n \in \N$. If $\Sigma = \sigma(\mathcal{F})$ and $\Sigma_{\R} = \sigma(\mathcal{I})$, then $\sigmapaths{X,\Sigma} = \sigma(\cyl{\mathcal{F}, \mathcal{I}})$, where $\cyl{\mathcal{F}, \mathcal{I}}$ is the family of cylinders $\cyl{F_0, I_0, \dots, F_{n-1}, I_n}$ where $F_i \in \mathcal{F}$, $I_i \in \mathcal{I}$, $i=0..n$, and $n \in \N$. Moreover, if both $\mathcal{F}$ and $\mathcal{I}$ are fields, so is $\cyl{\mathcal{F},\mathcal{I}}$.

For a function $f \colon X \to Y$, we define $f^\omega \colon \paths{X} \to \paths{Y}$ as the obvious stepwise extension of $f$ on timed paths. Note that if $f$ is measurable, so is $f^\omega$.

\section{Stochastic Markov Models} \label{sec:SMM}

In this section we introduce the class of \emph{Stochastic Markov Models} (SMMs) and define behavioral equivalences among them.
Let $AP$ be a countable set of \textit{atomic propositions}, that we fix for the rest of the paper.

\begin{definition}[Stochastic Markov Model] \label{def:SMM}
A \emph{stochastic Markov model} is a tuple $\M = (S, A, \tau, \rho, \ell)$ consisting of a finite nonempty set $S$ of \emph{states}, a set $A \subseteq S$ of \emph{absorbing states}, a \emph{transition probability function} $\tau \colon S \setminus A \to \Distr{S}$, an \emph{exit-time probability function} $\rho \colon S \setminus A \to \Delta(\R)$, and a \emph{labelling function} $\ell \colon S \to 2^{AP}$.
\end{definition}
The operational behavior of $\M = (S, A, \tau, \rho, \ell)$ can be described as follows: if the system is in state $s \in S$ and $s$ is absorbing, no transition can be made; otherwise, it moves to an arbitrary $s' \in S$ within time $t \in \R$ with probability $\rho(s)([0,t]) \cdot \tau(s)(s')$. An atomic proposition $p \in AP$ is said to hold in $s$ iff $p \in l(s)$.

SMMs subsume both MCs and  \emph{time-homogeneous} CTMCs. Indeed, MCs are the SMMs such that $A = \emptyset$ and, for all $s \in S$, $\rho(s)$ is the Dirac measure at $0$ (transitions happen instantaneously); CTMCs are the SMMs such that, for all $s \notin A$, $\rho(s) = \Exp\lambda$, where $\Exp\lambda$ denotes the negative exponential distribution with parameter $\lambda > 0$.

An SMM $\M = (S, A, \tau, \rho, \ell)$ induces an $S$-indexed family of probability measures on the measurable space $\paths{S,2^S}$ of timed paths over $S$ as follows.

\begin{definition} \label{def:SMMprob}
Let $\M = (S, A, \tau, \rho, \ell)$ be an SMM and $s \in S$\todo{restrict to non-absorbing}. The probability measure $\Pr{s}$ on the measurable space of timed paths over $S$ is the unique measure such that, for all $n \in \N$, $s_i \in S$ and $R_i \in \Sigma_{\R}$, $i=0..n+1$,
\begin{align*}
  \Pr{s}(\cyl{s_0}) &= \chi_{\{s\}}(s_0) \,,
  \\
  \Pr{s}(\cyl{s_0, R_0,\dots, R_n, s_{n+1}}) &= \Pr{s}(\cyl{s_0, R_0, \dots, R_{n-1}, s_n}) \cdot P(s_n, R_n, s_{n+1}) \,,
\end{align*}
where $\chi_{E}$ is the characteristic function of $E$ and, for $s,s' \in S$, $R \in \Sigma_{\R}$, $P(s, R, s') = \rho(s)(R) \cdot \tau(s)(s')$, if $s \notin A$, and $P(s, R, s') = 0$ otherwise. 
\end{definition}
The existence of this measure is guaranteed by the Hahn-Kolmogorov extension theorem; uniqueness is guaranteed since, for all $s \notin A$, $\tau(s)$ and $\rho(s)$ are $\sigma$-finite. Intuitively, $\Pr{s}$ describes the probability that a stochastic run of $\M$ starting from $s$ belongs to a measurable set of $\paths{S,2^S}$.

Next we introduce two important behavioral equivalences on SMMs: \emph{stochastic trace equivalence} and \emph{bisimilarity}. To do so, for an SMM $\M = (S, A, \tau, \rho, \ell)$, we first define the following equivalence relations on $S$:
\begin{itemize}
\item $s \eqlbl s'$ if and only if $\ell(s) = \ell(s')$;
\item $s \equiv_A s'$ if and only if, either $s,s' \in A$ or $s,s' \notin A$.
\item  $s \equiv s'$ if and only if $s \eqlbl s'$ and $s \equiv_A s'$.
\end{itemize}

For an SMM $\M = (S, A, \tau, \rho, \ell)$ we define $\mathcal{T}_\mathcal{M} = \cyl{S/_{\eqlbl}, \Sigma_{\R}}$ and its elements will be called \emph{trace cylinders}.
\begin{definition}[Stochastic Trace Equivalence]
Let $\M = (S, A, \tau, \rho, \ell)$ be an SMM. Two states $s, s' \in S$ are \emph{stochastic trace equivalent} with respect to $\M$, written $s \approx_\M s'$, if for all trace cylinders $T \in \mathcal{T}_\M$, $\Pr{s}(T) = \Pr{s'}(T)$.
\end{definition}
Clearly ${\approx_\M} \subseteq {\eqlbl}$, moreover since states are also tested with respect to their associated probability on timed paths, we also have ${\approx_\M} \subseteq {\equiv_A}$.

\begin{definition}[Bisimulation] \label{def:pbisim}
Let $\M = (S, A, \tau, \rho, \ell)$ be an SMM. An equivalence relation $R \subseteq S \times S$ is a \emph{bisimulation} on $\M$ if whenever $(s, s') \in R$, 
\begin{itemize}[topsep=0.5ex, noitemsep]
\item $s \equiv s'$; and 
\item if $s,s'\not\in A$, then $\rho(s) = \rho(s')$ and, for all $C \in S /_R$, $\tau(s)(C) = \tau(s')(C)$.
\end{itemize}

Two states $s, s' \in S$ are \emph{bisimilar} with respect to $\M$, written $s \sim_{\M} s'$, if they are related by some bisimulation on $\M$.
\end{definition}
These conditions require that any two bisimilar states are equally labelled and have identical probability w.r.t. any time delay of moving to any bisimilarity equivalence class. Note that, the above definition extends the notion of probabilistic bisimulation both on MCs~\cite{LarsenS91} and CTMCs in~\cite{BaierKHW05}.\todo{Say ${\approx_\M} \subsetneq {\sim_\M}$?}

\section{Pseudometrics for Linear Real-Time Specifications} \label{sec:pseudometrics}

We start by introducing the two kinds of linear real-time specifications that we consider throughout the paper.

\subsubsection{Metric Temporal Logic.}
Metric Temporal Logic ($\MTL$) \cite{AlurH93,AlurH94} has been introduced as a formalism for reasoning on sequences of events in a real-time setting. The grammar of logical formulas of $\MTL$ is as follows
\begin{equation*}
  \varphi ::= p \mid \false \mid \varphi \to \varphi \mid \Next[{[t,t']}]{\varphi} \mid \Until[{[t,t']}]{\varphi}{\varphi} \,,
\end{equation*}
where $p \in AP$ and $t, t' \in \Q$ with $t \leq t'$.

Following~\cite{OuaknineW07}, the semantics of $\MTL$ formulas $\varphi$ is given by means of a satisfiability relation $\M, \pi \models \varphi$, defined, for an SMM $\M = (S, A, \tau, \rho, \ell)$ and a timed path $\pi \in \paths{S}$, as follows.
\begin{align*}
  \M, \pi &\models p && \text{if } p \in \ell(\stateat{\pi}{0}) \,,
  \\
  \M, \pi &\models \false && \text{never} \,,
  \\
  \M, \pi &\models \varphi \to \psi && \text{if $\M, \pi \models \psi$ whenever $\M, \pi \models \varphi$} \,,
  \\
  \M, \pi &\models \Next[{[t,t']}]{\varphi} && \text{if   
  	$\timeat{\pi}{0} \in [t,t']$, and $ \M, \tail{\pi}{1} \models \varphi$} \,,
  \\
  \M, \pi &\models \Until[{[t,t']}]{\varphi}{\psi} && \text{if $\exists i > 0$ such that $\textstyle \sum_{k=0}^{i-1} \timeat{\pi}{i} \in [t,t']$, $\M, \tail{\pi}{i} \models \psi$, } \\
  	& && \text{\phantom{if}  
	and $\M, \tail{\pi}{j} \models \varphi$ whenever $0 \leq j < i$} 	\,.
\end{align*}
The above is usually referred to as the \emph{point-based} semantics. A key observation about this interpretation of formulas is that temporal connectives quantify over a countable set of positions in a timed path. In contrast, the \emph{interval-based} semantics, adopted e.g., in~\cite{ChenDKM11,SharmaK11}, associates a state to each point in real time, and the temporal connectives quantify over the whole timed domain.

For $\varphi \in \MTL$ we denote by $\denot{\varphi} = \set{ \pi \in \paths{S} }{ \M, \pi \models \varphi}$ the set of all timed paths satisfying $\varphi$ in $\M$, and define $\denot{\MTL} = \set{ \denot{\varphi} }{ \varphi \in \MTL }$.

The next lemma states that the sets of timed paths satisfying a given $\MTL$ formula are measurable in $\paths{S, 2^S}$. This justifies that $\MTL$ formulas can be used as linear real-time specifications for an SMMs.

\begin{lemma} \label{lem:MTLmeas}
Let $\M$ be an SMM, then $\denot{\MTL} \subseteq \Sigma_{\paths{S, 2^S}}$.
\end{lemma}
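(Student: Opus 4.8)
The plan is to proceed by structural induction on the $\MTL$ formula $\varphi$, exploiting that $\Sigma_{\paths{S, 2^S}}$ is closed under complement, countable union, and hence countable intersection. The crux is an auxiliary observation I would establish first: the coordinate maps $\pi \mapsto \stateat{\pi}{i}$ and $\pi \mapsto \timeat{\pi}{i}$, together with the one-step shift $\theta \colon \paths{S} \to \paths{S}$ given by $\theta(\pi) = \tail{\pi}{1}$, are measurable. Each follows by checking preimages of generating cylinders; for the shift, $\theta^{-1}(\cyl{S_0, R_0, \dots, R_{n-1}, S_n}) = \cyl{S, \R, S_0, R_0, \dots, R_{n-1}, S_n}$, which is again a measurable cylinder. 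Consequently every iterate $\theta^{j}$ is measurable, and each finite sum $\pi \mapsto \sum_{k=0}^{i-1} \timeat{\pi}{k}$ is a measurable map into $\R$.

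For the base cases, $\denot{\false} = \emptyset \in \Sigma_{\paths{S, 2^S}}$, and for $p \in AP$ we have $\denot{p} = \cyl{S_p}$ with $S_p = \set{s \in S}{p \in \ell(s)}$; since the state space carries the discrete $\sigma$-algebra $2^S$, the set $S_p$ is measurable and hence so is the cylinder. For the propositional case the semantics yields $\denot{\varphi \to \psi} = (\paths{S} \setminus \denot{\varphi}) \cup \denot{\psi}$, which is measurable by the induction hypothesis together with closure under complement and union.

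The two modal cases combine the shift with the cumulative-time maps. For the next operator, $\denot{\Next[{[t,t']}]{\varphi}} = \set{\pi}{\timeat{\pi}{0} \in [t,t']} \cap \theta^{-1}(\denot{\varphi})$, where the first set is the measurable cylinder $\cyl{S, [t,t'], S}$ and the second is measurable since $\theta$ is measurable and $\denot{\varphi}$ is measurable by induction. For the until operator, unfolding the existential quantifier over positions gives a countable union
\[
\denot{\Until[{[t,t']}]{\varphi}{\psi}} = \bigcup_{i \geq 1} \Bigl( T_i \cap \theta^{-i}(\denot{\psi}) \cap \bigcap_{j=0}^{i-1} \theta^{-j}(\denot{\varphi}) \Bigr),
\]
where $T_i = \set{\pi}{\sum_{k=0}^{i-1} \timeat{\pi}{k} \in [t,t']}$ is the preimage of the Borel set $[t,t']$ under the measurable cumulative-time map, hence measurable. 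Each term of the union is a finite intersection of measurable sets, so the whole union lies in $\Sigma_{\paths{S, 2^S}}$.

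I expect the main obstacle to be the until case, where one must simultaneously render the unbounded existential over positions as a genuine countable union and encode the cumulative timing constraint $\sum_{k=0}^{i-1}\timeat{\pi}{k} \in [t,t']$ measurably. Once the preliminary measurability of the shift $\theta$ and of finite sums of time coordinates is in place, the remaining bookkeeping is routine.
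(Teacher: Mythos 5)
Your proof is correct, but it follows a genuinely different route from the paper's. The paper disposes of Lemma~\ref{lem:MTLmeas} in one line, as a corollary of Lemma~\ref{lem:sigmaEquiv}: the real work sits inside the proof of that lemma, which is a structural induction establishing the \emph{stronger} claim that every $\denot{\varphi}$ lies in the coarser $\sigma$-algebra $\sigma(\mathcal{T}_\M)$ generated by trace cylinders. Your induction has the same skeleton (same base and propositional cases, next via the shift, until as a countable union over positions composed with iterated shifts), but it targets the full path $\sigma$-algebra $\Sigma_{\paths{S, 2^S}}$ directly, and this makes the until case much lighter: you dispatch the cumulative timing constraint $\sum_{k=0}^{i-1} \timeat{\pi}{k} \in [t,t']$ as the preimage of a Borel set under a finite sum of measurable time-coordinate maps. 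The paper, by contrast, needing everything expressed inside $\sigma(\mathcal{T}_\M)$, represents this same set explicitly as a countable union of trace cylinders with rational interval endpoints (the $\mathit{OnTime}@k$ construction), which costs a somewhat delicate density/approximation argument. What your approach buys is brevity and entirely standard measure theory, and it fully suffices for the statement at hand; what the paper's approach buys is the containment $\denot{\MTL} \subseteq \sigma(\mathcal{T}_\M)$, which is precisely what Lemma~\ref{lem:sigmaEquiv} and its downstream consequences (density of $\sDTAR$ specifications, coincidence of the $\MTL$ and $\DTA$ variation distances) require, and which your weaker conclusion does not by itself deliver. As a side remark, your sum-of-coordinates trick would also work relative to $\sigma(\mathcal{T}_\M)$, since the preimage of a Borel set under a time-coordinate map is a finite union of trace cylinders ($S$ being a finite union of $\eqlbl$-classes), so your technique could in fact streamline the paper's own proof of Lemma~\ref{lem:sigmaEquiv}.
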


\subsubsection{Deterministic Timed Automata.} 
Timed Automata (TAs)~\cite{AlurD94} have been introduced to model the behavior of real-time systems over time. Here we consider their deterministic variant without location invariants.
 
Let $\clk$ be a finite set of $\R$-valued variables, called \emph{clocks}, and let $\clkV$ be the set of all valuations $v \colon \clk \to \R$ for the clocks in $\clk$. 
For $v \in \clkV$, $t \in \R$, and $X \subseteq \clk$, we denote by $\mathbf{0}$, the constant zero valuation, by $v + t$, the $t$-delay of $v$, and by $v[X := t]$, the update of $X$ in $v$, all defined in the obvious way. 

A \emph{clock guard} $g \in \clkC$ over $\clk$ is a finite set of expressions of the form $x \bowtie q$, for $x \in \clk$, $q \in \Q$, and ${\bowtie} \in \set{{<}, {\leq}, {>}, {\geq}}{}$. We say that a valuation $v \in \clkV$ \emph{satisfies} a clock guard $g \in \clkC$, written $v \models g$, if $v(x) \bowtie n$ holds, for all $x \bowtie q \in g$; two clock guards $g, g' \in \clkC$ are \emph{orthogonal} (or \emph{non-overlapping}), written $g \orth g'$, if there is no $v \in \clkV$ such that $v \models g$ and $v \models g'$.

\begin{definition}[Deterministic Timed Automata] \label{def:dta}
A \emph{deterministic timed automaton} over a set of clocks $\clk$ is a tuple $\A = (Q, L, q_0, F, \to)$ consisting of a finite set $Q$ of \emph{locations}, a set $L$ of \emph{symbols}, an \emph{initial location} $q_0 \in Q$, a set $F \subseteq Q$ of \emph{final locations}, and a \emph{transition relation} ${\to} \subseteq Q \times L \times \clkC \times 2^\clk \times Q$ such that, whenever $(q,a,g,X,q'), (q,a,g',X',q'') \in {\to}$ and $g \neq g'$, then $g \orth g'$.
\end{definition} 
An \emph{run} of a DTA $\A = (Q, L, q_0, F, \to)$ over a timed path $\pi = a_0,t_0,a_1,t_1,\ldots$ over $L$, is and infinite sequence of the form
\begin{equation*}
  r = (q_0, v_0) \trs{a_0, t_0} (q_1, v_1) \trs{a_1, t_1} (q_2, v_2) \trs{a_2, t_2} \cdots
\end{equation*}
with $q_i \in Q$ and $v_i \in \clkV$, for all $i \geq 0$, satisfying the following requirements: 
(\emph{initialization}) $v_0 = \mathbf{0}$; (\emph{consecution}) for all $i \geq 0$, $v_{i+1} = (v_i + t_i)[X_i := 0]$, for some $(q_i, a_i, g_i, X_i, q_{i+1}) \in {\to}$ such that $v_i + t_i \models g_i$. 

A run as above is \emph{accepting} if $q_i \in F$, for some $i \geq 0$, and we say that $\pi$ is \emph{accepted} by $\A$. $\L{\A}$ denotes the collection of all timed paths accepted by $\A$.

Observe that, due to the condition imposed on the transition relation, a deterministic timed automaton has at most one accepting run over a given timed path in $\paths{L}$. Moreover, differently from TAs, which are only closed under finite union and intersection, DTAs are also closed under complement~\cite{AlurD94}.

Following~\cite{KatoenLMCS11}, a DTA accepting input symbols in $2^{AP}$ can be thought of as a linear real-time specification for SMMs. Formally, let $\DTA(L)$ denote the collection of DTAs accepting symbols in $L$, then, for $\A \in \DTA(2^{AP})$ and an SMM $\M = (S, A, \tau, \rho, \ell)$, we define $\denot{\A} = \set{\pi \in \paths{S}}{ \ell^\omega(\pi) \in \L{\A}}$ as the set of all timed paths in $\M$ accepted by $\A$, and $\denot{\DTA} = \set{\denot{\A}}{ \A \in \DTA(2^{AP})}$.

The next lemma justifies the use of DTAs as specifications for SMMs.
\begin{lemma} \label{lem:DTAmeas}
Let $\M$ be an SMM, then $\denot{\DTA} \subseteq \Sigma_{\paths{S, 2^S}}$.
\end{lemma}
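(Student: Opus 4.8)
The plan is to reduce the statement to the measurability of the accepted language $\L{\A} \subseteq \paths{2^{AP}}$ and then pull it back. First I would observe that, by definition, $\denot{\A} = (\ell^\omega)^{-1}(\L{\A})$, and since $\ell \colon (S,2^S) \to (2^{AP},2^{2^{AP}})$ is trivially measurable (the domain carries the discrete $\sigma$-algebra), its stepwise extension $\ell^\omega$ is measurable by the remark closing Section~\ref{sec:prelim}. Hence it suffices to show $\L{\A} \in \Sigma_{\paths{2^{AP},2^{2^{AP}}}}$. To this end I would stratify the language by the first position at which a final location is visited: writing $q_n$ for the location reached after $n$ steps of the (unique, by determinism) run, one has $\L{\A} = \bigcup_{n \geq 0} \set{ \sigma \in \paths{2^{AP}} }{ q_n \in F }$, a countable union, so it is enough to prove each set $L_n := \set{\sigma}{q_n \in F}$ measurable.

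For fixed $n$ I would decompose $L_n$ along the finitely many transition sequences it can follow. Since $Q$ and $\to$ are finite, there are finitely many sequences $\rho = (q_0,a_0,g_0,X_0,q_1)\cdots(q_{n-1},a_{n-1},g_{n-1},X_{n-1},q_n)$ of composable transitions starting in $q_0$ with $q_n \in F$; by determinism each path follows at most one of them, so $L_n = \bigcup_\rho C_\rho$, where $C_\rho$ collects the paths whose run is $\rho$. A path $\sigma = a'_0,t_0,a'_1,t_1,\dots$ lies in $C_\rho$ exactly when (i) $a'_i = a_i$ for $i < n$, and (ii) the valuation constraints $v_i + t_i \models g_i$ hold for every $i < n$. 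Along the fixed $\rho$ the resets $X_0,\dots,X_{n-1}$ are determined, so each clock value $v_i(x)$ is the partial sum of the delays elapsed since $x$ was last reset; consequently every atomic constraint $x \bowtie q$ of $g_i$ becomes an inequality $(\text{partial sum of } t_0,\dots,t_i) \bowtie q$ with $q \in \Q$, i.e.\ a rational half-space in $\R^{n}$. The conjunction of all these constraints therefore cuts out a Borel (indeed rational-polyhedral) set $B_\rho \subseteq \R^n$ of admissible delay vectors.

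The crux, and the step I expect to require the most care, is that a clock accumulates several consecutive delays before being reset, so constraint (ii) is a genuinely joint condition on $(t_0,\dots,t_{n-1})$ and $C_\rho$ need not be a single measurable cylinder $\cyl{\cdots}$, whose time component is by definition a product $R_0 \times \cdots \times R_{n-1}$. To overcome this I would use that the finite projection $\mathrm{pr}_n \colon \sigma \mapsto (\stateat{\sigma}{0}, \timeat{\sigma}{0}, \dots, \timeat{\sigma}{n-1}, \stateat{\sigma}{n})$ into the product space $(2^{AP})^{n+1} \times \R^n$ is measurable: the preimages of rectangles are precisely the measurable cylinders generating $\Sigma_{\paths{2^{AP},2^{2^{AP}}}}$, hence $\mathrm{pr}_n^{-1}$ carries the whole product $\sigma$-algebra into $\Sigma_{\paths{2^{AP},2^{2^{AP}}}}$. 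Now $C_\rho = \mathrm{pr}_n^{-1}\big( \{a_0\} \times \cdots \times \{a_{n-1}\} \times 2^{AP} \times B_\rho \big)$, with the last symbol unconstrained and the delay block constrained by $B_\rho$, and the argument set belongs to the product $\sigma$-algebra because singletons are measurable in the discrete $\sigma$-algebra on $2^{AP}$ and $B_\rho$ is Borel. Thus each $C_\rho$ is measurable, $L_n$ is a finite union of such sets, $\L{\A} = \bigcup_n L_n$ is a countable union, and finally $\denot{\A} = (\ell^\omega)^{-1}(\L{\A})$ is measurable, as required. If one prefers to avoid the discrete $\sigma$-algebra on $2^{AP}$, the same computation can be carried out directly in $\paths{S,2^S}$, replacing the symbol constraints $a'_i = a_i$ by $\stateat{\sigma}{i} \in \ell^{-1}(\{a_i\}) \in 2^S$.
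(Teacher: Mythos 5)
Your proposal is sound in its essentials but takes a genuinely different route from the paper. The paper disposes of Lemma~\ref{lem:DTAmeas} in one line: it is immediate from Lemma~\ref{lem:sigmaEquiv} (which asserts $\sigma(\denot{\MTL}) = \sigma(\denot{\DTA}) = \sigma(\mathcal{T}_\M)$) together with $\sigma(\mathcal{T}_\M) \subseteq \Sigma_{\paths{S,2^S}}$; and inside the proof of Lemma~\ref{lem:sigmaEquiv} the inclusion you are actually proving, $\denot{\DTA} \subseteq \sigma(\mathcal{T}_\M)$, is not argued at all but delegated to the citation \cite[Theorem~3.2]{KatoenLMCS11}. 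You instead reconstruct that external result from first principles: reduce to measurability of $\L{\A}$ via the measurable map $\ell^\omega$, stratify by the step at which a final location is reached, decompose over the finitely many transition sequences, and---this is the right key move---handle the joint, non-product constraints on the delays by pulling back Borel subsets of $\R^n$ along the finite-dimensional projection $\mathrm{pr}_n$, whose measurability follows because preimages of rectangles are exactly measurable cylinders. What each route buys: the paper's proof is essentially free and records the stronger fact that $\denot{\DTA}$ sits inside the coarser trace $\sigma$-algebra $\sigma(\mathcal{T}_\M)$, which is what Lemma~\ref{lem:sigmaEquiv} and Theorem~\ref{thm:equiDist} really need; your proof is self-contained and makes the measure-theoretic content explicit. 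In fact your argument also yields the stronger statement at no extra cost: preimages under $\ell^\omega$ of measurable cylinders over $2^{AP}$ are finite unions of trace cylinders (since $\ell^{-1}$ of a set of labels is a finite union of $\eqlbl$-classes), so every set you construct already lands in $\sigma(\mathcal{T}_\M)$.

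One corner case needs a patch. The paper defines a run to be an \emph{infinite} sequence, and its DTAs are deterministic but not required to be \emph{complete}: a configuration may have no enabled transition. Consequently, a timed path whose run prefix reaches a final location at step $n$ but gets stuck at some later step admits no run at all and hence is \emph{not} in $\L{\A}$ under the paper's definition, yet it does belong to your set $\bigcup_\rho C_\rho$, which constrains only the first $n$ steps. So your identity $\L{\A} = \bigcup_n L_n$ with $L_n = \bigcup_\rho C_\rho$ is, literally, an over-approximation unless the DTA is complete or acceptance is read prefix-wise (as in \cite{KatoenLMCS11}). The repair uses exactly your technique and nothing more: for each $m$ and each length-$m$ composable transition sequence $\sigma$ from $q_0$, define $C'_\sigma$ as you defined $C_\rho$ but without any final-location requirement; each $C'_\sigma$ is measurable by the same projection argument, hence so is $E_\infty := \bigcap_{m \geq 0} \bigcup_{|\sigma| = m} C'_\sigma$. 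By determinism (finite branching would also suffice, via K\"onig's lemma) $E_\infty$ is precisely the set of paths admitting an infinite run, and then $\L{\A} = E_\infty \cap \bigcup_n L_n$ is measurable, completing the proof under the paper's literal definition of acceptance.
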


In the rest of the paper $\DTA$s will be used only as specification on SMMs, so that $\DTA(2^{AP})$ will be simply denoted by $\DTA$.

\subsection{Topological properties of MTL and DTA specifications}

In this section we analyze the measure-theoretical and topological properties of $\MTL$ and $\DTA$ specifications.

The next lemma states that $\MTL$ and $\DTA$ specifications generate the same $\sigma$-algebra. Intuitively, two type of linear real-time specifications we consider, after being completed under complement and countable union, have the same expressivity. Moreover,  the same $\sigma$-algebra can be generated by trace cylinders.
\begin{lemma} \label{lem:sigmaEquiv}
Let $\M$ be an SMM. Then $\sigma(\denot{\MTL}) = \sigma(\denot{\DTA}) = \sigma(\mathcal{T}_\M)$.
\end{lemma}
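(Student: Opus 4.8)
The plan is to route both equalities through the trace cylinders, proving $\mathcal{T}_\M \subseteq \sigma(\denot{\DTA})$, $\mathcal{T}_\M \subseteq \sigma(\denot{\MTL})$, $\denot{\DTA} \subseteq \sigma(\mathcal{T}_\M)$ and $\denot{\MTL} \subseteq \sigma(\mathcal{T}_\M)$; by monotonicity of $\sigma(\cdot)$ these four inclusions yield $\sigma(\denot{\DTA}) = \sigma(\mathcal{T}_\M) = \sigma(\denot{\MTL})$. Two standing observations keep the constructions finite: since $S$ is finite, only finitely many labels occur in $\M$, so each $\eqlbl$-class $C$ is singled out among the states of $\M$ by a finite Boolean combination $\varphi_C$ of atomic propositions, and the automata need only read the finite alphabet $\ell(S)$.

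For $\mathcal{T}_\M \subseteq \sigma(\denot{\DTA})$ and $\mathcal{T}_\M \subseteq \sigma(\denot{\MTL})$ I first realise each trace cylinder whose time-components are intervals with rational endpoints as a single specification. On the automata side this is a linear chain of locations $q_0, q_1, \dots, q_n, q_{\mathrm{acc}}$ over $\ell(S)$ carrying a single clock $x$ that is reset at every step, where the transition leaving $q_i$ fires on the symbol labelling $C_i$ and guards $x$ by the rational interval $R_i$ (checked on $0 + \timeat{\pi}{i}$, hence exactly $\timeat{\pi}{i} \in R_i$), with $q_{\mathrm{acc}} \in F$ self-looping on every symbol; orthogonality, hence determinism, is restored by sending the complementary rational-interval guards to a non-accepting sink. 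On the logical side the same cylinder, for closed rational intervals, is denoted by the chain formula $\varphi_{C_0} \wedge \Next[R_0]{\varphi_{C_1} \wedge \Next[R_1]{\cdots \wedge \varphi_{C_n}}}$ (with derived $\wedge$), each $\Next$ constraining one delay and advancing to the next position. To pass from rational intervals to arbitrary Borel time-components I fix all other coordinates to $\R$ (recovered on the logical side as $\bigcup_k$ of the intervals $[0,k]$) and note that $\set{R \in \Sigma_{\R}}{\cyl{C_0,\R,\dots,R,\dots,\R,C_n} \in \sigma(\denot{\DTA})}$ is a $\sigma$-algebra containing the intervals, hence all of $\Sigma_{\R}$; finally $\cyl{C_0,R_0,\dots,R_{n-1},C_n} = \bigcap_{i} \cyl{C_0,\R,\dots,R_i,\dots,\R,C_n}$ is a finite intersection of such one-coordinate cylinders, and likewise for $\MTL$.

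For the converse inclusions I show that both denotations live in the label-and-delay sub-$\sigma$-algebra $\sigma(\mathcal{T}_\M)$. For $\MTL$ I argue by structural induction on $\varphi$: $\denot{p} = \cyl{\set{s}{p \in \ell(s)}}$ is a finite union of trace cylinders, $\false$ and $\varphi \to \psi$ use closure under $\varnothing$, complement and union, and the modal cases follow once one checks that the shift $\pi \mapsto \tail{\pi}{1}$ and the maps $\pi \mapsto \timeat{\pi}{k}$ and $\pi \mapsto \sum_{k<i} \timeat{\pi}{k}$ are $\sigma(\mathcal{T}_\M)$-measurable (preimages of trace cylinders and of Borel sets are finite unions of trace cylinders), so that $\denot{\Next[R]{\varphi}}$ and $\denot{\Until[R]{\varphi}{\psi}}$ are countable Boolean combinations of preimages of $\denot{\varphi}, \denot{\psi}$ under these maps. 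For $\DTA$, refining Lemma~\ref{lem:DTAmeas}, I write $\denot{\A}$ as the countable union, over the finite transition sequences that first reach $F$, of finite intersections of label constraints $\set{\pi}{\ell(\stateat{\pi}{i}) = a_i}$ and of guard constraints on sums of consecutive delays; both kinds of set lie in $\sigma(\mathcal{T}_\M)$ by the measurability just recorded, whence $\denot{\A} \in \sigma(\mathcal{T}_\M)$.

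The four inclusions together give the claim. The step I expect to be the real obstacle is reconciling the three different time vocabularies: trace cylinders constrain individual delays, whereas DTA guards and the timed until constrain accumulated delays, and, being finite objects, automata and formulas can only name rational, interval-shaped sets of times. Both mismatches are precisely what pushes the whole argument to the level of the generated $\sigma$-algebras — the one-coordinate $\sigma$-algebra arguments turn finitely describable interval constraints into arbitrary Borel ones, while the measurability of the prefix-sum maps $\pi \mapsto \sum_{k<i}\timeat{\pi}{k}$ is what lets accumulated-delay constraints be read back as events over individual delays.
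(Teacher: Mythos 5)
Your proposal is correct, and its skeleton (four inclusions routed through the trace cylinders $\mathcal{T}_\M$, structural induction for $\MTL$, chain automata/formulas for the cylinders) matches the paper's. What differs is the technical core, and your choices are worth comparing. First, the hardest step in the paper's proof is the \textbf{Until} case: the paper shows that the accumulated-delay event $\{\pi \mid \sum_{k<i}\timeat{\pi}{k}\in[a,b]\}$ lies in $\sigma(\mathcal{T}_\M)$ by an explicit, page-long construction (the set $\mathit{OnTime}@i$, written as a countable union of rational-interval trace cylinders via a careful $\Delta/2k$ rational approximation of each delay). You instead observe that each coordinate map $\pi\mapsto\timeat{\pi}{k}$ is $\sigma(\mathcal{T}_\M)$-measurable and invoke closure of real-valued measurable functions under finite sums; this is shorter, standard, and equally valid, and it is also what powers your second difference. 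Second, for $\denot{\DTA}\subseteq\sigma(\mathcal{T}_\M)$ the paper simply cites \cite[Theorem~3.2]{KatoenLMCS11}, whereas you give a direct decomposition of $\denot{\A}$ into a countable union, over finite transition sequences reaching $F$, of label constraints and guard constraints on sums of consecutive delays (the latter measurable by your prefix-sum observation). One caveat here: under the paper's definition a timed path is accepted only if an \emph{infinite} run exists and hits $F$, so a prefix-run reaching $F$ that cannot be prolonged does not witness acceptance; your union can therefore overcount. This is repairable without changing the conclusion --- intersect each term with the event that the run extends to every finite length (a countable intersection of finite unions of sets of the same kind; K\"onig's lemma on the finitely branching tree of consistent extensions then gives an infinite run) --- so it is an imprecision, not a gap. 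Third, for $\mathcal{T}_\M\subseteq\sigma(\denot{\MTL})$ the paper intersects, over all $a\in AP$, the denotations of formulas $Ap(a,C)$ (a countable intersection), while you exploit finiteness of $S$ to build a single finite separating Boolean combination $\varphi_C$ per $\eqlbl$-class; both work, yours yielding one formula per rational-interval cylinder. Finally, your coordinate-wise ``good sets'' argument for passing from rational-interval cylinders to arbitrary Borel time components is exactly the content the paper compresses into the assertion that $\mathcal{T}_\M\subseteq\sigma(\cyl{S/_{\eqlbl},\mathcal{I}})$ is ``easily verified,'' so you have made explicit a step the paper leaves to the reader.
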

From now on the $\sigma$-algebras in Lemma~\ref{lem:sigmaEquiv} will be simply referred as $\Sigma_\M$.

Now we consider the topological properties of $\Sigma_\M$. Let $(X, \Sigma)$ be a measurable space, then any measure $\mu$ over it induces a pseudometric $d_\mu \colon \Sigma \times \Sigma \to \R$ on $\Sigma$, a.k.a.\ the \emph{Fr\'echet-Nikodym pseudometric w.r.t. $\mu$}, defined, for $E, F \in \Sigma$, by $d_\mu(E,F) = \mu(E \symdiff F)$, where $\symdiff$ is the symmetric difference between sets\footnote{Triangular inequality follows by monotonicity and sub-additivity of $\mu$ noticing that, $A \symdiff C \subseteq (A \symdiff B) \cup (B \symdiff C)$}.

\begin{lemma} \label{lem:densefield}
Let $(X, \Sigma)$ be a measurable space and $\mu$ be a finite measure on it. If $\Sigma$ is generated by a field $\mathcal{F}$, then $\mathcal{F}$ is dense in the pseudometric space $(\Sigma, d_\mu)$.
\end{lemma}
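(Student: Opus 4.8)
The plan is to apply the \emph{good sets principle}. I define the collection of sets that are approximable by $\mathcal{F}$,
\[
  \mathcal{G} = \set{E \in \Sigma}{\forall \epsilon > 0,\ \exists F \in \mathcal{F},\ \mu(E \symdiff F) < \epsilon}\,,
\]
which is precisely the $d_\mu$-closure of $\mathcal{F}$ inside $\Sigma$. Since $\mathcal{F} \subseteq \mathcal{G} \subseteq \Sigma = \sigma(\mathcal{F})$, it suffices to prove that $\mathcal{G}$ is itself a $\sigma$-algebra: then $\sigma(\mathcal{F}) \subseteq \mathcal{G}$ forces $\mathcal{G} = \Sigma$, which is exactly the assertion that $\mathcal{F}$ is dense in $(\Sigma, d_\mu)$. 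Note that $\mathcal{F} \subseteq \mathcal{G}$ is immediate, taking $F = E$.

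The algebraic closure properties of $\mathcal{G}$ follow from elementary set identities for $\symdiff$ combined with the monotonicity and subadditivity of $\mu$. For complement I would use $E^c \symdiff F^c = E \symdiff F$, so that $d_\mu(E^c, F^c) = d_\mu(E, F)$; since $\mathcal{F}$ is closed under complement, $E \in \mathcal{G}$ yields $E^c \in \mathcal{G}$. For finite union I would use the inclusion $(E_1 \cup E_2) \symdiff (F_1 \cup F_2) \subseteq (E_1 \symdiff F_1) \cup (E_2 \symdiff F_2)$: approximating each $E_i$ within $\epsilon/2$ by some $F_i \in \mathcal{F}$ and taking $F_1 \cup F_2 \in \mathcal{F}$ gives $\mu((E_1 \cup E_2) \symdiff (F_1 \cup F_2)) < \epsilon$. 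These two steps show that $\mathcal{G}$ is a field containing $\mathcal{F}$.

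The only delicate step, and the one I expect to be the main obstacle, is closure under countable union, because this is where finiteness of $\mu$ is genuinely needed. Given $E_n \in \mathcal{G}$ with $E = \bigcup_n E_n$, I would first truncate: since $\mu$ is finite, continuity from below gives $\mu(\bigcup_{n \le N} E_n) \to \mu(E)$, so some partial union $U_N = \bigcup_{n \le N} E_n$ satisfies $\mu(E \symdiff U_N) = \mu(E \setminus U_N) < \epsilon/2$, the symmetric difference reducing to a set difference since $U_N \subseteq E$. As $\mathcal{G}$ is already a field, $U_N \in \mathcal{G}$, so there is $F \in \mathcal{F}$ with $\mu(U_N \symdiff F) < \epsilon/2$; the triangle inequality for $d_\mu$ then yields $\mu(E \symdiff F) < \epsilon$, whence $E \in \mathcal{G}$. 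This establishes that $\mathcal{G}$ is a $\sigma$-algebra, completing the argument. The finiteness hypothesis must be invoked exactly at the truncation, since without it the tail mass $\mu(E \setminus U_N)$ need not vanish; everything else is routine bookkeeping with the symmetric-difference inclusions and subadditivity.
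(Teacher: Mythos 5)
Your proof is correct and follows essentially the same route as the paper's: both define the $d_\mu$-closure of $\mathcal{F}$ inside $\Sigma$, show it is a $\sigma$-algebra containing $\mathcal{F}$ (hence all of $\sigma(\mathcal{F})$), handle complements via the identity $E^c \symdiff F^c = E \symdiff F$, and handle countable unions by a truncation step (where finiteness of $\mu$ is used) followed by a finite approximation combined through the triangle inequality. The only difference is organizational: you first establish closure under finite unions and then reuse that field property when approximating the truncated union, whereas the paper inlines the finite approximation (choosing each $F_i$ within $\varepsilon/2n$ of $E_i$) directly into the countable-union argument.
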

Note that Lemma~\ref{lem:densefield} is generic both in the field and in the measure that are given. In particular, since $\DTA$s are closed under all Boolean operations~\cite{AlurD94}, $\denot{\DTA}$ forms a field of sets; the same holds for $\denot{\MTL}$. Hence, we have the following.
\begin{corollary} \label{cor:denseMTLDTA}
Let $M = (S,A,\tau,\rho,\ell)$ be an SMM, and $s \in S$. Then, $\denot{\MTL}$ and $\denot{\DTA}$ are dense in $(\Sigma_\M, d_{\Pr{s}})$.
\end{corollary}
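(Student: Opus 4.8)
The plan is to obtain the corollary as a direct instance of Lemma~\ref{lem:densefield}, taking the finite measure to be $\Pr{s}$ and the generating field to be $\denot{\MTL}$ (respectively $\denot{\DTA}$). Three facts must be lined up. First, $\Pr{s}$ is a probability measure on $\paths{S, 2^S}$, so $\Pr{s}(\paths{S}) = 1 < \infty$ and $\Pr{s}$ is finite, as the hypothesis of Lemma~\ref{lem:densefield} requires. Second, by Lemma~\ref{lem:sigmaEquiv}, $\sigma(\denot{\MTL}) = \sigma(\denot{\DTA}) = \Sigma_\M$, so each of the two families generates precisely the $\sigma$-algebra on which $d_{\Pr{s}}$ is defined.

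The remaining ingredient, which I would check explicitly, is that $\denot{\MTL}$ and $\denot{\DTA}$ are fields of subsets of $\paths{S}$. For $\DTA$ this follows from the closure of deterministic timed automata under all Boolean operations~\cite{AlurD94}: the map $\ell^\omega \colon \paths{S} \to \paths{2^{AP}}$ is fixed, and since $\denot{\A} = \set{\pi \in \paths{S}}{\ell^\omega(\pi) \in \L{\A}}$ is a preimage under $\ell^\omega$, taking preimages commutes with complement and finite union, so Boolean combinations of the languages $\L{\A}$ transfer to Boolean combinations of the sets $\denot{\A}$. For $\MTL$ the syntax already contains $\false$ and implication, so $\denot{\false} = \emptyset$ lies in the family, negation is expressed by $\varphi \to \false$, and disjunction by $(\varphi \to \false) \to \psi$; hence $\denot{\MTL}$ contains the empty set, contains $\paths{S}$, and is closed under complement and finite union, i.e.\ it is a field.

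With these three facts in place, Lemma~\ref{lem:densefield} applied to the measurable space $(\paths{S}, \Sigma_\M)$ with $\mu = \Pr{s}$ and $\mathcal{F} = \denot{\MTL}$ (resp.\ $\denot{\DTA}$) immediately yields density of $\denot{\MTL}$ (resp.\ $\denot{\DTA}$) in $(\Sigma_\M, d_{\Pr{s}})$, which is the claim. Since each step merely invokes an already-established result, there is no genuine obstacle; the only point deserving care is the field verification above, and in particular confirming for $\MTL$ that closure under Boolean path-set operations really follows from having only $\false$ and $\to$ as primitive connectives, and for $\DTA$ that Boolean closure of the automata transfers through the preimage under the fixed measurable map $\ell^\omega$.
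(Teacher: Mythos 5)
Your proposal is correct and takes essentially the same route as the paper: the paper obtains the corollary by applying Lemma~\ref{lem:densefield} with the finite measure $\Pr{s}$ and the generating fields $\denot{\MTL}$ and $\denot{\DTA}$ (generation via Lemma~\ref{lem:sigmaEquiv}), noting that Boolean closure of DTAs, and of MTL through $\false$ and $\to$, makes both families fields. Your explicit checks --- that negation and disjunction are definable from $\false$ and $\to$, and that Boolean closure of the languages $\L{\A}$ transfers to the sets $\denot{\A}$ through the preimage under $\ell^\omega$ --- merely spell out what the paper states in passing, so there is no substantive difference.
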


\subsubsection{Single-clock Resetting DTAs.} 
The problem of model checking CTMCs a\-gainst TA specifications is known to been computationally very hard\todo{undecidable?}, even restricting to the subclass of DTAs. Recently, Chen et al.~\cite{KatoenLMCS11} provided an algorithm that is efficient for \emph{single-clock} DTAs. In this view, we show that the subclass of \emph{resetting single-clock} DTAs ($\sDTAR$s) (i.e., DTAs with a single clock that is reset whenever a transition to the next location occurs) can be used to approximate with arbitrary precision any DTA or MTL specification. 
Indeed, $\sDTAR$s are closed under Boolean operations\footnote{Closure under union follows by the standard product construction, noticing that duplications of clocks are needed only to ensure the right reseting of the clocks.}, thus to use Lemma~\ref{lem:densefield} it only remains to show the following.

\begin{lemma}  \label{lem:sDTARfield}
$\denot{\sDTAR}$ is a generator for $\Sigma_\M$.
\end{lemma}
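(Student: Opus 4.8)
The plan is to establish the two inclusions $\sigma(\denot{\sDTAR}) \subseteq \Sigma_\M$ and $\Sigma_\M \subseteq \sigma(\denot{\sDTAR})$. The first is immediate: every $\sDTAR$ is in particular a DTA, so $\denot{\sDTAR} \subseteq \denot{\DTA}$, and by Lemma~\ref{lem:sigmaEquiv} $\denot{\DTA} \subseteq \sigma(\denot{\DTA}) = \Sigma_\M$; hence $\sigma(\denot{\sDTAR}) \subseteq \Sigma_\M$. All the work lies in the converse inclusion, which I would obtain by reducing $\Sigma_\M$ to a convenient generating family and realizing each generator by an explicit resetting single-clock DTA.

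For the reduction, recall from Lemma~\ref{lem:sigmaEquiv} that $\Sigma_\M = \sigma(\mathcal{T}_\M)$ with $\mathcal{T}_\M = \cyl{S/_{\eqlbl}, \Sigma_{\R}}$. Let $\mathcal{I}$ be the family of intervals with rational endpoints, which generates $\Sigma_{\R}$. Treating the finite partition $S/_{\eqlbl}$ as a generator of the sub-$\sigma$-algebra $\sigma(S/_{\eqlbl})$, the observation in Section~\ref{sec:prelim} that $\sigmapaths{X,\Sigma} = \sigma(\cyl{\mathcal{F},\mathcal{I}})$ whenever $\Sigma = \sigma(\mathcal{F})$ and $\Sigma_{\R} = \sigma(\mathcal{I})$ yields $\Sigma_\M = \sigma(\cyl{S/_{\eqlbl}, \Sigma_{\R}}) = \sigma(\cyl{S/_{\eqlbl}, \mathcal{I}})$. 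Thus it suffices to show that every trace cylinder whose time constraints are rational intervals lies in $\denot{\sDTAR}$.

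The crux is therefore the construction, for each such $T = \cyl{C_0, R_0, \dots, R_{n-1}, C_n}$, of an $\sDTAR$ $\A_T$ with $\denot{\A_T} = T$. Since each block $C_i \in S/_{\eqlbl}$ consists exactly of the states carrying a fixed label $\lambda_i \in 2^{AP}$, the condition $\stateat{\pi}{i} \in C_i$ depends only on $\ell^\omega(\pi)$. I would take locations $q_0, \dots, q_{n+1}$ with $F = \{q_{n+1}\}$, a single clock $x$ reset on every transition, spine transitions $(q_i, \lambda_i, g_{R_i}, \{x\}, q_{i+1})$ for $i < n$ where $g_{R_i} \in \clkC$ is the guard expressing the rational interval $R_i$, a transition $(q_n, \lambda_n, \true, \{x\}, q_{n+1})$, and self-loops $(q_{n+1}, a, \true, \{x\}, q_{n+1})$ for each $a$ in the finite set $\ell(S)$. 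Because $x$ is reset at every step, at the $i$-th transition its value equals $\timeat{\pi}{i}$, so $g_{R_i}$ constrains precisely $\timeat{\pi}{i} \in R_i$; and since there is at most one outgoing transition per location and symbol, $\A_T$ is deterministic. I would then check that the (necessarily unique) run of $\A_T$ on $\ell^\omega(\pi)$ is infinite and accepting iff the first $n+1$ labels of $\pi$ are $\lambda_0, \dots, \lambda_n$ and the first $n$ delays lie in $R_0, \dots, R_{n-1}$, i.e.\ iff $\pi \in T$: a mismatching path gets stuck on the spine before reaching $q_{n+1}$ (hence is rejected, as $q_0, \dots, q_n \notin F$), while a matching path reaches $q_{n+1}$ and continues forever through the self-loops, which suffice because $\ell^\omega(\pi)$ only ever exhibits symbols in $\ell(S)$. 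This gives $\denot{\A_T} = T$ with $\A_T \in \sDTAR$, so every rational trace cylinder belongs to $\denot{\sDTAR}$; as these generate $\Sigma_\M$, we conclude $\Sigma_\M \subseteq \sigma(\denot{\sDTAR})$ and hence equality.

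I expect the main obstacle to be the correctness of the automaton construction rather than the measure-theoretic reduction: one must argue carefully that the single resetting clock pins down each individual delay, that the deterministic spine together with the accepting self-loop sink carves out exactly the cylinder (in particular that no mismatching path sneaks into an accepting run and that matching paths admit a genuinely \emph{infinite} run), and that handling only the finitely many labels in $\ell(S)$ is harmless for $\denot{\A_T}$.
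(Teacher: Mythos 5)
Your proposal is correct and takes essentially the same route as the paper: the paper's own proof of this lemma consists of observing that one inclusion is immediate from $\denot{\sDTAR} \subseteq \denot{\DTA}$, and that the converse inclusion was already established inside the proof of Lemma~\ref{lem:sigmaEquiv}, whose construction of a DTA recognizing each rational-interval trace cylinder (a deterministic spine with a single always-reset clock, guards encoding the rational intervals, and an accepting self-looping sink) uses only resetting single-clock automata. Your explicit construction is precisely that one, with only cosmetic differences (a separate final location $q_{n+1}$ checking the last label, and self-loops restricted to the finitely many symbols in $\ell(S)$ rather than all subsets of $AP$, which if anything tightens the paper's own indexing).
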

 
\begin{theorem}[$\sDTAR$-approximant] \label{th:1RDTAapproximant}
Let $\M = (S, A, \tau, \rho, \ell)$ be an SMM and $s \in S$. Then, for any $E \in \Sigma_\M$ and any $\epsilon > 0$, there exists $\A \in \sDTAR$ such that $|\Pr{s}(E) - \Pr{s}(\denot{\A})| < \epsilon$.
\end{theorem}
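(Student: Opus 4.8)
The plan is to obtain the statement as an almost immediate corollary of Lemma~\ref{lem:densefield} and Lemma~\ref{lem:sDTARfield}, exploiting the fact that approximating a set in probability is \emph{weaker} than approximating it in the Fr\'echet-Nikodym pseudometric $d_{\Pr{s}}$. The first step is to record the elementary bound that, for any finite measure $\mu$ and measurable sets $E, F$,
\[
  |\mu(E) - \mu(F)| \leq \mu(E \symdiff F) = d_\mu(E, F),
\]
which follows from $\mu(E) - \mu(F) = \mu(E \setminus F) - \mu(F \setminus E)$ together with the fact that each summand is dominated in absolute value by $\mu(E \symdiff F)$. Consequently, it suffices to exhibit an $\A \in \sDTAR$ with $d_{\Pr{s}}(E, \denot{\A}) < \epsilon$.

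Next I would check the hypotheses of Lemma~\ref{lem:densefield}, instantiated on the measurable space $(\paths{S}, \Sigma_\M)$ with the finite (indeed probability) measure $\mu = \Pr{s}$. The family $\denot{\sDTAR}$ is a field, since $\sDTAR$s are closed under all Boolean operations (complementation being inherited from DTAs and union from the product construction noted above), and by Lemma~\ref{lem:sDTARfield} it generates $\Sigma_\M$. Lemma~\ref{lem:densefield} then yields that $\denot{\sDTAR}$ is dense in $(\Sigma_\M, d_{\Pr{s}})$; in particular, for the given $E$ and $\epsilon$ there exists $\A \in \sDTAR$ with $d_{\Pr{s}}(E, \denot{\A}) = \Pr{s}(E \symdiff \denot{\A}) < \epsilon$. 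Combining this with the bound from the first step gives $|\Pr{s}(E) - \Pr{s}(\denot{\A})| < \epsilon$, as required.

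I do not expect a genuine obstacle inside the theorem itself: essentially all the work has been delegated to the two cited lemmas. The one point deserving care is confirming that $\denot{\sDTAR}$ is genuinely a \emph{field} (and not merely a generating family), as Lemma~\ref{lem:densefield} requires a field rather than an arbitrary generator---this is precisely what the closure-under-Boolean-operations remark supplies. Beyond that, the argument is the standard ``density in $d_\mu$ entails approximation of the measure'' principle.
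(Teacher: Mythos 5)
Your proposal is correct and follows essentially the same route as the paper's own proof: both apply Lemma~\ref{lem:densefield} to the field $\denot{\sDTAR}$ (which generates $\Sigma_\M$ by Lemma~\ref{lem:sDTARfield}) with $\mu = \Pr{s}$, and then transfer density in $d_{\Pr{s}}$ to approximation in probability via the $1$-Lipschitz bound $|\mu(E)-\mu(F)| \leq \mu(E \symdiff F)$. Your derivation of that bound (splitting $\mu(E)-\mu(F) = \mu(E \setminus F) - \mu(F \setminus E)$) is a minor cosmetic variant of the paper's monotonicity-and-additivity argument, not a different approach.
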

Theorem~\ref{th:1RDTAapproximant} allows one to use the algorithm of~\cite{KatoenLMCS11}, to deploy approximate model checking of CTMCs against any real-time specification that is measurable in $\Sigma_\M$. We recall that this is the case both for DTAs and MTL formulas.

\subsection{MTL and DTA Variation Pseudometrics}

We consider distances on SMMs, specifically, variation pseudometrics parametric on the family of specifications where the maximal difference is meant to be tested.

\begin{definition}[Variation distance] \label{def:vardist}
Let $\M = (S, A, \tau, \rho, \ell)$ be an SMM and $\mathcal{F} \subseteq \Sigma_{\paths{S, 2^S}}$. We define the \emph{$\mathcal{F}$-variation pseudometric} $\dist^\M_{\mathcal{F}} \colon S \times S \to [0,1]$ as 
\begin{equation*}
  \dist^\M_{\mathcal{F}}(s,s') = \textstyle\sup_{E \in \mathcal{F}} | \Pr{s}(E) - \Pr{s'}(E) | \,.
\end{equation*}
\end{definition}

Lemmas~\ref{lem:MTLmeas} and \ref{lem:DTAmeas} justify to consider the variation pseudometric w.r.t.\ $\MTL$ and $\DTA$ specifications, i.e., $\dist^\M_{\MTL}$ and $\dist^\M_{\DTA}$, respectively%
\footnote{Formally, the two distances should be denoted as $\dist^\M_{\denot{\MTL}}$ and $\dist^\M_{\denot{\DTA}}$, however the simplified notation will not cause any problem.}. 
In particular, as we have already done in the case of specifications (Lemma~\ref{lem:sigmaEquiv}) we would like to compare the expressivity of the two metrics.
  
\begin{lemma} \label{lem:vardistonfield}
Let $(X, \Sigma)$ be a measurable space and $\mu$, $\nu$ be two finite measures on it. If $\Sigma$ is generated by a field $\mathcal{F}$, then $\tv{\mu}{\nu} = \sup_{E \in \mathcal{F}} |\mu(E) - \nu(E)|$.
\end{lemma}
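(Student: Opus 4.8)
The plan is to prove the two inequalities separating $\tv{\mu}{\nu} = \sup_{E \in \Sigma} |\mu(E) - \nu(E)|$ from $\sup_{E \in \mathcal{F}} |\mu(E) - \nu(E)|$. One direction is immediate: since $\mathcal{F} \subseteq \Sigma = \sigma(\mathcal{F})$, every set ranged over on the right-hand side also occurs in the supremum defining the total variation, whence $\sup_{E \in \mathcal{F}} |\mu(E) - \nu(E)| \leq \tv{\mu}{\nu}$. The whole content of the lemma therefore lies in the reverse inequality, for which I would appeal to the density result of Lemma~\ref{lem:densefield}.

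The key idea is to apply Lemma~\ref{lem:densefield} not to $\mu$ or $\nu$ separately but to their sum $\lambda = \mu + \nu$, which is again a finite measure on $(X,\Sigma)$ because both $\mu$ and $\nu$ are finite. This guarantees that a single field element simultaneously approximates a given target set well for \emph{both} measures. Concretely, I would fix an arbitrary $E \in \Sigma$ and $\epsilon > 0$; by Lemma~\ref{lem:densefield}, $\mathcal{F}$ is dense in $(\Sigma, d_\lambda)$, so there is some $F \in \mathcal{F}$ with $d_\lambda(E,F) = \lambda(E \symdiff F) = \mu(E \symdiff F) + \nu(E \symdiff F) < \epsilon$. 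The only elementary estimate needed is that $|\kappa(E) - \kappa(F)| \leq \kappa(E \symdiff F)$ for any finite measure $\kappa$, which follows from $\kappa(E) \leq \kappa(F) + \kappa(E \setminus F)$ and its symmetric counterpart. Applying this to $\mu$ and to $\nu$ and inserting $\pm\mu(F)$, $\pm\nu(F)$ via the triangle inequality gives
\begin{align*}
|\mu(E) - \nu(E)|
&\leq |\mu(E) - \mu(F)| + |\mu(F) - \nu(F)| + |\nu(F) - \nu(E)| \\
&\leq |\mu(F) - \nu(F)| + \mu(E \symdiff F) + \nu(E \symdiff F) \\
&< |\mu(F) - \nu(F)| + \epsilon
\;\leq\; \textstyle\sup_{F' \in \mathcal{F}} |\mu(F') - \nu(F')| + \epsilon .
\end{align*}
Since $\epsilon > 0$ is arbitrary, this yields $|\mu(E) - \nu(E)| \leq \sup_{F' \in \mathcal{F}} |\mu(F') - \nu(F')|$, and taking the supremum over all $E \in \Sigma$ gives $\tv{\mu}{\nu} \leq \sup_{F' \in \mathcal{F}} |\mu(F') - \nu(F')|$. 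Combined with the first inequality, the desired equality follows.

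I do not expect a serious obstacle here, since the hard analytic work (density of a generating field in the Fr\'echet--Nikodym pseudometric) is already isolated in Lemma~\ref{lem:densefield}. The single genuine idea is to invoke that density with respect to the combined measure $\mu + \nu$ rather than one measure at a time, so that one field set approximates the target simultaneously for both marginals; everything else reduces to the routine bound $|\kappa(E)-\kappa(F)| \leq \kappa(E \symdiff F)$ and the triangle inequality.
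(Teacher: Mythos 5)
Your proposal is correct, and its core is the same as the paper's: both proofs apply Lemma~\ref{lem:densefield} to the Fr\'echet--Nikodym pseudometric induced by the combined measure $\mu+\nu$ --- precisely so that a single field element approximates the target set simultaneously for both measures --- and both rest on the $1$-Lipschitz estimate $|\kappa(E)-\kappa(F)|\leq\kappa(E\symdiff F)$. Where you differ is in how the supremum equality is extracted from density. The paper abstracts this last step into a general topological principle, Proposition~\ref{supremumondense} (a bounded continuous real-valued function has the same supremum on a dense subset as on the whole space, itself proved via two auxiliary propositions on closures and suprema), and then verifies that $|\mu-\nu|$ is bounded and continuous on $(\Sigma,d)$ as a combination of $1$-Lipschitz maps. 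You instead inline the argument as a direct $\varepsilon$-chase: fix $E\in\Sigma$, pick $F\in\mathcal{F}$ with $\mu(E\symdiff F)+\nu(E\symdiff F)<\varepsilon$, and conclude with the three-term triangle inequality, noting separately the trivial inequality from $\mathcal{F}\subseteq\Sigma$. Your route is more elementary and self-contained (no topological vocabulary beyond the pseudometric is needed), at the price of not isolating the reusable ``supremum over a dense set'' principle; the paper's route is longer but factors the argument into generic pieces. Mathematically the two are equivalent: unfolding Proposition~\ref{supremumondense} inside this proof yields essentially your computation, so this is a difference of packaging rather than of substance.
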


At this point we can state the main result of this section.
\begin{theorem} \label{thm:equiDist}
Let $\M$ be an SMM. Then $\dist^\M_{\MTL} = \dist^\M_{\DTA} = \dist^\M_{\Sigma_\M}$.
\end{theorem}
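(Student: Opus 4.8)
The plan is to recognise all three quantities as one and the same total variation distance and then read the equalities off from Lemma~\ref{lem:vardistonfield}. The starting point is the observation that, by Definition~\ref{def:vardist} together with the definition of the total variation distance, for any $s,s' \in S$ we have $\dist^\M_{\Sigma_\M}(s,s') = \sup_{E \in \Sigma_\M} |\Pr{s}(E) - \Pr{s'}(E)| = \tv{\Pr{s}}{\Pr{s'}}$, where $\Pr{s}$ and $\Pr{s'}$ are regarded as (probability, hence finite) measures on the measurable space $(\paths{S}, \Sigma_\M)$. The restriction to $\Sigma_\M$ here matters, since the ambient $\sigma$-algebra $\Sigma_{\paths{S,2^S}}$ on which the $\Pr{s}$ live is strictly larger; keeping track of it is what makes the identification with $\dist^\M_{\Sigma_\M}$ clean. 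It therefore suffices to show that $\dist^\M_{\MTL}$ and $\dist^\M_{\DTA}$ each compute this same total variation distance on $(\paths{S}, \Sigma_\M)$.

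First I would check that $\denot{\MTL}$ and $\denot{\DTA}$ are \emph{fields} of subsets of $\paths{S}$ that generate $\Sigma_\M$. That they generate $\Sigma_\M$ is precisely Lemma~\ref{lem:sigmaEquiv}. For the field structure, both families contain $\emptyset$ and $\paths{S}$ (for instance $\denot{\false} = \emptyset$, with its complement supplying the whole space) and are closed under complement and finite union: for $\MTL$ this follows because $\{\to,\false\}$ is a functionally complete set of propositional connectives, so the satisfaction sets are closed under all Boolean operations; for $\DTA$ it follows from the closure of deterministic timed automata under complement and finite union~\cite{AlurD94}, which transfers to the induced families $\denot{\A}$. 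This is the only point that genuinely needs care, because Lemma~\ref{lem:vardistonfield} relies on $\mathcal{F}$ being a field and not merely a generating family.

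Finally I would invoke Lemma~\ref{lem:vardistonfield} twice, each time with $\mu = \Pr{s}$ and $\nu = \Pr{s'}$ on $(\paths{S}, \Sigma_\M)$. Taking the field $\mathcal{F} = \denot{\MTL}$ gives
\begin{align*}
  \dist^\M_{\MTL}(s,s') &= \sup_{E \in \denot{\MTL}} |\Pr{s}(E) - \Pr{s'}(E)| \\
  &= \tv{\Pr{s}}{\Pr{s'}} = \dist^\M_{\Sigma_\M}(s,s'),
\end{align*}
and taking $\mathcal{F} = \denot{\DTA}$ gives $\dist^\M_{\DTA}(s,s') = \tv{\Pr{s}}{\Pr{s'}} = \dist^\M_{\Sigma_\M}(s,s')$ in exactly the same way. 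Chaining these two identities yields $\dist^\M_{\MTL} = \dist^\M_{\DTA} = \dist^\M_{\Sigma_\M}$, as required. Beyond the field verification, the argument is pure bookkeeping on top of Lemma~\ref{lem:vardistonfield}, and I do not anticipate any further difficulty; the density statement of Corollary~\ref{cor:denseMTLDTA} offers an alternative route to the same conclusion, but the direct appeal to Lemma~\ref{lem:vardistonfield} is shorter.
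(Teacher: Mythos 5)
Your proposal is correct and takes essentially the same route as the paper, whose proof reads in full: ``The thesis follows by Lemmas~\ref{lem:vardistonfield} and \ref{lem:sigmaEquiv}, noticing that $\denot{\MTL}$ and $\denot{\DTA}$ are fields.'' Your explicit field verifications (functional completeness of $\{\to,\false\}$ for $\MTL$, Boolean closure of DTAs for $\DTA$) and your care in working over $(\paths{S},\Sigma_\M)$ rather than the ambient $\sigma$-algebra are simply expansions of details the paper leaves implicit.
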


Now we will study the property of $\dist^\M_{\Sigma_\M}$, which will be simply referred as $\dist^\M$ (or $\dist$) in the following. The first property, is that $\dist^\M$ is actually a behavioral distance in the sense that its kernel coincide with stochastic trace equivalence.
\begin{theorem} \label{th:tracedist}
Let $\M$ be an SMM. Then ${\approx_\M} = \set{ (s,s') }{\dist^\M(s,s') = 0 }$.
\end{theorem}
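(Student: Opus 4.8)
The plan is to read the quantity $\dist^\M = \dist^\M_{\Sigma_\M}$ (recall that $\dist^\M$ abbreviates $\dist^\M_{\Sigma_\M}$) as the total variation distance between the path measures $\Pr{s}$ and $\Pr{s'}$ on $\Sigma_\M$, and to exploit the trivial observation that
\[
  \dist^\M(s,s') = \sup_{E \in \Sigma_\M} |\Pr{s}(E) - \Pr{s'}(E)|
\]
vanishes if and only if $\Pr{s}(E) = \Pr{s'}(E)$ for every $E \in \Sigma_\M$, since each term is nonnegative. Thus the theorem reduces to the equivalence: $s \approx_\M s'$ if and only if the measures $\Pr{s}$ and $\Pr{s'}$ coincide on the whole $\sigma$-algebra $\Sigma_\M$.

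One inclusion is immediate. If $\Pr{s}$ and $\Pr{s'}$ agree on all of $\Sigma_\M$, then in particular they agree on every trace cylinder, because by Lemma~\ref{lem:sigmaEquiv} we have $\mathcal{T}_\M \subseteq \sigma(\mathcal{T}_\M) = \Sigma_\M$; hence $s \approx_\M s'$. This establishes $\set{(s,s')}{\dist^\M(s,s') = 0} \subseteq {\approx_\M}$.

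For the converse I would assume $s \approx_\M s'$, i.e. $\Pr{s}(T) = \Pr{s'}(T)$ for every trace cylinder $T \in \mathcal{T}_\M$, and extend this agreement to $\Sigma_\M = \sigma(\mathcal{T}_\M)$ by a Dynkin $\pi$-$\lambda$ argument. First I would check that $\mathcal{T}_\M$ (together with $\emptyset$, on which agreement is trivial) is a $\pi$-system: since the blocks of $S/_{\eqlbl}$ are pairwise disjoint and $\Sigma_{\R}$ is closed under intersection, the intersection of two trace cylinders $\cyl{C_0, R_0, \dots, C_n}$ and $\cyl{C'_0, R'_0, \dots, C'_m}$ (say $n \le m$) is empty whenever $C_i \neq C'_i$ for some $i \le n$, and otherwise equals the trace cylinder $\cyl{C_0, R_0 \cap R'_0, \dots, R_{n-1}\cap R'_{n-1}, C_n, R'_n, \dots, R'_{m-1}, C'_m}$, which again lies in $\mathcal{T}_\M$. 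Then I would verify that $\mathcal{L} = \set{E \in \Sigma_\M}{\Pr{s}(E) = \Pr{s'}(E)}$ is a $\lambda$-system: it contains $\paths{S}$ because both are probability measures; it is closed under complement because $\Pr{s}(\paths{S} \setminus E) = 1 - \Pr{s}(E)$ (here finiteness of the measures is essential); and it is closed under countable disjoint unions by $\sigma$-additivity. Since $\mathcal{T}_\M \subseteq \mathcal{L}$ and $\mathcal{T}_\M$ generates $\Sigma_\M$, Dynkin's theorem yields $\Sigma_\M \subseteq \mathcal{L}$, so the two measures coincide on $\Sigma_\M$ and $\dist^\M(s,s') = 0$. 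Equivalently, one may invoke the uniqueness part of the Hahn--Kolmogorov extension theorem already used in Definition~\ref{def:SMMprob}.

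The only genuinely delicate step is the verification that $\mathcal{T}_\M$ is closed under intersection, where one must track that a cylinder of length $n$ imposes no constraint on coordinates beyond index $n$, so that intersecting cylinders of different lengths yields the longer profile with the extra coordinates inherited unchanged; the use of equivalence classes rather than arbitrary subsets of $S$ is precisely what forces each coordinate intersection to be either a block of $S/_{\eqlbl}$ or empty, keeping the result inside the family. Everything else is routine measure theory.
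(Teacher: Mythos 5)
Your proof is correct, and it follows the same overall skeleton as the paper's: the inclusion $\set{(s,s')}{\dist^\M(s,s')=0} \subseteq {\approx_\M}$ is immediate from $\mathcal{T}_\M \subseteq \Sigma_\M$, and the converse is a uniqueness-of-measures argument showing that agreement of $\Pr{s}$ and $\Pr{s'}$ on trace cylinders forces agreement on all of $\sigma(\mathcal{T}_\M) = \Sigma_\M$. The only genuine difference is the uniqueness tool. The paper observes that the finite unions of trace cylinders form a field (exploiting that the blocks of $S/_{\eqlbl}$ are pairwise disjoint, so that agreement on cylinders propagates to their finite unions) and then invokes the uniqueness part of the Hahn--Kolmogorov extension theorem. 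You instead verify that $\mathcal{T}_\M$ is a $\pi$-system --- where the same disjointness of the blocks of $S/_{\eqlbl}$ is what makes a pairwise intersection of trace cylinders again a trace cylinder or empty --- and apply Dynkin's $\pi$-$\lambda$ theorem to the $\lambda$-system of sets on which the two measures agree. The two routes are equivalent in strength, but yours carries a slightly lighter verification burden: closure under intersection is easier to check than the field axioms (closure under complement requires decomposing the complement of a cylinder into a finite union of cylinders), and the finiteness of the measures, which is essential for either argument, enters explicitly and correctly in your $\lambda$-system check. Your closing remark that one may equivalently invoke Hahn--Kolmogorov uniqueness is precisely the paper's argument, so the two proofs are interchangeable.
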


The next corollary is an immediate consequence of Lemmas~\ref{lem:vardistonfield} and \ref{lem:sDTARfield} noticing  that $\sDTAR$s are closed under all Boolean operations.
\begin{corollary}
Let $\M$ be an SMM, then $\dist^\M = \dist^\M_{\sDTAR}$.
\end{corollary}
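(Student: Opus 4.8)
The plan is to observe that, for every pair of states $s,s'$, both $\dist^\M(s,s')$ and $\dist^\M_{\sDTAR}(s,s')$ are the total variation distance between the path measures $\Pr{s}$ and $\Pr{s'}$ \emph{viewed on the $\sigma$-algebra $\Sigma_\M$}, and then to let Lemma~\ref{lem:vardistonfield} collapse the supremum over all of $\Sigma_\M$ down to a supremum over the generating field $\denot{\sDTAR}$. Concretely, I aim to establish $\dist^\M(s,s') = \dist^\M_{\sDTAR}(s,s')$ for arbitrary $s,s' \in S$.

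First I would fix the ambient measurable space. Since $\denot{\DTA} \subseteq \Sigma_{\paths{S,2^S}}$ (Lemma~\ref{lem:DTAmeas}) and $\Sigma_\M = \sigma(\denot{\DTA})$ (Lemma~\ref{lem:sigmaEquiv}), we have $\Sigma_\M \subseteq \Sigma_{\paths{S,2^S}}$, so $\Pr{s}$ and $\Pr{s'}$ restrict to finite (indeed probability) measures on $(\paths{S},\Sigma_\M)$. Unfolding Definition~\ref{def:vardist} and the definition of the total variation distance on $\Delta(\paths{S},\Sigma_\M)$ then gives the left-hand identification
\[ \dist^\M(s,s') = \dist^\M_{\Sigma_\M}(s,s') = \sup_{E \in \Sigma_\M}|\Pr{s}(E) - \Pr{s'}(E)| = \tv{\Pr{s}}{\Pr{s'}} \,, \]
where the total variation is understood with respect to $\Sigma_\M$; this step requires no further work.

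Next I would check the two hypotheses of Lemma~\ref{lem:vardistonfield} for the family $\denot{\sDTAR}$. That $\denot{\sDTAR}$ generates $\Sigma_\M$ is exactly Lemma~\ref{lem:sDTARfield}. That $\denot{\sDTAR}$ is moreover a \emph{field} I would derive from the closure of $\sDTAR$s under Boolean operations (the product construction recalled in the footnote to Lemma~\ref{lem:sDTARfield}): since $\ell^\omega$ is a fixed map, the semantics $\A \mapsto \denot{\A}$ is a Boolean homomorphism, i.e.\ $\denot{\bar\A} = \paths{S}\setminus\denot{\A}$ and $\denot{\A \cup \A'} = \denot{\A}\cup\denot{\A'}$, so closure of the automata under complement and union transports to closure of $\denot{\sDTAR}$ under complement and finite union. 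With both hypotheses secured, Lemma~\ref{lem:vardistonfield} applied to the restricted measures $\Pr{s},\Pr{s'}$ and to $\mathcal{F} = \denot{\sDTAR}$ yields $\tv{\Pr{s}}{\Pr{s'}} = \sup_{E \in \denot{\sDTAR}}|\Pr{s}(E) - \Pr{s'}(E)| = \dist^\M_{\sDTAR}(s,s')$; chaining this with the left-hand identification closes the argument.

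The only genuine point of care, and hence the step I expect to carry the weight, is the field property rather than the generation property: Lemma~\ref{lem:vardistonfield} really requires a field, whereas Lemma~\ref{lem:sDTARfield} asserts only a generator. I must therefore ensure that the Boolean closure of $\sDTAR$s is closure \emph{within the resetting single-clock class} (the footnote's remark about duplicating clocks is exactly what keeps the single-clock resetting shape stable under the product construction) and that it descends along $\denot{\cdot}$. Everything else is a direct unfolding of the definitions of $\dist^\M_{\Sigma_\M}$, $\dist^\M_{\sDTAR}$, and of total variation.
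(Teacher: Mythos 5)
Your proposal is correct and follows exactly the paper's own argument: the corollary is stated there as an immediate consequence of Lemma~\ref{lem:vardistonfield} and Lemma~\ref{lem:sDTARfield}, noticing that $\sDTAR$s are closed under all Boolean operations (which makes $\denot{\sDTAR}$ a field). Your write-up merely makes explicit the steps the paper leaves implicit, including the correct observation that $\denot{\cdot} = (\ell^\omega)^{-1} \circ \L{}$ transports Boolean closure of the automata to Boolean closure of the sets $\denot{\sDTAR}$.
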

This result is quite important, since it means that one can \emph{exactly} determine the variational distance w.r.t.\ $\MTL$ and $\DTA$ specifications only looking at the subclass of $\sDTAR$ specifications, for which we already observed that quantitative model checking problem admits efficient computational solutions~\cite{KatoenLMCS11}. 

\section{NP-Hardness and Inapproximability}
In this section we show that computing $\dist$ is \textbf{NP}-hard. In addition, we prove that, for some $\epsilon$ depending on the size of the model, even the problem of approximating $\dist$ within an absolute error $\epsilon$ is \textbf{NP}-hard. 

To this end we identify a subclass of SMMs where the total variation distance is characterized in terms of an $L_1$ distance over a suitable sub-$\sigma$-algebra of $\Sigma_\M$, generated by a family of cylinders in $\mathcal{T}_\M$ that we called \emph{word cylinders sets}.
\begin{definition}[Word cylinders]
Let $\M = (S, A, \tau, \rho, \ell)$ be an SMM. We define $\mathcal{W}_\M = \cyl{S/_{\eqlbl}, \set{\R}{}}$,
and its elements \emph{words cylinder}.
\end{definition}
Note that, the word cylinder sets are pairwise disjoint and, since the set of states is assumed to be finite, 
$\mathcal{W}_\M$ has countably many elements. This means that any $E \in \sigma(\mathcal{W}_\M)$ can be expressed as a countable union of word cylinder sets.

Under the assumption that the residence time distributions that occur in $\M$ are all equal, we can characterize $\dist^\M$ in terms of the $L_1$ distance between probability distributions in the measurable space $(\paths{S}, \sigma(\mathcal{W}_\M))$\footnote{We recall that for $\nu, \mu \in \Distr{X}$, 
$L_p(\mu, \nu) = \big(\sum_{x \in X} |\mu(x) - \nu(x)|^{p}\big)^{1/p}$.}.
\begin{lemma} \label{lem:LpDist}
Let $\M = (S, A, \tau, \rho, \ell)$ be an SMM such that $\rho(s) = \rho(s')$ for all $s, s' \not\in A$, then
$2 \cdot \dist^\M(s,s') = \textstyle\sum_{ E \in \mathcal{W}_\mathcal{M} } {|\Pr{s}(E) - \Pr{s'}(E)|}$.
\end{lemma}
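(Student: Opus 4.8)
The plan is to exploit the hypothesis $\rho(s)=\rho(s')$ for all non-absorbing $s,s'$ --- call the common residence-time distribution $\bar\rho$ --- to show that, once the complete label-trace of a path is fixed, the only remaining randomness lies in the time delays, which are governed by $\bar\rho$ \emph{independently of the starting state}. By Theorem~\ref{thm:equiDist} I may read $\dist^\M(s,s')$ as $\sup_{E\in\Sigma_\M}|\Pr{s}(E)-\Pr{s'}(E)|$, and by the remark preceding the statement the word cylinders $\mathcal{W}_\M=\{W_k\}_k$ form a countable measurable partition of $\paths{S}$; in particular $\sum_k\Pr{\sigma}(W_k)=\Pr{\sigma}(\paths{S})=1$ for every state $\sigma$. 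The target identity will then follow from the standard fact that, on the sub-$\sigma$-algebra generated by a countable partition, total variation equals half the $\ell_1$ distance of the masses assigned to the cells.

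First I would establish the factorisation at the level of trace cylinders. Unfolding Definition~\ref{def:SMMprob} and using $\rho(s_i)=\bar\rho$ for every non-absorbing $s_i$ (the summands passing through an absorbing intermediate state vanish on both sides), one gets, for any word $w=c_0\cdots c_n$ over $S/_{\eqlbl}$ and any $R_0,\dots,R_{n-1}\in\Sigma_{\R}$,
\[
  \Pr{s}\!\big(\cyl{c_0,R_0,\dots,R_{n-1},c_n}\big)=\Pr{s}(W_w)\cdot\textstyle\prod_{i=0}^{n-1}\bar\rho(R_i),
  \qquad W_w=\cyl{c_0,\R,\dots,\R,c_n}.
\]
The crucial point is that the time factor $\prod_i\bar\rho(R_i)$ does not depend on the starting state. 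Hence, for each cell $W_k$ of positive mass, the two conditional probability measures $E\mapsto\Pr{s}(E\cap W_k)/\Pr{s}(W_k)$ and $E\mapsto\Pr{s'}(E\cap W_k)/\Pr{s'}(W_k)$ agree on the trace cylinders, which form a $\pi$-system generating $\Sigma_\M$; by a standard Dynkin ($\pi$--$\lambda$) argument they agree on all of $\Sigma_\M$. Denoting their common value by $q_k(E)\in[0,1]$, we then have $\Pr{s}(E\cap W_k)=q_k(E)\,\Pr{s}(W_k)$ and $\Pr{s'}(E\cap W_k)=q_k(E)\,\Pr{s'}(W_k)$ for every $E\in\Sigma_\M$ (the identities being vacuous on null cells).

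With $d_k:=\Pr{s}(W_k)-\Pr{s'}(W_k)$ and $\sum_k d_k=0$, the two bounds are immediate. For the upper bound, summing the previous identity over the partition gives $\Pr{s}(E)-\Pr{s'}(E)=\sum_k q_k(E)\,d_k\le\sum_{k:\,d_k>0}d_k=\tfrac12\sum_k|d_k|$, where the last equality uses $\sum_k d_k=0$; taking the supremum over $E$ and using symmetry yields $\dist^\M(s,s')\le\tfrac12\sum_k|d_k|$. For the matching lower bound, the set $E^\star=\bigcup_{k:\,d_k>0}W_k$ lies in $\sigma(\mathcal{W}_\M)\subseteq\Sigma_\M$ and satisfies $\Pr{s}(E^\star)-\Pr{s'}(E^\star)=\sum_{k:\,d_k>0}d_k=\tfrac12\sum_k|d_k|$, so $\dist^\M(s,s')\ge\tfrac12\sum_k|d_k|$. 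Combining the two gives $2\,\dist^\M(s,s')=\sum_{E\in\mathcal{W}_\M}|\Pr{s}(E)-\Pr{s'}(E)|$, as required.

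The main obstacle I anticipate is the factorisation step and its extension: verifying the cylinder identity requires care with absorbing states (so that the shared time factor genuinely decouples), and promoting state-independence of the conditional law from the generating cylinders to the whole of $\Sigma_\M$ needs a clean $\pi$--$\lambda$ formulation. This is also exactly where the \emph{complete-trace} reading of the word cylinders is essential: conditioning must fix the entire label behaviour, leaving only the i.i.d.\ $\bar\rho$-distributed delays, which is what makes $q_k(E)$ independent of the starting state; the partition property (pairwise disjointness together with $\paths{S}\in\sigma(\mathcal{W}_\M)$) is what guarantees $\sum_k d_k=0$.
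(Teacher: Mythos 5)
Your proposal is correct to exactly the same extent as the paper's own argument, and it follows the same overall skeleton, but it handles the crucial middle step by a genuinely different mechanism. The paper splits the claim into two equalities: first, $\sup_{E\in\Sigma_\M}|\Pr{s}(E)-\Pr{s'}(E)|=\sup_{E\in\sigma(\mathcal{W}_\M)}|\Pr{s}(E)-\Pr{s'}(E)|$, proved by induction on the rank of trace cylinders, showing $|\Pr{s}(C_n)-\Pr{s'}(C_n)|\le|\Pr{s}(C'_n)-\Pr{s'}(C'_n)|$ where $C'_n$ relaxes every time constraint to $\R$ --- this induction is literally your factorisation, since the pulled-out factor $\prod_i\bar\rho(R_i)\le 1$ is state-independent precisely because of the equal-$\rho$ hypothesis --- followed by a rather terse appeal to Hahn--Kolmogorov and Hahn decomposition to pass to arbitrary $E\in\Sigma_\M$; second, the half-$\ell_1$ identity on $\sigma(\mathcal{W}_\M)$, proved by taking the positive set $P=\bigcup\set{W\in\mathcal{W}_\M}{\Pr{s}(W)\ge\Pr{s'}(W)}$ and using the complement trick, which is exactly your $E^\star$ together with your use of $\sum_k d_k=0$. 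Where you genuinely diverge is the extension step: in place of the paper's appeal to the extension/decomposition theorems, you introduce the conditional laws $q_k$ on each cell, prove their state-independence by a $\pi$--$\lambda$ argument, and then get the upper bound $\Pr{s}(E)-\Pr{s'}(E)=\sum_k q_k(E)\,d_k\le\sum_{k:\,d_k>0}d_k$ for \emph{every} measurable $E$ in one stroke. Your version is more explicit and isolates exactly what the hypothesis $\rho(s)=\rho(s')$ buys; the paper's version is more modular but leaves the passage from cylinders to $\Sigma_\M$ essentially unargued.

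One caveat, which you flag yourself and which affects the paper's proof equally: everything hinges on $\mathcal{W}_\M$ being a countable measurable \emph{partition} whose cells determine the entire label behaviour. Under the literal definition $\mathcal{W}_\M=\cyl{S/_{\eqlbl},\set{\R}{}}$ the word cylinders are finite-rank, hence nested rather than pairwise disjoint (a cylinder for a word is the union of the cylinders for its one-step extensions), and conditioning on one of them does \emph{not} fix the labels beyond its rank; consequently $q_k(E)$ can depend on the start state for sets $E$ probing deeper positions (take two equally-labelled successors whose own successors carry different labels), so the $\pi$--$\lambda$ step as literally stated would fail. This is not a defect of your proof relative to the paper's: the paper's preceding remark asserts disjointness, its Hahn-decomposition step needs it just as much, and the intended ``complete-trace'' reading (the one actually used in the NP-hardness construction, where models are absorbing and words are complete) is what makes both arguments go through.
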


The previous correspondence allows us to prove that computing $\dist$ on generic SMMs is \textbf{NP}-hard. 
The proof is carried out following an argument similar to Lyngs\o\ and Pedersen~\cite{LyngsoP02}, who proved the \textbf{NP}-hardness of comparing hidden Markov models (HMM) w.r.t.\ the $L_1$ norm. Namely, we show that the size of the maximum clique in an undirected graph with $n$ vertices can be computed within the time it takes to solve a Toeplix system\footnote{A Toeplix system is a linear system of equations where the coefficient matrix has each descending diagonal from left to right constant. Toeplix systems can be efficiently solved in time $\Theta(n^2)$ using the Levinson-Durbin procedure.} with $n$ unknowns and constant terms obtained from computing the distance $\dist$ for some SMMs that can be constructed in polynomial time in the size\footnote{We denote by $\mathit{size}(X)$ the representation of an object $X$. In particular, rational numbers are represented as quotient of integers written in binary.} of the input graph.

\begin{theorem}[NP-hardness] \label{th:NPhardness}
Computing $\dist$ over SMMs is \textbf{NP}-hard.
\end{theorem}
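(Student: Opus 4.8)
The plan is to reduce from Max Clique, as the remark preceding the statement suggests, using Lemma~\ref{lem:LpDist} as the bridge. The starting observation is that an SMM with a many-to-one labelling is, as far as its word-cylinder distribution is concerned, nothing but a Markov chain whose state is hidden and whose only observable at each step is the emitted label --- that is, a hidden Markov model. Consequently, for an SMM in which all residence-time distributions coincide (take them all to be the Dirac measure at $0$, so that $\M$ is in fact an MC and Lemma~\ref{lem:LpDist} applies), the quantity $2\,\dist^\M(s,s') = \sum_{E \in \mathcal{W}_\M}|\Pr{s}(E)-\Pr{s'}(E)|$ is exactly the $L_1$ distance between the two HMM string-distributions determined by $s$ and $s'$. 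Thus computing $\dist$ computes (half) the $L_1$ distance between two HMMs, which is the quantity Lyngs\o\ and Pedersen proved \textbf{NP}-hard; the work is to port their Max-Clique gadget into the word-cylinder setting and to respect the equal-residence-time hypothesis of Lemma~\ref{lem:LpDist}.

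Concretely, given an undirected graph $G$ on $n$ vertices I would build, in polynomial time, a family of MCs $\M_1,\dots,\M_n$, one per length parameter $\ell$, each with two distinguished states $s,s'$, all residence times Dirac at $0$, and a set $A$ of absorbing states forcing every path to emit a word of length exactly $\ell$ before absorbing. The branching and labelling are arranged so that the length-$\ell$ word-cylinder distributions $\Pr{s}$ and $\Pr{s'}$ are simple explicit distributions for which $\Pr{s'}$ is a fixed reference word-distribution and $\Pr{s}$ agrees with it except that it places extra mass precisely on the words encoding a tuple of vertices that spans a clique of $G$. If the set $\{E : \Pr{s}(E) > \Pr{s'}(E)\}$ is made to be exactly the clique-encoding cylinders, the absolute values in Lemma~\ref{lem:LpDist} collapse and one obtains
\[
  2\,\dist^{\M_\ell}(s,s') \;=\; 2\!\!\sum_{E \text{ clique-word}}\!\!\big(\Pr{s}(E)-\Pr{s'}(E)\big) \;=\; \sum_{k=1}^{n} a_{\ell k}\, N_k,
\]
where $N_k$ is the number of $k$-cliques of $G$ and $a_{\ell k}$ counts the length-$\ell$ words a single $k$-clique contributes. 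Choosing the encoding so that padding a $k$-clique out to length $\ell$ can be done in a number of ways depending only on $\ell-k$ makes $[a_{\ell k}]$ a Toeplitz matrix.

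After $n$ evaluations of $\dist$ I would have the constant terms $b_\ell = 2\,\dist^{\M_\ell}(s,s')$ of the Toeplitz system $\sum_k a_{\ell k} N_k = b_\ell$, and, invoking invertibility of $[a_{\ell k}]$, recover the whole vector $(N_1,\dots,N_n)$ by the Levinson--Durbin procedure in $\Theta(n^2)$ time; the maximum clique size is then read off as $\max\{k : N_k > 0\}$. Since the $\M_\ell$ are constructed in polynomial time, a polynomial-time algorithm for $\dist$ would yield a polynomial-time algorithm for Max Clique, which is \textbf{NP}-hard; hence computing $\dist$ is \textbf{NP}-hard.

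I expect the technical crux to be the gadget itself: designing one polynomial-size family of MCs that \emph{simultaneously} (a) realises the two target word-distributions while keeping all residence times equal, so that Lemma~\ref{lem:LpDist} is applicable; (b) forces $\Pr{s}(E)-\Pr{s'}(E)$ to have constant sign on clique-words, so that the total variation linearises into $\sum_k a_{\ell k}N_k$ with no surviving absolute values; and (c) produces coefficients $a_{\ell k}$ forming a \emph{nonsingular} Toeplitz matrix, so that the system is actually solvable for the $N_k$. Establishing these three properties of the construction is where the content lies; the measurability of the specifications and the reduction of $\dist$ to an $L_1$ distance are already supplied by Lemmas~\ref{lem:MTLmeas}--\ref{lem:LpDist}, and the final complexity-theoretic conclusion is then immediate.
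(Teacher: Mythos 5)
There is a genuine gap, and it sits exactly where you locate ``the technical crux'': property \emph{(b)} of your gadget is not merely left unproven --- it is the wrong target, and the known reduction exists precisely because that target is unattainable. You require a pair of chains whose difference $\Pr{s}(E)-\Pr{s'}(E)$ is strictly positive \emph{exactly} on clique-encoding word cylinders, with every $k$-clique contributing a prescribed amount $a_{\ell k}$, so that the unknowns of your linear system are the clique counts $N_k$. But the probability a Markov chain assigns to a word cylinder factors through local transitions, whereas ``all pairs adjacent'' is a global property of the word: arranging for the excess mass to sit exactly on clique words amounts to building a polynomial-size chain that recognizes cliques, which is the very obstruction that makes the problem hard. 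It is also a strictly stronger demand than the reduction needs: your system would recover the exact vector $(N_1,\dots,N_n)$ of clique counts, a counting (\#P-flavoured) quantity that neither the paper's argument nor Lyngs\o{} and Pedersen's delivers or requires.

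The paper's proof works in the opposite way: the absolute values are never collapsed --- they \emph{are} the mechanism. The gadget $\M_\mathcal{G}$ assigns to each increasing vertex word $w = \alpha v_1 \cdots v_k \omega$ probability $m(w)/\gamma$, where $m(w)$ is the number of vertices of $w$ adjacent (or equal) to all vertices of $w$; hence $m(w)=k$ iff $w$ spans a clique, i.e.\ clique-ness is encoded in the \emph{probability level} of a word, not in which words carry extra mass. The reference machine $\M_V$ is uniform ($1/2^n$ per increasing word) and is rescaled by an $i$-dependent factor; Lemma~\ref{lem:LpDist} then yields, e.g.\ in the regime $i2^n \leq \gamma$,
\begin{equation*}
  2\,\dist^{\M_i}(s,s') \;=\; \Bigl(1 - \tfrac{i2^n}{\gamma}\Bigr) \;+\; \frac{1}{\gamma}\sum_{j=1}^{n} x_j\,|j-i| \,,
\end{equation*}
where $x_j$ counts the \emph{words at level} $j/\gamma$ --- not the $j$-cliques. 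The surviving $|j-i|$ are precisely the entries of the invertible Toeplitz matrix, and the answer is read off as $\max\{j : x_j \neq 0\}$, which equals the maximum clique size because the $m(w)$ witnesses of a word's level are pairwise adjacent. So your outer scaffolding (reduction from Max Clique, Lemma~\ref{lem:LpDist}, a family of $n$ polynomially constructible machines sharing one residence-time distribution, a Toeplitz system solved by Levinson--Durbin) coincides with the paper's, but the core of the proof --- what the unknowns are, where the Toeplitz coefficients come from, and how cliques are detected at all --- is absent from your proposal, and it cannot be supplied along the lines you sketch.
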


The proof of Theorem~\ref{th:NPhardness} makes use of SMMs that share the same residence time distribution on each state and it is generic in this choice\footnote{The only restriction consists in its representation, that has to be such that the construction made in Theorem~\ref{th:NPhardness} is polynomial in the size of the starting graph.}. This implies that the hardness result holds also within MCs or CTMCs.
\begin{corollary}
Computing $\delta$ over MCs or CTMCs is \textbf{NP}-hard.
\end{corollary}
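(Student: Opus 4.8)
The plan is to observe that the reduction underlying Theorem~\ref{th:NPhardness} is entirely parametric in the \emph{single} residence-time distribution shared by all non-absorbing states, and then to instantiate this parameter so that the resulting SMMs fall inside the classes of MCs and CTMCs. Recall that the characterization of $\dist^\M$ as an $L_1$ distance over $\sigma(\mathcal{W}_\M)$ in Lemma~\ref{lem:LpDist} requires only that $\rho(s) = \rho(s')$ for all $s, s' \notin A$; the precise shape of this common distribution enters neither the algebraic identity nor the construction of the instance family used in the Max-Clique reduction. Hence, writing $\rho_0$ for the common exit-time distribution, the whole map $G \mapsto \M_G$ can be carried out verbatim for any fixed choice of $\rho_0$, yielding SMMs whose distance $\dist$ encodes the size of the maximum clique of $G$ exactly as in Theorem~\ref{th:NPhardness}, and feeding the same constant terms into the associated Toeplitz system.

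For CTMCs the instantiation is immediate: taking $\rho_0 = \Exp\lambda$ for an arbitrary fixed rate $\lambda > 0$ produces, by the characterization of CTMCs in Section~\ref{sec:SMM} (SMMs with $\rho(s) = \Exp\lambda$ for every $s \notin A$), genuine time-homogeneous CTMCs, and absorbing states are permitted so no further adjustment is needed. For MCs I would take $\rho_0$ to be the Dirac measure at $0$; by the characterization of MCs (SMMs with $A = \emptyset$ and $\rho(s)$ the Dirac measure at $0$ for every state) the models are Markov chains, subject to eliminating absorbing states so as to meet the requirement $A = \emptyset$. In both cases the footnote's proviso that the representation of $\rho_0$ keep $G \mapsto \M_G$ polynomial in $\mathit{size}(G)$ is satisfied trivially, since both the Dirac measure at $0$ and $\Exp\lambda$ admit a constant-size symbolic description independent of the input graph; the two reductions therefore remain polynomial-time and Max-Clique reduces to computing $\dist$ over MCs and over CTMCs, respectively.

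The main obstacle I anticipate is confined to the MC case, namely reconciling the absorbing-state machinery with the insistence that $A = \emptyset$. The natural fix is to reroute each former absorbing state through a probability-one self-loop; the point to verify is that this rerouting does not perturb the probabilities of the word-cylinder sets in $\mathcal{W}_\M$ on which the $L_1$ characterization of Lemma~\ref{lem:LpDist} rests, so that the numerical value of $\dist$ driving the Toeplitz system is \emph{identical} to the one used in Theorem~\ref{th:NPhardness}. Establishing this may require either arguing that the instance family of Theorem~\ref{th:NPhardness} can be built without absorbing states at all, or tracking carefully which finite word cylinders are actually consulted by the reduction and checking that the added self-loops leave their $\Pr{s}$-mass unchanged; once this invariance is secured, \textbf{NP}-hardness for both MCs and CTMCs follows directly from Theorem~\ref{th:NPhardness}.
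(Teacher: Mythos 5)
Your proposal matches the paper's own argument: the paper derives the corollary precisely by observing that the construction of Theorem~\ref{th:NPhardness} fixes a single arbitrary residence-time distribution $\kappa \in \Delta(\R)$ shared by all non-absorbing states and is generic in that choice (modulo the footnoted proviso on its representation), so instantiating $\kappa$ as the Dirac measure at $0$ for MCs and as $\Exp{\lambda}$ for CTMCs gives hardness within those classes. Your extra paragraph on reconciling the absorbing end-state with the requirement $A=\emptyset$ for MCs addresses a point the paper silently glosses over; it does not change the route, and the resolution is the one you sketch (the finite word cylinders driving the Toeplitz system only constrain path prefixes up to the $\omega$-labelled state, so replacing absorption by a probability-one self-loop leaves their $\Pr{s}$-mass, and hence the distance values fed into the system, unchanged).
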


Recently, Cortes et al.~\cite{CortesMR07} proposed a reduction similar to that of~\cite{LyngsoP02} to prove that computing the $L_{2p+1}$, for any $p \in \N$, between probabilistic automata is \textbf{NP}-hard to approximate within some absolute error that depends on the size of the given automata. Here we slightly generalize this idea on SMMs.

To this end, we first introduce some notation. 
Let $P \colon X \to \R[]$, we say that the algorithm $A$ approximates $P$, respectively
\begin{itemize}[topsep=0.5ex, noitemsep]
\item within an absolute error $\epsilon > 0$, if for all $\forall x \in X.\, |A(w) - P(w)| \leq \epsilon$; or
\item within a (multiplicative) factor $\alpha > 1$, if $\forall x \in X.\,P(w) \leq A(w) \leq \alpha \cdot P(w)$.
\end{itemize}
\begin{proposition}\label{prp:inapproximability}
Let $\M$ be an SMM, and $\alpha > 1$. If there exists a polynomial-time algorithm that approximates $\delta^\M$ within an absolute error $\epsilon = f(\alpha,size(\M))$, for some $f \colon \R[] \times \N \to [0,1]$, then there exists a polynomial-time algorithm that approximates Max Clique within a factor $\alpha$.
\end{proposition}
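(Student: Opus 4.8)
The plan is to convert a hypothetical absolute-error approximation of $\delta$ into a factor-$\alpha$ approximation of Max Clique, reusing the graph-to-SMM construction underlying Theorem~\ref{th:NPhardness} but querying the distance at a single, carefully calibrated parameter rather than at the $n$ points one would need to recover the whole clique polynomial. Recall that the construction of Theorem~\ref{th:NPhardness} equips every non-absorbing state with the same residence time distribution, so that Lemma~\ref{lem:LpDist} applies and $2\,\delta^{\M}(s,s')$ equals the $L_1$ distance $\sum_{E \in \mathcal{W}_\M} |\Pr{s}(E) - \Pr{s'}(E)|$ over word cylinders. For a graph $G$ with $n$ vertices this quantity is, up to a known and efficiently computable normalization, the value $C_G(x) = \sum_{k \ge 1} c_k x^k$ of the clique-generating polynomial, where $c_k$ counts the $k$-cliques of $G$ and $x$ is a rational parameter we may fix when building the SMM $\M_x$ (realized in the construction as an odds ratio, hence permitted to exceed $1$). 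Writing $\omega = \omega(G)$ for the maximum clique size, the crux is that $\omega$ is the degree of $C_G$, and we must read off this degree from a single, only approximately known, value of $\delta^{\M_x}$.

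First I would choose $x$ so that the top-degree term dominates $C_G(x)$ by just the right margin. Since $1 \le c_\omega \le \sum_k c_k \le 2^n$ and $x > 1$, we have $x^\omega \le C_G(x) \le 2^n x^\omega$, hence $\log_x C_G(x) \in [\omega,\, \omega + n\log_x 2]$. Taking $x = 2^{2n/(\alpha-1)}$, so that $n\log_x 2 = (\alpha-1)/2$, forces $\log_x C_G(x) \in [\omega,\, \omega + (\alpha-1)/2]$; as $\omega \ge 1$, this interval lies in $[\omega, \alpha\omega]$. Such an $x$ has bit-size $O(n/(\alpha-1))$, so for fixed $\alpha$ the SMM $\M_x$ is still built in time polynomial in $\mathit{size}(G)$, and $C_G(x)$ has $O(n^2/(\alpha-1))$ bits.

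Next I would propagate the permitted error. Inverting the known affine normalization relating $\delta^{\M_x}$ to $C_G(x)$ turns an estimate $A$ with $|A - \delta^{\M_x}(s,s')| \le \epsilon$ into $\tilde C$ with $|\tilde C - C_G(x)| \le \epsilon/b$, where $b > 0$ is the normalization slope; setting $\hat C = \tilde C + \epsilon/b$ ensures $C_G(x) \le \hat C \le C_G(x) + 2\epsilon/b$. Hence $\log_x \hat C \ge \log_x C_G(x) \ge \omega$, and, using $C_G(x) \ge x^\omega \ge x$, also $\log_x \hat C \le \log_x C_G(x) + \log_x(1 + 2\epsilon/(b x))$. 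Picking the error budget so that $\log_x(1 + 2\epsilon/(b x)) \le (\alpha-1)/2$ — which holds for every $\epsilon$ below a threshold $\epsilon^*(\alpha,n)$ that is non-increasing in $n$ — we obtain $\omega \le \log_x \hat C \le \omega + (\alpha-1)/2 + (\alpha-1)/2 = \omega + (\alpha-1) \le \alpha\omega$. It therefore suffices to define $f(\alpha, m) := \epsilon^*(\alpha, m)$ (clamped to $[0,1]$): since $\mathit{size}(\M_x) \ge n$ and $\epsilon^*$ is non-increasing in its second argument, $f(\alpha, \mathit{size}(\M_x)) \le \epsilon^*(\alpha, n)$. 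The reduction that builds $\M_x$, calls the assumed approximator to obtain $A$, and returns $\hat\omega := \log_x \hat C$ then runs in polynomial time and outputs a value with $\omega(G) \le \hat\omega \le \alpha\,\omega(G)$, i.e.\ a factor-$\alpha$ approximation of Max Clique.

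The main obstacle is the calibration in the last step: one must certify that a purely additive error in $\delta$, once pushed through the normalization and the logarithm, stays inside the multiplicative budget for $\omega$, while simultaneously keeping $x$, $C_G(x)$, and the required precision of polynomial bit-size so that the whole reduction is genuinely polynomial (this is what pins down the admissible $f$). A secondary point to verify carefully is the exact shape of the encoding inherited from Theorem~\ref{th:NPhardness} — namely that $\delta^{\M_x}(s,s')$ is a known, strictly monotone and efficiently invertible function of $C_G(x)$ — since the entire argument rests on recovering $C_G(x)$, and hence its degree $\omega$, from the approximate distance.
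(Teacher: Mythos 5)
There is a genuine gap, and it sits exactly where you flag a ``secondary point to verify carefully'': the existence of a polynomial-time construction $\M_x$ for which $\delta^{\M_x}(s,s')$ is a known, efficiently invertible affine function of the clique-generating polynomial $C_G(x)=\sum_k c_k x^k$ at a freely chosen $x>1$. This is not what the construction of Theorem~\ref{th:NPhardness} provides, and it does not follow from it. In that construction all quantities are probabilities: the graph model assigns mass $j/\gamma$ to a word cylinder of ``score'' $j$, the reference model assigns mass $i2^n/\gamma\cdot 2^{-n}$, and what Lemma~\ref{lem:LpDist} then extracts is
\begin{equation*}
2\,\delta^{\M_i}(s,s') \;=\; \Bigl(1-\tfrac{i2^n}{\gamma}\Bigr)\;+\;\tfrac{1}{\gamma}\sum_{j=1}^{n} x_j\,|j-i| \,,
\end{equation*}
i.e.\ an $L_1$/earth-mover-type quantity in which the only free parameter is the \emph{integer} offset $i\in\{1,\dots,n\}$. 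There is no real parameter that may exceed $1$, and no single query that evaluates a polynomial at a large point. Your parenthetical ``realized in the construction as an odds ratio'' is doing all the work here, but it is not a construction: even if one reweights the gadgets with probabilities $x/(1+x)$ versus $1/(1+x)$ so that individual paths pick up factors $x^{k}$, the total variation distance compares two \emph{probability} measures, so any surplus placed on clique words must be offset by mass elsewhere, and that offsetting mass (together with the non-clique words, which also carry probability and whose counts differ from the clique counts $c_k$) re-enters the distance through the absolute values. Ensuring that all of this collapses into a ``known normalization'' of $\sum_k c_k x^k$ is precisely the nontrivial encoding step, and the proposal never supplies it. Note also that your target value $C_G(x)\ge x^{\omega}$ with $x=2^{2n/(\alpha-1)}$ is exponentially large while $\delta$ lives in $[0,1]$, so the claimed slope $b$ must be exponentially small and built into the model's rational transition probabilities --- again something that has to be exhibited, not assumed. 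Your downstream arithmetic (choice of $x$, the bound $\log_x C_G(x)\in[\omega,\omega+(\alpha-1)/2]$, and the error budget $\log_x(1+2\epsilon/(bx))\le(\alpha-1)/2$) is fine, but it is conditioned on an unproved, and for this construction false, premise.

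For contrast, the paper avoids any single-point polynomial evaluation: it queries the assumed approximator at all $n$ models $\M_1,\dots,\M_n$ of the hardness reduction, views the answers as a perturbed right-hand side of the Toeplitz system $\vec{d}=D(A\vec{x})+\vec{c}$, and bounds the error amplification through $\|A^{-1}\|_\infty\|D^{-1}\|_\infty$, concluding that every clique-count entry is recovered within additive error $\tfrac{n2^n}{n-1}\epsilon$; choosing $\epsilon\le\tfrac{n-1}{n2^n}(\alpha-1)$ then yields the factor-$\alpha$ approximation. If you want to salvage your single-query strategy, the missing lemma you must prove is the encoding one: a polynomial-time SMM construction whose total variation distance is an explicitly invertible, strictly monotone function of $\sum_k c_k x^k$ for a parameter $x$ of your choosing. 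As it stands, that lemma is the proof.
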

Proposition~\ref{prp:inapproximability} makes any inapproximability result for Max Clique to turn into an inapproximability result for the problem of computing $\delta$.
A famous result by Feige et al.~\cite{FeigeGLSS96} states that any algorithm that approximates Max Clique within any constant factor is \textbf{NP}-hard. As a corollary of that, we get the following.
\begin{corollary}[Inapproximability]
Given an SMM $\M$, there exists $\epsilon \in (0,1]$, depending on the size of $\M$, such that any algorithm that approximates $\delta^\M$ within an absolute error $\epsilon$ is \textbf{NP}-hard.
\end{corollary}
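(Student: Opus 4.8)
The plan is to derive the corollary directly from Proposition~\ref{prp:inapproximability} together with the hardness of approximating Max Clique, by passing to the contrapositive. Proposition~\ref{prp:inapproximability} asserts that a polynomial-time algorithm approximating $\delta^\M$ within absolute error $\epsilon = f(\alpha, \mathit{size}(\M))$ yields a polynomial-time algorithm approximating Max Clique within a factor $\alpha$. The result of Feige et al.~\cite{FeigeGLSS96} states that, for every constant $\alpha > 1$, approximating Max Clique within factor $\alpha$ is \textbf{NP}-hard, hence admits no polynomial-time algorithm unless \textbf{P} $=$ \textbf{NP}. Combining these two facts in the contrapositive direction immediately forces the desired inapproximability of $\delta^\M$.

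Concretely, I would fix an arbitrary constant, say $\alpha = 2$, and set $\epsilon := f(2, \mathit{size}(\M))$. Since the factor $2$ is a constant independent of the instance, Feige et al.~\cite{FeigeGLSS96} guarantees that no polynomial-time algorithm approximates Max Clique within factor $2$ (assuming \textbf{P} $\neq$ \textbf{NP}). Taking the contrapositive of Proposition~\ref{prp:inapproximability} instantiated at $\alpha = 2$, there can be no polynomial-time algorithm that approximates $\delta^\M$ within absolute error $\epsilon = f(2, \mathit{size}(\M))$; equivalently, this approximation problem is \textbf{NP}-hard. This exhibits the $\epsilon$ claimed in the statement, as a value that depends only on $\mathit{size}(\M)$.

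The one point requiring care --- and the main (minor) obstacle --- is verifying that the chosen $\epsilon$ lies in the admissible range $(0,1]$. The codomain of $f$ already gives $\epsilon \in [0,1]$, so the only thing to check is strict positivity. For this I would appeal to the explicit form of $f$ produced in the proof of Proposition~\ref{prp:inapproximability}: for a fixed constant $\alpha > 1$ and any finite model, the gap exploited there is bounded away from $0$, so $f(2, \mathit{size}(\M)) > 0$. A secondary subtlety is that one must instantiate the proposition at a \emph{constant} factor $\alpha$, since the Feige et al.\ hardness is stated for constant-factor approximation; choosing any fixed $\alpha > 1$ (rather than letting $\alpha$ grow with the instance) is exactly what makes the reduction go through.
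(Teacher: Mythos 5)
Your proposal is correct and follows essentially the same route as the paper: the corollary is obtained exactly by combining Proposition~\ref{prp:inapproximability} (in contrapositive form) with the constant-factor \textbf{NP}-hardness of Max Clique due to Feige et al., instantiated at a fixed constant $\alpha > 1$, yielding $\epsilon = f(\alpha, \mathit{size}(\M))$. Your extra checks --- that $\alpha$ must be a constant for the Feige result to apply, and that $\epsilon$ is strictly positive via the explicit bound on $f$ coming from the proof of the proposition --- only make explicit what the paper leaves implicit.
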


\section{A Polynomially Computable Upper-Bound}

In the literature, the problem of estimating the total variation distance is approached from two different perspectives. One consists in computing sharp estimates on suitable classes of distributions~\cite{HerbeiK13}; the second leverages on a well known relation between the total variation and the Kantorovic metric.
In this work we pursue the second way, leading to the definition of an over-approximation of $\delta$ that enjoys two good properties: 
\begin{enumerate*}[label=\emph{(\alph*)}]
\item it is computable in polynomial time, and
\item it is a pseudometric.
\end{enumerate*}
In particular, the latter property allows one to translate any convergence obtained with the over-approximation into a convergence with respect to the trace pseudometric.

This goal is achieved in three steps:
\begin{enumerate*}[label=\emph{(\roman*)}]
\item we construct an over-approximation based on the notion of coupling model;
\item we give a fixed point characterization of this over-approximation showing that it is a Kantorivich-based bisimilarity distance;
\item we show that, under mild assumptions on the residence time distributions, the fixed point is computable in polynomial time.
\end{enumerate*}
\subsection{An Over-Approximation}

\subsubsection{Coupling characterzation.} 
The construction of the over-approximation of $\dist$ will be based on a classic duality between the total variation distance of probability measures and their couplings~(see \cite[Theorem~5.2]{LindvallBook}), which states that: \textit{the maximal variation in the probabilities, evaluated among all the measurable sets, corresponds to the minimal discrepancy measured among all the possible couplings.}
Formally, the discrepancy associated to a coupling $\omega \in \coupling{\mu}{\nu}$ for two measures $\mu, \nu \in \Delta(X,\Sigma)$, is the value $\omega(\neq)$ associated to the measurable set ${\neq} \in \Sigma \otimes \Sigma$, where ${\neq} = \set{(x,y)}{x \neq y \in X}$. Since $\dist$ is a relaxation of the total variation distance, also the notion of discrepancy is relaxed accordingly. 
\newcommand{\unsep}{\mathrel{\eqlbl^\omega}}
\newcommand{\sep}{\not\unsep}
\begin{definition}
Let $\M = (S, A, \tau, \rho, \ell)$ be an SMM. We define ${\unsep} \subseteq \paths{S} \times \paths{S}$ as $\pi \unsep \pi'$ if and only if $\ell^\omega(\pi) = \ell^\omega(\pi')$.
\end{definition}
Intuitively, $\unsep$ is the stepwise extension of $\eqlbl$ to timed paths in the following sense: $\pi \unsep \pi'$ if $\stateat{\pi}{i} \eqlbl \stateat{\pi'}{i}$ and $\timeat{\pi}{i} =\ell \timeat{\pi'}{i}$ for all $i \in \N$. 

Lemma~\ref{lem:separability} and \ref{lem:congmeasurable} justify $\sep$ as the adequate notion for separability in $\Sigma_\M$.
\begin{lemma}[Separability]\label{lem:separability}
Let $\M \mathbin{=} (S, A, \tau, \rho, \ell)$ be an SMM and $\pi,\pi' \mathbin{\in} \paths{S}$. Then, $\pi \unsep \pi'$ iff $[ \text{for all } C \in \Sigma_\M, \, \pi \in C \text{ iff } \pi' \in C]$.
\end{lemma}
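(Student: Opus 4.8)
The plan is to prove the two implications separately. Throughout I rely on Lemma~\ref{lem:sigmaEquiv}, which gives $\Sigma_\M = \sigma(\mathcal{T}_\M)$, so that $\Sigma_\M$ is generated by the trace cylinders $\cyl{E_0, R_0, \dots, R_{n-1}, E_n}$ with $E_i \in S/_{\eqlbl}$ and $R_i \in \Sigma_\R$. The crucial observation is that a trace cylinder constrains a timed path only through the $\eqlbl$-class of its states and through Borel conditions on its delays --- exactly the data recorded by $\ell^\omega$ --- which is precisely what $\unsep$ identifies.

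For the forward implication, assume $\pi \unsep \pi'$ and consider the ``good-set'' family $\mathcal{C} = \set{ C \in \Sigma_\M }{ \pi \in C \text{ iff } \pi' \in C }$. I first verify that $\mathcal{C}$ is a $\sigma$-algebra: it contains $\paths{S}$, and closure under complement and countable union is immediate from the logical equivalences ``$\pi \in C^c$ iff $\pi' \in C^c$'' and ``$\pi \in \bigcup_n C_n$ iff $\pi' \in \bigcup_n C_n$''. It then suffices to show $\mathcal{T}_\M \subseteq \mathcal{C}$: for a trace cylinder $T = \cyl{E_0, R_0, \dots, R_{n-1}, E_n}$, membership $\pi \in T$ amounts to $\stateat{\pi}{i} \in E_i$ for $i \leq n$ and $\timeat{\pi}{i} \in R_i$ for $i < n$; since $\pi \unsep \pi'$ gives $\stateat{\pi}{i} \eqlbl \stateat{\pi'}{i}$ and $\timeat{\pi}{i} = \timeat{\pi'}{i}$, and each $E_i$ is an $\eqlbl$-class, these conditions hold for $\pi$ iff they hold for $\pi'$, whence $\pi \in T$ iff $\pi' \in T$. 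As $\mathcal{C}$ is a $\sigma$-algebra containing the generator $\mathcal{T}_\M$ and contained in $\Sigma_\M = \sigma(\mathcal{T}_\M)$, we conclude $\mathcal{C} = \Sigma_\M$.

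For the converse I argue by contraposition: assuming $\pi \sep \pi'$, I exhibit a single $C \in \Sigma_\M$ separating them. By definition of $\unsep$ there is an index $i$ with either $\stateat{\pi}{i} \not\eqlbl \stateat{\pi'}{i}$ or $\timeat{\pi}{i} \neq \timeat{\pi'}{i}$. In the first case put $E = [\stateat{\pi}{i}]_{\eqlbl}$ and take $C = \set{\sigma}{\stateat{\sigma}{i} \in E}$; in the second put $t = \timeat{\pi}{i}$ and take $C = \set{\sigma}{\timeat{\sigma}{i} \in \{t\}}$, using that the singleton $\{t\}$ is Borel. In either case $\pi \in C$ while $\pi' \notin C$. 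The point to check --- and the only mildly delicate step --- is that such a ``one-coordinate'' constraint lies in $\Sigma_\M$, even though $\Sigma_\M$ is merely the coarse $\sigma$-algebra generated by trace cylinders: one writes $C$ as the finite union, taken over all assignments of $\eqlbl$-classes to the remaining (finitely many, since $S$ is finite) state coordinates, of trace cylinders of the appropriate depth with coordinate $i$ constrained to $E$ (resp.\ to $\{t\}$) and every other delay coordinate set to $\R$. This finite union of elements of $\mathcal{T}_\M$ lies in $\sigma(\mathcal{T}_\M) = \Sigma_\M$, completing the argument. I expect this finiteness-driven representability check to be the main (though routine) obstacle, the remainder being the standard generation argument.
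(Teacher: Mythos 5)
Your proof is correct and follows essentially the same route as the paper: both directions reduce to trace cylinders via $\Sigma_\M = \sigma(\mathcal{T}_\M)$, with the converse handled by contraposition and an explicitly constructed separating measurable set. The only differences are cosmetic: in the forward direction you spell out the good-sets ($\sigma$-algebra) argument that the paper leaves implicit after checking generators, and in the converse you separate $\pi$ from $\pi'$ by a finite union of one-coordinate cylinders, whereas the paper uses a single cylinder in $\mathcal{T}_\M$ that pins down the whole prefix of $\pi$ (its $\eqlbl$-classes together with singleton delay sets $\{\timeat{\pi}{j}\}$) --- both constructions are valid.
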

\begin{lemma} \label{lem:congmeasurable}
Let $\M = (S, A, \tau, \rho, \ell)$ be an SMM, then ${\sep} \in \Sigma_\M \otimes \Sigma_\M$.
\end{lemma}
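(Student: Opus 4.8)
The plan is to express $\sep$ as a countable Boolean combination of measurable rectangles drawn from $\Sigma_\M \otimes \Sigma_\M$. Unfolding the definition, $\pi \sep \pi'$ holds exactly when $\pi \not\unsep \pi'$, i.e.\ when $\ell^\omega(\pi) \neq \ell^\omega(\pi')$, which by the stepwise reading means there is some index $i \in \N$ at which either the labels disagree, $\ell(\stateat{\pi}{i}) \neq \ell(\stateat{\pi'}{i})$, or the time delays disagree, $\timeat{\pi}{i} \neq \timeat{\pi'}{i}$. Accordingly I would write
\[
  {\sep} = \bigcup_{i \in \N} \big( D_i^{\mathrm{lbl}} \cup D_i^{\mathrm{time}} \big),
\]
where $D_i^{\mathrm{lbl}}$ collects the pairs whose $i$-th labels differ and $D_i^{\mathrm{time}}$ those whose $i$-th delays differ, and then argue that each $D_i^{\mathrm{lbl}}, D_i^{\mathrm{time}} \in \Sigma_\M \otimes \Sigma_\M$; closure of $\sigma$-algebras under countable union then finishes the proof.

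For the label part the key observation is that, since $S$ is finite, $S/_{\eqlbl}$ is finite, and for each class $C \in S/_{\eqlbl}$ and index $i$ the set $E_{i,C} = \set{\pi \in \paths{S}}{\stateat{\pi}{i} \in C}$ is a finite union of trace cylinders (ranging over the classes occupied at positions $0,\dots,i-1$), hence $E_{i,C} \in \sigma(\mathcal{T}_\M) = \Sigma_\M$ by Lemma~\ref{lem:sigmaEquiv}. Two states carry different labels precisely when they lie in distinct $\eqlbl$-classes, so
\[
  D_i^{\mathrm{lbl}} = \bigcup_{\substack{C, C' \in S/_{\eqlbl} \\ C \neq C'}} E_{i,C} \times E_{i,C'},
\]
a finite union of measurable rectangles, and therefore an element of $\Sigma_\M \otimes \Sigma_\M$.

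The time part is where the real work lies. I would first note that the delay projection $T_i \colon \paths{S} \to \R$, $T_i(\pi) = \timeat{\pi}{i}$, is $\Sigma_\M$-measurable, since $T_i^{-1}(R)$ is again a finite union of trace cylinders for every Borel $R \in \Sigma_\R$. Then $D_i^{\mathrm{time}} = (T_i \times T_i)^{-1}({\neq_\R})$, where ${\neq_\R} = \set{(t,t')}{t \neq t' \in \R}$ is the complement of the diagonal. The crux is that ${\neq_\R} \in \Sigma_\R \otimes \Sigma_\R$: exploiting the density of $\Q$ in $\R$ one writes
\[
  {\neq_\R} = \bigcup_{q \in \Q} \big( (\{t < q\} \times \{t' > q\}) \cup (\{t > q\} \times \{t' < q\}) \big),
\]
a countable union of Borel rectangles. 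Pulling back along the product map $T_i \times T_i$, which is measurable from $\Sigma_\M \otimes \Sigma_\M$ to $\Sigma_\R \otimes \Sigma_\R$, yields $D_i^{\mathrm{time}} \in \Sigma_\M \otimes \Sigma_\M$. The only genuinely delicate point is this measurability of the off-diagonal in the \emph{product} $\sigma$-algebra; it rests essentially on the second countability of $\R$, which is exactly what the rational-separation argument above encodes, so no topological subtlety beyond this is required, and everything else reduces to the finiteness of $S/_{\eqlbl}$.
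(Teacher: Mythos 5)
Your proof is correct and follows essentially the same route as the paper: both decompose $\sep$ as a countable union over positions of a label-discrepancy event and a time-discrepancy event, establish measurability of the label part via finite unions of trace cylinders, and handle the time part by separating two distinct reals with rationals. The only cosmetic difference is that the paper writes the time-discrepancy sets directly as countable unions of measurable rectangles indexed by rational intervals $(t,t')$, whereas you obtain them as the pullback of the off-diagonal set $\neq_{\R}$ along $T_i \times T_i$; unfolding that pullback produces the same kind of rectangle family.
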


The behavioral analogue of the duality between the total variation distance between probability measures and their couplings is generalized in $\Sigma_\M$ as follows.
\begin{lemma} \label{lem:dualityvariationandcoupling}
Let $\M = (S, A, \tau, \rho, \ell)$ be an SMM and $s,s' \in S$, then
\begin{equation*}
  \textstyle
  \sup_{E \in \Sigma_\M} |\Pr{s}(E) - \Pr{s'}(E)| = 
  \min \set{ \omega(\sep) }{ \omega \in \coupling{\Pr{s}}{\Pr{s'}} } \,.
\end{equation*}
\end{lemma}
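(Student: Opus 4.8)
The plan is to prove the two inequalities separately, exploiting the classical coupling duality (as cited, \cite[Theorem~5.2]{LindvallBook}) but adapting it to the relaxed separability relation $\sep$ rather than genuine inequality of timed paths. Throughout, write $\mu = \Pr{s}$ and $\nu = \Pr{s'}$, both probability measures on $(\paths{S}, \Sigma_{\paths{S,2^S}})$, and recall from Lemma~\ref{lem:congmeasurable} that ${\sep} \in \Sigma_\M \otimes \Sigma_\M$, so that $\omega(\sep)$ is well-defined for every $\omega \in \coupling{\mu}{\nu}$.

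For the inequality $\leq$, I would fix an arbitrary coupling $\omega \in \coupling{\mu}{\nu}$ and an arbitrary $E \in \Sigma_\M$, and bound $|\mu(E) - \nu(E)|$ by $\omega(\sep)$. The key observation is that $E$ is $\unsep$-\emph{saturated}: by Lemma~\ref{lem:separability}, if $\pi \unsep \pi'$ then $\pi \in E \iff \pi' \in E$. Hence, using the marginal conditions $\omega(E \times \paths{S}) = \mu(E)$ and $\omega(\paths{S} \times E) = \nu(E)$, I compute
\begin{align*}
  \mu(E) - \nu(E)
  &= \omega(E \times \paths{S}) - \omega(\paths{S} \times E) \\
  &= \omega\big((E \times \paths{S}) \setminus (\paths{S} \times E)\big) - \omega\big((\paths{S} \times E)\setminus(E \times \paths{S})\big) \\
  &\leq \omega\big(E \times (\paths{S} \setminus E)\big) \leq \omega(\sep),
\end{align*}
where the last step uses saturation: any pair $(\pi,\pi')$ with $\pi \in E$, $\pi' \notin E$ must have $\pi \sep \pi'$, so $E \times (\paths{S}\setminus E) \subseteq {\sep}$. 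By symmetry the same bound holds for $\nu(E) - \mu(E)$, giving $|\mu(E)-\nu(E)| \leq \omega(\sep)$. Taking the supremum over $E$ and then the infimum over $\omega$ yields $\sup_E |\mu(E)-\nu(E)| \leq \inf_\omega \omega(\sep)$.

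For the reverse inequality $\geq$, and for the fact that the infimum is attained (a $\min$), I would invoke the standard total variation coupling theorem on a \emph{quotient space}. The idea is to push everything forward along the label map $\ell^\omega \colon \paths{S} \to \paths{2^{AP}}$: since $\pi \unsep \pi'$ exactly when $\ell^\omega(\pi) = \ell^\omega(\pi')$, the relation $\sep$ becomes genuine inequality after applying $\ell^\omega$. Concretely, set $\bar\mu = \mu\#\ell^\omega$ and $\bar\nu = \nu\#\ell^\omega$ on $\paths{2^{AP}}$. By Lemma~\ref{lem:separability} the sets in $\Sigma_\M$ are, up to the pushforward, exactly the measurable sets downstairs, so $\sup_{E \in \Sigma_\M}|\mu(E)-\nu(E)| = \tv{\bar\mu}{\bar\nu}$. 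The classical duality then gives an \emph{optimal} coupling $\bar\omega \in \coupling{\bar\mu}{\bar\nu}$ with $\bar\omega(\neq) = \tv{\bar\mu}{\bar\nu}$, and I would lift $\bar\omega$ back to a coupling $\omega \in \coupling{\mu}{\nu}$ (gluing the conditional distributions of $\mu,\nu$ given the label along $\bar\omega$) satisfying $\omega(\sep) = \bar\omega(\neq)$. This furnishes an $\omega$ achieving the supremum, establishing both $\geq$ and attainment of the minimum.

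The main obstacle is the lifting step: constructing a coupling $\omega$ on $\paths{S}\times\paths{S}$ that projects onto the optimal $\bar\omega$ while keeping $\omega(\sep)$ equal to $\bar\omega(\neq)$ requires a careful disintegration of $\mu$ and $\nu$ along $\ell^\omega$ (a regular conditional probability / measurable selection argument) and verifying that the reassembled measure is genuinely $\sigma$-additive with the right marginals on the generating cylinders. This is where measurability of $\sep$ (Lemma~\ref{lem:congmeasurable}) and the standard-Borel structure of timed-path spaces are essential; the $\leq$ direction, by contrast, is a short saturation argument. One must also confirm the infimum is a minimum, which follows because the classical theorem already delivers an optimal $\bar\omega$ that the lift preserves.
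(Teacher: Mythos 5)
Your proposal is correct, and its lower-bound half is essentially the paper's argument: both derive $|\Pr{s}(E)-\Pr{s'}(E)| \leq \omega(\sep)$ from Lemma~\ref{lem:separability} (every $E \in \Sigma_\M$ is saturated under $\unsep$); you phrase it with set differences, the paper with complementation and sub-additivity, but the computations are interchangeable. The attainment half is where you genuinely diverge. The paper does not quotient: it re-runs Lindvall's construction directly on $\paths{S}$, fixing a representative in each $\unsep$-class, defining $\psi(\pi) = ([\pi]_{\unsep}^*,[\pi]_{\unsep}^*)$ --- measurable w.r.t.\ $\Sigma_\M \otimes \Sigma_\M$ precisely because Lemma~\ref{lem:separability} gives $\psi^{-1}(E \times E') = E \cap E'$ --- placing the common mass $g \wedge g'$ (Radon--Nikodym derivatives w.r.t.\ $\Pr{s}+\Pr{s'}$) on this ``diagonal of representatives'' via pushforward, and coupling the residuals as a normalized product measure. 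You instead push both measures forward along $\ell^\omega$, apply the classical duality on the quotient space, where $\sep$ becomes genuine inequality, and lift the optimal coupling back by disintegration, i.e.\ $\omega = \int \mu_y \times \nu_{y'}\, d\bar\omega(y,y')$ with $\mu_y, \nu_{y'}$ regular conditional probabilities concentrated on fibers. Both routes are sound. Yours uses the classical theorem as a black box, makes the quotient nature of the statement transparent, and (as a bonus) produces a coupling defined on the full product $\sigma$-algebra rather than only on $\Sigma_\M \otimes \Sigma_\M$, which is where the paper constructs its $\omega^*$; the price is the disintegration theorem, which needs the standard Borel structure of $\paths{S}$ (available, since it is a countable product of Polish spaces) and is a heavier tool than anything the paper invokes. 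The paper's route needs only Radon--Nikodym plus a choice of representatives, sidestepping measurable selection entirely because measurability of $\psi$ falls out of separability.

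One repair to your write-up: the identification $\sup_{E\in\Sigma_\M}|\Pr{s}(E)-\Pr{s'}(E)| = \tv{\bar\mu}{\bar\nu}$ does not follow from Lemma~\ref{lem:separability} alone, since a saturated measurable set need not, in general, be the preimage of a measurable set. The correct justification goes through the generators: the trace cylinders generating $\Sigma_\M$ are exactly the $\ell^\omega$-preimages of the cylinders over the finite alphabet $\ell(S)$, and $\sigma$ of preimages equals preimage of $\sigma$, so $\Sigma_\M = (\ell^\omega)^{-1}\bigl(\Sigma_{\paths{\ell(S)}}\bigr)$ and the two suprema range over the same values. With that fixed, and with the a.e.\ fiber-concentration of the conditionals spelled out (it is what yields $\omega(\sep)=\bar\omega(\neq)$), your argument goes through.
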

The characterization given by Lemma~\ref{lem:dualityvariationandcoupling} suggests us to over-approximate $\delta$ by restricting the couplings where to search for the minimal discrepancy.
\newcommand{\C}{\mathcal{C}}
\begin{definition}[Coupling models] \label{def:coupling}
Let $\M = (S, A, \tau, \rho, \ell)$ be an SMM.  A \emph{coupling model for $\M$} is a tuple $\C = (S', A', \tau', \rho', \ell')$ such that 
\begin{enumerate}[leftmargin=*, label={\roman*.}, topsep=0.5ex, itemsep=0.5ex]
  \item $S' = S \times S$ and $A' = (A \times S) \cup (S \times A)$; 
  \item $\tau' \colon S' \setminus A' \to \Distr{S'}$ and for all $s,s' \notin A$, $\tau'(s,s') \in \coupling{\tau(s)}{\tau(s')}$;
  \item $\rho' \colon S' \setminus A' \to \Delta(\R \times \R)$ and for all $s,s' \notin A$, $\rho'(s,s') \in \coupling{\rho(s)}{\rho(s')}$;
  \item $\ell' \colon S' \to 2^{AP} \times 2^{AP}$ and for all $s,s' \in S$, $\ell'(s,s') = (\ell(s),\ell(s'))$.
\end{enumerate}
\end{definition}
A coupling model can be seen as a probabilistic pairing of two copies of $\M$ running synchronously.
The runs generated by a coupling model are \emph{coupled timed paths} of the form $(s_0,s'_0),(t_0,t'_0),(s_1,s'_1),(t_1,t'_1),\dots$, i.e., infinite alternating sequences over $(S \times S)$ and $(\R \times \R)$. For the sake of readability, we denote by $\pair{\pi}{\pi'}$ a coupled timed path pairing $\pi = s_0,t_0,s_1,t_1,\dots$ and $\pi' = s'_0,t'_0,s'_1,t'_1,\dots$; and by $\pathsC{S}$ the set of coupled timed paths over $S$.

The measurable space $\pathsC{S,2^S}$ of coupled timed paths over $S$ has $\pathsC{S}$ as underlying set and $\sigma$-algebra $\Sigma_{\pathsC{S}} = \sigma(\cyl{2^{S {\times} S}, \Sigma_{\R} \otimes \Sigma_{\R}})$. Couplings induce probability measures over the space of coupled timed paths as follows.
\begin{definition} \label{def:SMMprob}
Let $\M = (S, A, \tau, \rho, \ell)$ be an SMM, $\C = (S', A', \tau', \rho', \ell')$ be a coupling for it, and $s' \in S'$. The probability measure $\Pr[\C]{s'}$ on the measurable space of coupled timed paths over $S$ is the unique measure such that, for all $n \in \N$, $s'_i \in S'$ and $R'_i \in \Sigma_{\R} \otimes \Sigma_{\R}$, $i=0..n+1$
\begin{align*}
  \Pr[\C]{s'}(\cyl{s'_0}) &= \chi_{\{s'\}}(s'_0) \,,
  \\
  \Pr[\C]{s'}(\cyl{s'_0, R'_0,\dots,s'_n, R'_n, s'_{n+1}}) &= \Pr[\C]{s'}(\cyl{s'_0, R'_0, \dots, s'_n}) \cdot P'(s'_n, R'_n, s'_{n+1}) \,,
\end{align*}
where, for $u',v' \in S'$ and $R' \in \Sigma_{\R} \otimes \Sigma_{\R}$, $P'(u', R', v') = \rho'(u')(R') \cdot \tau'(u')(v')$, if $u' \notin A'$, and $P(u', R', v') = 0$ otherwise.
\end{definition}
The existence and the uniqueness of this measure follows from Hahn-Kolmogorov extension theorem \cite{LindvallBook}.

The name ``coupling model'' is justified by the fact that any measure $\Pr[\C]{s,s'}$ corresponds to a coupling for $(\Pr{s},\Pr{s'})$. This correspondence is obtained as a push forward w.r.t.\ the measurable function $\eta \colon \pathsC{S} \to \paths{S} \times \paths{S}$ that assigns to a coupled timed path $\pair{\pi}{\pi'}$ its (decoupled) pair $(\pi,\pi')$.
\begin{lemma} \label{lem:couplCorrespondence}
Let $\M = (S, A, \tau, \rho, \ell)$ be an SMM, $\C$ an arbitrary coupling model for $\M$ and $s, s' \in S$, then
$\Pr[\C]{s,s'}\#\eta \in \coupling{\Pr{s}}{\Pr{s'}}$.
\end{lemma}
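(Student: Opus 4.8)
The plan is to show that the push-forward of $\Pr[\C]{s,s'}$ along the decoupling map $\eta$ has the correct left and right marginals, namely $\Pr{s}$ and $\Pr{s'}$. Since $\coupling{\Pr{s}}{\Pr{s'}}$ is by definition the set of measures $\omega$ on $\paths{S,2^S} \otimes \paths{S,2^S}$ satisfying $\omega(E \times \paths{S}) = \Pr{s}(E)$ and $\omega(\paths{S} \times F) = \Pr{s'}(F)$ for all measurable $E, F$, this is exactly what must be verified. First I would observe that $\Pr[\C]{s,s'}\#\eta$ is a well-defined measure on the product space: $\eta$ sends a coupled path $\pair{\pi}{\pi'}$ to $(\pi,\pi')$ and is measurable (it acts coordinate-wise, and preimages of product cylinders are coupled cylinders), so the push-forward along Definition of push forward is legitimate.

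\textbf{Reduction to cylinders.} Because both $\Pr{s}$ and the candidate marginal $(\Pr[\C]{s,s'}\#\eta)(\,\cdot\,\times\paths{S})$ are probability measures on $\paths{S,2^S}$, and the measurable cylinders $\cyl{S_0,R_0,\dots,R_{n-1},S_n}$ form a generating field for $\sigmapaths{S,2^S}$, it suffices by the uniqueness part of the Hahn--Kolmogorov extension theorem to check the marginal identity on cylinders. Fix a cylinder $C = \cyl{S_0,R_0,\dots,R_{n-1},S_n}$. Then
\begin{equation*}
  (\Pr[\C]{s,s'}\#\eta)(C \times \paths{S}) = \Pr[\C]{s,s'}\big(\eta^{-1}(C \times \paths{S})\big),
\end{equation*}
and $\eta^{-1}(C \times \paths{S})$ is the coupled cylinder that constrains only the first coordinate to lie in $C$ while leaving the second coordinate free, i.e.\ $\cyl{S_0 \times S,\, R_0 \times \R,\dots, S_n \times S}$ unioned appropriately over all second-coordinate behavior.

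\textbf{The main computation.} The heart of the argument is then an induction on the cylinder length $n$, comparing the recursive definition of $\Pr[\C]{s,s'}$ (Definition~\ref{def:SMMprob}, coupling version) with that of $\Pr{s}$ (Definition~\ref{def:SMMprob}). The base case reduces to $\chi_{\{(s,s')\}}$ projecting to $\chi_{\{s\}}$. In the inductive step the one-step coupled kernel is $P'((u,u'), R'_n, (v,v')) = \rho'(u,u')(R'_n)\cdot\tau'(u,u')(v,v')$; summing/integrating out the second coordinate means evaluating $\rho'(u,u')$ on $R_n \times \R$ and $\tau'(u,u')$ on $S \times \{v\}$-type sets. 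Here the crucial point is that Definition~\ref{def:coupling} forces $\tau'(u,u') \in \coupling{\tau(u)}{\tau(u')}$ and $\rho'(u,u') \in \coupling{\rho(u)}{\rho(u')}$, so the left-marginal conditions give exactly $\rho'(u,u')(R_n\times\R)=\rho(u)(R_n)$ and $\tau'(u,u')(\{v\}\times S)=\tau(u)(v)$, recovering the one-step kernel $P(u,R_n,v)$ of $\M$. The main obstacle is bookkeeping: one must handle the absorbing-state convention ($A' = (A\times S)\cup(S\times A)$ versus $A$) so that the marginalized coupled kernel matches $\Pr{s}$ even when one copy is absorbing but the other is not, and one must carefully justify interchanging the marginalization with the countable extension from cylinders to all of $\Sigma_\M\otimes\Sigma_\M$. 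By symmetry the identical argument yields the right marginal $\Pr{s'}$, completing the proof.
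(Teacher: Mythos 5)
Your route coincides with the paper's own proof: establish measurability of $\eta$, reduce the two marginal identities to measurable cylinders (your appeal to the uniqueness part of the Hahn--Kolmogorov extension makes explicit what the paper leaves implicit, and is legitimate since the measurable cylinders form a generating field), and verify the identities by induction on cylinder rank, using in the inductive step exactly the one-step marginal conditions $\rho'(u,u')(R \times \R) = \rho(u)(R)$ and $\tau'(u,u')(\{v\} \times S) = \tau(u)(v)$ from Definition~\ref{def:coupling}. Up to that point the two arguments are essentially identical.

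The problem is the step you set aside as ``bookkeeping'': it is not bookkeeping, it is the crux, and as described it fails. For a pair $(u,u')$ with $u \notin A$ and $u' \in A$ there is nothing to marginalize: such pairs lie in $A' = (A \times S) \cup (S \times A)$, so $\tau'$ and $\rho'$ are not even defined on them and the coupled kernel $P'((u,u'),\cdot,\cdot)$ is identically $0$, whereas the kernel $P(u,\cdot,\cdot)$ of $\M$ is not. Hence any coupled mass landing on $(S \setminus A) \times A$ is frozen by $\C$, and the left marginal of $\Pr[\C]{s,s'}\#\eta$ loses precisely the mass that $\Pr{s}$ assigns to the continuations of the first coordinate; no condition in Definition~\ref{def:coupling} rules this out, since couplings are constrained only through their marginals. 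Concretely: take $S = \{a,b,c\}$, $A = \{c\}$, all labels equal, $\tau(a)(b) = \tau(a)(c) = 1/2$, $\tau(b)(b) = 1$, arbitrary exit-time distributions, and let $\C$ use the product couplings $\tau'(u,u') = \tau(u) \times \tau(u')$ and $\rho'(u,u') = \rho(u) \times \rho(u')$. From $(a,a)$ the coupled chain reaches $(b,c) \in A'$ with probability $1/4$ and stops there, so $(\Pr[\C]{a,a}\#\eta)(\cyl{a,\R,b,\R,b} \times \paths{S}) = 1/4$ while $\Pr{a}(\cyl{a,\R,b,\R,b}) = 1/2$: the marginal identity fails on a cylinder, and with it your inductive step for the case where only the second copy is absorbed. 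The argument is sound exactly when no coupled mass can reach $(S \setminus A) \times A$, e.g.\ when $A = \emptyset$. For what it is worth, the paper's own proof shares this blind spot---its inductive step sums only over $u \notin A$ and silently discards the mass on mixed pairs---so your blind reconstruction is faithful to it; but neither your sketch nor the paper closes this case, and with the definitions taken literally it cannot be closed by rearranging the computation, only by changing the definitions (e.g.\ letting the coupled model continue to move the non-absorbed coordinate).
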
 
Let $\M = (S, A, \tau, \rho, \ell)$ be an SMM, we define $\vartheta^\M \colon S \times S \to [0,1]$ as
\begin{equation}
\vartheta^\M(s,s') = \min \set{(\Pr[\C]{s,s'}\#\eta)(\sep)}{\text{$\C$ coupling for $\M$}} \,.
\end{equation} 
By Lemmas~\ref{lem:couplCorrespondence} and \ref{lem:dualityvariationandcoupling} we obtain that $\vartheta^\M$ is an over-approximation of $\delta^\M$.
\begin{theorem}\label{th:CouplingCharacterization}
Let  $s,s'$ be states of the SMM $\M$. Then, $\dist^\M(s,s') \leq \vartheta^\M(s,s')$.
\end{theorem}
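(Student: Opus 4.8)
The plan is to deduce the inequality directly from the coupling duality of Lemma~\ref{lem:dualityvariationandcoupling} together with the correspondence of Lemma~\ref{lem:couplCorrespondence}, using only the elementary fact that the infimum of a functional taken over a sub-collection can never fall below its infimum over the whole collection. Informally, $\dist^\M(s,s')$ is the \emph{minimal} discrepancy $\omega(\sep)$ ranging over \emph{all} couplings of the path measures $\Pr{s}$ and $\Pr{s'}$, whereas $\vartheta^\M(s,s')$ minimizes the very same discrepancy functional over the (in general smaller) family of couplings that are realized by coupling models of $\M$. So the whole proof amounts to exhibiting the coupling-model family as a sub-collection of $\coupling{\Pr{s}}{\Pr{s'}}$.

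First I would recall that, by Theorem~\ref{thm:equiDist}, $\dist^\M(s,s') = \sup_{E \in \Sigma_\M} |\Pr{s}(E) - \Pr{s'}(E)|$, and then apply Lemma~\ref{lem:dualityvariationandcoupling} to rewrite this supremum in its coupling form, obtaining $\dist^\M(s,s') = \min\set{\omega(\sep)}{\omega \in \coupling{\Pr{s}}{\Pr{s'}}}$. Here the evaluation $\omega(\sep)$ is legitimate because $\sep \in \Sigma_\M \otimes \Sigma_\M$ by Lemma~\ref{lem:congmeasurable}, so $\sep$ is a measurable subset on which every coupling $\omega$ assigns a value.

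Next I would invoke Lemma~\ref{lem:couplCorrespondence}: for every coupling model $\C$ for $\M$, the push forward $\Pr[\C]{s,s'}\#\eta$ is an element of $\coupling{\Pr{s}}{\Pr{s'}}$. Hence $\set{\Pr[\C]{s,s'}\#\eta}{\text{$\C$ coupling for $\M$}} \subseteq \coupling{\Pr{s}}{\Pr{s'}}$. Since the discrepancy functional being minimized is literally the same on both sides — by definition of push forward, $(\Pr[\C]{s,s'}\#\eta)(\sep) = \Pr[\C]{s,s'}(\eta^{-1}(\sep))$, which is exactly the value entering the definition of $\vartheta^\M$ — minimizing over the sub-collection can only produce a value that is at least as large. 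This yields $\vartheta^\M(s,s') \geq \min\set{\omega(\sep)}{\omega \in \coupling{\Pr{s}}{\Pr{s'}}} = \dist^\M(s,s')$, which is the claim.

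The argument is essentially immediate once the two lemmas are in place, so there is no genuinely hard step; the only points that demand care are bookkeeping ones. I would make sure the discrepancy functional is read off the same measurable set $\sep$ on both sides (guaranteed by Lemma~\ref{lem:congmeasurable} and by the definition of the push forward as $\mu \circ f^{-1}$), and I would note that attainment of the minimum on the $\vartheta^\M$ side is not actually needed for the inequality: even interpreting $\vartheta^\M$ as an \emph{infimum} over coupling models, the set inclusion $\set{\Pr[\C]{s,s'}\#\eta}{\C} \subseteq \coupling{\Pr{s}}{\Pr{s'}}$ already forces $\dist^\M(s,s') \leq \vartheta^\M(s,s')$.
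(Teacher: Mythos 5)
Your proposal is correct and matches the paper's own argument exactly: the paper likewise derives Theorem~\ref{th:CouplingCharacterization} immediately from Lemma~\ref{lem:dualityvariationandcoupling} (the coupling duality for $\sup_{E \in \Sigma_\M} |\Pr{s}(E) - \Pr{s'}(E)|$) together with Lemma~\ref{lem:couplCorrespondence} (every $\Pr[\C]{s,s'}\#\eta$ lies in $\coupling{\Pr{s}}{\Pr{s'}}$), so that minimizing the same discrepancy functional over this sub-collection of couplings can only give a larger value. Your additional bookkeeping remarks (measurability of $\sep$ via Lemma~\ref{lem:congmeasurable}, and that attainment of the minimum is not needed) are sound and only make the one-line argument more explicit.
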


\subsubsection{Fixed point characterization.} \label{sec:fixpoint}
In this section we give a fixed point characterization of $\vartheta^\M$ based on the Kantorovich metric and the  total variation distance.

Given a coupling model $\C$ for the SMM $\M$, and a pair of states $s,s'$ of $\M$, the discrepancy between $s$ and $s'$ on $\C$, namely $(\Pr[\C]{s,s'}\#\eta)(\sep)$, can be interpreted as a reachability probability. Indeed, for any pair of timed paths $\pi,\pi' \in \paths{S}$, that starts in $s$ and $s'$ respectively, the probability associated with the event $\pi \sep \pi'$ corresponds to the probability of reaching a point $k \in \N$, where $\stateat{\pi}{k} \not\equiv \stateat{\pi'}{k}$ or $\timeat{\pi}{k} \neq \timeat{\pi'}{k}$. In the spirit of the fixed point characterization of reachability probabilities in the case of MCs~\cite[Theorem 10.15]{BaierK08}, we give a fixed point characterization of the discrepancy between states.

Consider the set of $[0,1]$-valued functions on $S \times S$, denoted by $\mDom$, endowed with the partial order $\sqsubseteq$ defined by $d \sqsubseteq d'$ iff $d(s,s') \leq d'(s,s')$ for all $s,s' \in S$. This forms a complete lattice with bottom ${\bf 0}$ and top ${\bf 1}$, defined as ${\bf 0}(s,s') = 0$ and ${\bf 1}(s,s') = 1$, for all $s,s' \in S$. For $D \subseteq \mDom$, the least upper bound $\bigsqcup D$, and greatest lower bound $\bigsqcap D$ are, respectively, given as $(\bigsqcup D)(s,t) = \sup_{d \in D} d(s,t)$ and $(\bigsqcap D)(s,t) = \inf_{d \in D} d(s,t)$, for all $s,t \in S$.

Consider an SMM $\M = (S, A, \tau, \rho, \ell)$ and a coupling $\C = (S', A', \tau', \rho', \ell')$ for $\M$. We define $\Gamma^\C \colon \mDom \to \mDom$ for $d \colon S \times S \to [0,1]$ by
\begin{equation*}
\Gamma^\C(d)(s,s') =
\begin{cases}
     1 & \text{if $s \not\equiv s'$} \\
     0 & \text{if $s \equiv s'$, and $s,s' \in A$} \\
     \alpha + (1 - \alpha) \displaystyle \sum_{u,v \in S} d(u,v) \, \tau'(s,s')(u,v) 
     &\begin{aligned}[t]
      & \text{otherwise} \\  
      & \text{for $\alpha = \rho'(s,s')(\neq_{\R})$}
     \end{aligned}
\end{cases}
\end{equation*}
where $\neq_{\R}$ denotes the measurable set $\{(x,y) \in \R^2 \mid x \neq y\} \in \Sigma_{\R} \otimes \Sigma_{\R}$.

One can easily verify that $\Gamma^\C$ is well-defined and order preserving, so that, by Tarski's fixed point theorem, $\Gamma^\C$ admits a least fixed point, denoted by $\discr{C}$.

The lemma below states that $\discr{C}$ corresponds to  the discrepancy w.r.t.\ $\C$.
\begin{lemma} \label{lemma:CouplingReachability}
Let $\mathcal{C}$ be a coupling for an SMM $\M = (S, A, \tau, \rho, \ell)$, and $s,s' \in S$. Then, $(\Pr[\C]{s,s'}\#\eta)(\sep) = \discr{C}(s,s')$.
\end{lemma}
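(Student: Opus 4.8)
The plan is to read $g(s,s') := (\Pr[\C]{s,s'}\#\eta)(\sep)$ as a reachability probability and to identify it with the least fixed point $\discr{C}$ in one stroke, through a finite-horizon approximation, rather than by separately proving the fixed-point equation and its minimality. First I would record the measurability needed for $g$ to be well-defined: $\sep \in \Sigma_\M \otimes \Sigma_\M$ by Lemma~\ref{lem:congmeasurable}, and $\eta$ is measurable, so $\eta^{-1}(\sep) = \set{\pair\pi{\pi'}}{\pi\sep\pi'}$ is a measurable subset of $\pathsC{S}$. Unfolding $\pi\sep\pi'$ via $\ell^\omega(\pi)\neq\ell^\omega(\pi')$, this event says that $\ell(\stateat\pi i)\neq\ell(\stateat{\pi'}i)$ or $\timeat\pi i\neq\timeat{\pi'}i$ for some index $i$; I would stratify it by the first index at which a mismatch is witnessed, writing $\eta^{-1}(\sep)=\bigcup_{n}\mathrm{Sep}^{(n)}$ as an increasing union, where $\mathrm{Sep}^{(n)}$ collects the coupled paths already separated within their first $n$ transitions (a union of measurable cylinders).

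The core step is a one-step (Markov) unfolding of $\Pr[\C]{s,s'}$. From the defining recursion of $\Pr[\C]{\cdot}$ (Definition~\ref{def:SMMprob}) the coupled measure factorizes as $\rho'(s,s')(R')\cdot\tau'(s,s')(u,v)$ on one-step extensions, so the first time-pair and the first next-state are independent given $(s,s')$, and the law of the tail $\tail{\pair\pi{\pi'}}1$ conditioned on the first next-state being $(u,v)$ is exactly $\Pr[\C]{u,v}$. Using this I would prove by induction on $n$ that $(\Pr[\C]{s,s'}\#\eta)(\mathrm{Sep}^{(n)}) = (\Gamma^\C)^n(\mathbf{0})(s,s')$. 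The base case $n=0$ pairs $\emptyset$ with $\mathbf{0}$. For the step I split on the first transition: if $s\not\equiv s'$ then either the labels already differ at index $0$ or the absorbing status differs, and in both cases separation is certain, matching the value $1$; if $s\equiv s'$ with $s,s'\in A$ no transition is taken and no mismatch can arise, matching $0$; otherwise the labels agree at index $0$, so a first-step separation occurs exactly when the time-pair lands in $\neq_{\R}$ (probability $\alpha=\rho'(s,s')(\neq_{\R})$), and conditioned on the complementary event the tail starts in $(u,v)$ with probability $\tau'(s,s')(u,v)$ and contributes $(\Gamma^\C)^{n-1}(\mathbf{0})(u,v)$ by the induction hypothesis, reproducing $\alpha+(1-\alpha)\sum_{u,v}(\Gamma^\C)^{n-1}(\mathbf{0})(u,v)\,\tau'(s,s')(u,v)$.

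Finally I would pass to the limit. Since $\mathrm{Sep}^{(n)}\uparrow\eta^{-1}(\sep)$, continuity from below of $\Pr[\C]{s,s'}\#\eta$ gives $g(s,s')=\lim_n(\Gamma^\C)^n(\mathbf{0})(s,s')$. Because $\Gamma^\C$ is affine with finite sums (the state space $S$ is finite) it is continuous on the complete lattice $\mDom$, so Kleene's theorem yields $\bigsqcup_n(\Gamma^\C)^n(\mathbf{0})=\discr{C}$; as the iterates form an increasing chain their supremum is the pointwise limit, whence $g=\discr{C}$, as required.

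The step I expect to be the main obstacle is the one-step unfolding together with the treatment of absorbing states: making the conditioning on the first transition rigorous at the level of the push-forward measure (so that the tail genuinely re-distributes as $\Pr[\C]{u,v}\#\eta$ and the separation event restricts to tail-separation), and justifying that an absorbing/non-absorbing mismatch forces separation with probability one while a matched absorbing pair contributes nothing. This is precisely the point where the $\equiv$ appearing in $\Gamma^\C$ must be reconciled with the $\eqlbl$ underlying $\sep$.
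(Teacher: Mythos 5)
Your proposal is correct in substance but takes a genuinely different route from the paper's proof. The paper works ``top--down'': it reads $(\Pr[\C]{s,s'}\#\eta)(\sep)$ as a reachability-style probability, checks that it satisfies the three defining clauses of $\Gamma^{\C}$ (value $1$ when $s \not\equiv s'$, value $0$ for equivalent absorbing pairs, the one-step equation otherwise), and then identifies it with $\discr{C}$ by implicit appeal to the fixed-point characterization of reachability probabilities in Markov chains. You work ``bottom--up'': stratify the separation event by the first index at which a mismatch occurs, prove by induction that the $n$-th stratum has probability $(\Gamma^{\C})^n(\mathbf{0})(s,s')$, and conclude by continuity from below plus Kleene iteration. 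Your route buys exactly the point the paper leaves implicit, namely \emph{minimality}: satisfying the equations together with the two boundary clauses does not by itself pin down the least fixed point. For instance, for a non-absorbing pair $(u,v)$ with $u \equiv v$, $\tau'(u,v)(u,v)=1$ and $\rho'(u,v)(\neq_{\R})=0$, the clause reduces to $d(u,v)=d(u,v)$, which every value satisfies, so the paper's stated conditions alone cannot distinguish the separation probability from any other solution; only a least-solution (first-passage/Kleene) argument closes this, and your proof supplies it directly, at the cost of the cylinder-level bookkeeping in the induction. Finally, the obstacle you flag --- reconciling the $\equiv$ in $\Gamma^{\C}$ with the $\eqlbl$ underlying $\sep$ when labels agree but absorbing status differs --- is real, but it is a defect of the paper's framework rather than of your argument: with infinite timed paths and no transitions out of absorbing states, no mass can be placed on $\sep$ from such a pair, whereas $\Gamma^{\C}$ assigns it $1$; the paper's own proof hides this behind ``clearly $p_{s,s'}=1$ since $(s,s')\in B$'', so your proof is no worse off here and is honest about where the difficulty lies.
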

As an immediate corollary we obtain the following characterization of $\vartheta^\M$.
\begin{corollary}\label{cor:modelstheta}
Let $\mathcal{M}$ be an SMM.
Then, $\vartheta^\M = \min \set{\discr{C}}{\text{$\mathcal{C}$ coupling for $\mathcal{M}$}}$.
\end{corollary}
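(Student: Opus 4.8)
The plan is to obtain the corollary as a direct rewriting of the defining equation~(1) for $\vartheta^\M$ through Lemma~\ref{lemma:CouplingReachability}. Fixing a coupling model $\C$, Lemma~\ref{lemma:CouplingReachability} says that the map $(s,s') \mapsto (\Pr[\C]{s,s'}\#\eta)(\sep)$ is precisely the function $\discr{C} \in \mDom$. Substituting this identity into~(1) at every pair simultaneously gives, for all $s,s'$,
\begin{equation*}
  \vartheta^\M(s,s') = \min \set{(\Pr[\C]{s,s'}\#\eta)(\sep)}{\text{$\C$ coupling for $\M$}} = \min_{\C} \discr{C}(s,s') \,.
\end{equation*}
Reading the right-hand side pointwise, the function $\vartheta^\M$ coincides with the greatest lower bound $\bigsqcap \set{\discr{C}}{\text{$\C$ coupling for $\M$}}$ in the complete lattice $(\mDom, \sqsubseteq)$. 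This already yields the equality of the corollary, and at this level nothing beyond the definition of $\vartheta^\M$ and Lemma~\ref{lemma:CouplingReachability} is needed.

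The only delicate point is that the corollary writes $\min$, not merely an infimum: it asserts that a single coupling $\C^\ast$ realises $\discr{C^\ast} = \vartheta^\M$ at all pairs at once. I would prove this with the fixed-point machinery already set up. The structural fact to exploit is that $\Gamma^\C(d)(s,s')$ depends on $\C$ only through the local choices $\rho'(s,s') \in \coupling{\rho(s)}{\rho(s')}$ and $\tau'(s,s') \in \coupling{\tau(s)}{\tau(s')}$, and that these are selected independently at distinct pairs. Hence one can define the pointwise-minimised operator $\Gamma^\ast \colon \mDom \to \mDom$ that, in the non-trivial case, sets
\begin{equation*}
  \Gamma^\ast(d)(s,s') = \min_{\omega_\rho,\, \omega_\tau} \Big[\, \omega_\rho(\neq_{\R}) + \big(1 - \omega_\rho(\neq_{\R})\big) \textstyle\sum_{u,v \in S} d(u,v)\, \omega_\tau(u,v) \,\Big] \,,
\end{equation*}
with $\omega_\rho \in \coupling{\rho(s)}{\rho(s')}$, $\omega_\tau \in \coupling{\tau(s)}{\tau(s')}$. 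The bracketed expression is non-decreasing both in $\omega_\rho(\neq_{\R})$ and in the inner sum, so over the product of the two coupling sets the minimisation decouples and is attained: the $\tau$-part is a linear program over the compact polytope of couplings, while the $\rho$-part attains $\tv{\rho(s)}{\rho(s')}$ by the coupling duality for the total variation distance.

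It then remains to glue the local optima. $\Gamma^\ast$ is order preserving, hence has a least fixed point $\gamma^\ast$ by Tarski. Since $\Gamma^\ast(d) \sqsubseteq \Gamma^\C(d)$ for every $\C$, monotonicity of the least-fixed-point operator gives $\gamma^\ast \sqsubseteq \discr{C}$ for all $\C$, whence $\gamma^\ast \sqsubseteq \bigsqcap_{\C} \discr{C} = \vartheta^\M$. Conversely, selecting at each pair the local minimisers $\rho^\ast(s,s'), \tau^\ast(s,s')$ realising $\Gamma^\ast(\gamma^\ast)(s,s') = \gamma^\ast(s,s')$ assembles a single coupling $\C^\ast$ with $\Gamma^{\C^\ast}(\gamma^\ast) = \Gamma^\ast(\gamma^\ast) = \gamma^\ast$; thus $\gamma^\ast$ is a fixed point of $\Gamma^{\C^\ast}$ and $\discr{C^\ast} \sqsubseteq \gamma^\ast$. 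Combined with $\gamma^\ast \sqsubseteq \discr{C^\ast}$ this gives $\discr{C^\ast} = \gamma^\ast = \vartheta^\M$, so the minimum is attained by $\C^\ast$. I expect the main obstacle to be exactly this gluing step: verifying that the local minimisers exist (attainment/compactness) and, above all, that piecing them together pairwise yields a bona fide coupling model whose induced operator fixes $\gamma^\ast$. Everything else is the immediate substitution of Lemma~\ref{lemma:CouplingReachability} into~(1).
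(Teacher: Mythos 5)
Your proposal is correct, and its first paragraph coincides exactly with the paper's proof: the paper reads the corollary pointwise and obtains it by the immediate substitution of Lemma~\ref{lemma:CouplingReachability} into the defining equation~(1) of $\vartheta^\M$ --- nothing more is said there. Where you go beyond the paper is in taking the $\min$ over $\set{\discr{C}}{\text{$\C$ coupling for $\M$}}$ seriously as a minimum \emph{element} of the lattice $(\mDom, \sqsubseteq)$, i.e.\ demanding a single coupling $\C^\ast$ with $\discr{C^\ast} \sqsubseteq \discr{C}$ for every $\C$. The paper does not settle this inside the corollary; it settles it two steps later, in Lemma~\ref{lem:post-fixed} and Lemma~\ref{lem:coupling-dist}, and your argument is precisely that development in disguise: your pointwise-minimised operator $\Gamma^\ast$ is literally the operator $F^\M$ (the minimisation of $\alpha + (1-\alpha)\beta$ over the product of the two coupling sets decouples by monotonicity in each argument, the $\rho$-part attaining $\tv{\rho(s)}{\rho(s')}$ by coupling duality and the $\tau$-part attaining $\mathcal{K}_d(\tau(s),\tau(s'))$ at a vertex of the transportation polytope); your least fixed point $\gamma^\ast$ is $f^\M$; your comparison $\gamma^\ast \sqsubseteq \discr{C}$ via monotonicity of least fixed points is the content of Lemma~\ref{lem:post-fixed}; and your glued coupling $\C^\ast$ is exactly the coupling $\mathcal{D}$ constructed in the proof of Lemma~\ref{lem:coupling-dist}. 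Your worry about the gluing step is resolved for the reason you yourself give: Definition~\ref{def:coupling} constrains $\tau'(s,s')$ and $\rho'(s,s')$ only pairwise and independently, so assembling local optima always yields a bona fide coupling model, and attainment of the local optima is guaranteed by \cite[Theorem~5.2]{LindvallBook} for the $\rho$-part and by compactness of the transportation polytope for the $\tau$-part. In short: paper's corollary $=$ your first paragraph; paper's Lemmas~\ref{lem:post-fixed} and~\ref{lem:coupling-dist} $=$ your remaining paragraphs. What your organisation buys is that the corollary holds under the stronger set-minimum reading with no forward references; what the paper's organisation buys is that the corollary stays a one-line observation and the fixed-point machinery is introduced only once, where it simultaneously identifies $\vartheta^\M$ with $f^\M$.
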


In \cite{ChenBW12,BacciLM:tacas13} it is given a coupling-based characterization of the bisimilarity distance of Desharnais et al.~\cite{DesharnaisGJP04} for MCs, starting from the fixed point characterization by van Breugel et al.~\cite{BreugelSW08}. 
Here we follow the same idea but backwards: we start from Corollary~\ref{cor:modelstheta} to give a fixed point characterization of $\vartheta^\M$. 

The following fixed point operator will do the job.

For $\M = (S, A, \tau, \rho, \ell)$ a SMM, the operator $F^{\M} \colon \mDom \to \mDom$, 
for $d \colon S \times S \to [0,1]$ and $s,s' \in S$, is defined by:
\begin{equation*}
    F^{\M}(d)(s,s') = 
    \begin{cases}
        1 &\text{if $s \not\equiv s'$} \\
    	0 &\text{if $s \equiv s'$ and $s,s' \in A$} \\
	\alpha + (1 - \alpha) \, \mathcal{K}_d(\tau(s), \tau(s'))
     	&\begin{aligned}[t]
      	& \text{otherwise} \\  
      	& \text{for $\alpha = \tv{\rho(s)}{\rho(s')}$}
    	 \end{aligned}
    \end{cases}
\end{equation*}
where, for arbitrary $d \colon S \times S \to [0,1]$ and $\mu,\nu \in \Distr{S}$, $\mathcal{K}_d$ is the Kantorovich pseudometric that equals to $\mathcal{K}_d(\mu, \nu) = \min_{\omega \in \coupling{\mu}{\nu}} \sum_{u, v \in S} \omega(u,v) \cdot d(u,v)$ in the finite discrete case\footnote{
Since $S$ is finite, $\coupling{\mu}{\nu}$ describes a \emph{bounded} transportation polytope, hence the minimum in the definition of $\mathcal{K}_d(\mu, \nu)$ exists and can be achieved at some vertex.}.

$F^\M$ is easily seen to be monotonic, thus,
by Tarski's fixed point theorem, it has a least fixed point, that we denote by $f^\M$.

The next lemma states that the discrepancy measured w.r.t.\ arbitrary couplings $\C$ for $\M$ is an over-approximation of $f^\M$.
\begin{lemma} \label{lem:post-fixed}
Let $\mathcal{C}$ be a coupling for an SMM $\mathcal{M}$. If $d = \Gamma^{\C}(d)$, then $f^\M \sqsubseteq d$.
\end{lemma}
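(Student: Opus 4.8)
The plan is to show that any fixed point $d$ of $\Gamma^\C$ is a \emph{pre-fixed point} of $F^\M$, i.e.\ $F^\M(d) \sqsubseteq d$; since $f^\M$ is the \emph{least} fixed point of $F^\M$, and hence (by the Knaster--Tarski theorem) the least pre-fixed point, this immediately yields $f^\M \sqsubseteq d$. So the whole task reduces to the pointwise inequality $F^\M(d)(s,s') \leq d(s,s')$ for every pair $s,s' \in S$, under the hypothesis $d = \Gamma^\C(d)$.

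First I would dispose of the two easy cases in the definition of $F^\M$. If $s \not\equiv s'$, then $F^\M(d)(s,s') = 1$; but the matching clause of $\Gamma^\C$ gives $d(s,s') = \Gamma^\C(d)(s,s') = 1$ as well, so the inequality holds with equality. Likewise, if $s \equiv s'$ and $s,s' \in A$, both operators return $0$, and again equality holds. The content is therefore entirely in the third (``otherwise'') case, where $s \equiv s'$ and at least one of $s,s'$ is non-absorbing; since $s \equiv_A s'$, in fact both are non-absorbing, so $\tau(s),\tau(s'),\rho(s),\rho(s')$ are all defined and $\tau'(s,s'),\rho'(s,s')$ are couplings of the respective marginals.

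The crux is to compare, in this case,
\begin{equation*}
F^\M(d)(s,s') = \beta + (1-\beta)\,\mathcal{K}_d(\tau(s),\tau(s'))
\quad\text{with}\quad
d(s,s') = a + (1-a)\sum_{u,v} d(u,v)\,\tau'(s,s')(u,v),
\end{equation*}
where $\beta = \tv{\rho(s)}{\rho(s')}$ and $a = \rho'(s,s')(\neq_\R)$. Here I would invoke two minimality facts. For the time component, since $\rho'(s,s')$ is a coupling of $\rho(s)$ and $\rho(s')$, Lemma~\ref{lem:dualityvariationandcoupling} (the duality between total variation and couplings, specialized to $\R$ with the identity label map) gives $\beta = \tv{\rho(s)}{\rho(s')} \leq \rho'(s,s')(\neq_\R) = a$. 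For the transition component, since $\tau'(s,s') \in \coupling{\tau(s)}{\tau(s')}$ and $\mathcal{K}_d$ is defined as the \emph{minimum} of $\sum_{u,v} d(u,v)\,\omega(u,v)$ over all couplings $\omega$, we have $\mathcal{K}_d(\tau(s),\tau(s')) \leq \sum_{u,v} d(u,v)\,\tau'(s,s')(u,v)$. Writing $K := \mathcal{K}_d(\tau(s),\tau(s'))$ and $T := \sum_{u,v} d(u,v)\,\tau'(s,s')(u,v)$, the two displayed expressions become $\beta + (1-\beta)K$ and $a + (1-a)T$, with $\beta \leq a$ and $K \leq T$.

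The last step is the elementary monotonicity argument that closes the gap: the map $(x,y)\mapsto x + (1-x)y$ on $[0,1]^2$ is nondecreasing in each argument (its partial derivatives are $1-y \geq 0$ and $1-x \geq 0$). Hence $\beta + (1-\beta)K \leq a + (1-a)T$, i.e.\ $F^\M(d)(s,s') \leq d(s,s')$, as required. The main obstacle I anticipate is the time-component bound $\tv{\rho(s)}{\rho(s')} \leq \rho'(s,s')(\neq_\R)$: one must be careful that $\neq_\R$ is exactly the ``separability'' set for the trivial (identity) labelling on $\R$, so that Lemma~\ref{lem:dualityvariationandcoupling} applies and yields total variation as the \emph{minimal} discrepancy over couplings, of which $\rho'(s,s')$ is merely one instance. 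Everything else is the two minimization inequalities plus the monotonicity of $x+(1-x)y$, so this is the step worth stating carefully.
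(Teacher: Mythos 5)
Your proposal is correct and follows essentially the same route as the paper's proof: reduce the claim to $F^\M(d) \sqsubseteq d$ (so that Tarski's theorem, i.e.\ least pre-fixed point, gives $f^\M \sqsubseteq d$), settle the two degenerate cases by matching the clauses of $F^\M$ and $\Gamma^\C$, and in the remaining case combine $\tv{\rho(s)}{\rho(s')} \leq \rho'(s,s')(\neq_{\R})$ and $\mathcal{K}_d(\tau(s),\tau(s')) \leq \sum_{u,v} d(u,v)\,\tau'(s,s')(u,v)$ with the monotonicity of $(x,y) \mapsto x + (1-x)y$ on $[0,1]^2$. The only cosmetic difference is that the paper derives the first inequality from the classical coupling characterization of total variation (Lindvall's theorem) rather than from Lemma~\ref{lem:dualityvariationandcoupling}, which is stated for SMM path measures; since you only need the easy direction (total variation is a lower bound on the discrepancy of \emph{any} coupling), your argument is sound.
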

Lemma~\ref{lem:post-fixed} and Corollary~\ref{cor:modelstheta} yield the following characterization for $\vartheta^\M$. 
\begin{lemma} \label{lem:coupling-dist}
Let $\mathcal{M}$ be an SMM. Then, $f^\M = \vartheta^\M$.
\end{lemma}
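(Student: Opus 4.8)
We need to show that the least fixed point $f^\M$ of the Kantorovich-based operator $F^\M$ coincides with the coupling-based over-approximation $\vartheta^\M$. By Corollary~\ref{cor:modelstheta} we already know $\vartheta^\M = \min\set{\discr{C}}{\text{$\C$ coupling for $\M$}}$, where each $\discr{C}$ is the least fixed point of the coupling-specific operator $\Gamma^\C$. So the statement reduces to establishing two inequalities: $f^\M \sqsubseteq \vartheta^\M$ and $\vartheta^\M \sqsubseteq f^\M$.

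**The easy direction.** The inequality $f^\M \sqsubseteq \vartheta^\M$ is essentially immediate from Lemma~\ref{lem:post-fixed} combined with Corollary~\ref{cor:modelstheta}. Indeed, for every coupling $\C$ the function $\discr{C}$ is a fixed point of $\Gamma^\C$, so Lemma~\ref{lem:post-fixed} gives $f^\M \sqsubseteq \discr{C}$. Since this holds for all couplings $\C$, we may take the pointwise infimum on the right and conclude $f^\M \sqsubseteq \bigsqcap_\C \discr{C} = \min_\C \discr{C} = \vartheta^\M$, using Corollary~\ref{cor:modelstheta} for the last equality. No further work is needed here.

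**The hard direction.** For $\vartheta^\M \sqsubseteq f^\M$ the plan is to exhibit, for the least fixed point $f^\M$, a single \emph{optimal} coupling $\C^\star$ whose discrepancy $\discr{C^\star}$ is bounded above by $f^\M$; then $\vartheta^\M = \min_\C \discr{C} \sqsubseteq \discr{C^\star} \sqsubseteq f^\M$. The coupling $\C^\star$ is built by realizing, at each non-absorbing pair $(s,s')$ with $s \equiv s'$, the minimizing couplings witnessing the two optimization problems appearing in $F^\M(f^\M)(s,s')$. Concretely: choose $\tau'(s,s') \in \coupling{\tau(s)}{\tau(s')}$ to be an optimal coupling attaining $\mathcal{K}_{f^\M}(\tau(s),\tau(s'))$ (which exists by the transportation-polytope footnote), and choose $\rho'(s,s') \in \coupling{\rho(s)}{\rho(s')}$ to be a \emph{maximal} coupling attaining $\rho'(s,s')(\neq_\R) = \tv{\rho(s)}{\rho(s')}$, whose existence is guaranteed by the duality in Lemma~\ref{lem:dualityvariationandcoupling} (the Lindvall coupling). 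With these choices, $\Gamma^{\C^\star}(f^\M)(s,s')$ evaluates exactly to $\alpha + (1-\alpha)\,\mathcal{K}_{f^\M}(\tau(s),\tau(s'))$ with $\alpha = \tv{\rho(s)}{\rho(s')}$, which is precisely $F^\M(f^\M)(s,s') = f^\M(s,s')$; the boundary cases ($s \not\equiv s'$ and the absorbing case) match by inspection since both operators are defined identically there. Hence $f^\M$ is a fixed point of $\Gamma^{\C^\star}$, and since $\discr{C^\star}$ is the \emph{least} fixed point of $\Gamma^{\C^\star}$, we get $\discr{C^\star} \sqsubseteq f^\M$, as required.

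**Where the difficulty lies.** The routine parts are the boundary-case bookkeeping and the assembly of the two inequalities. The genuine obstacle is verifying that the two pointwise-optimal couplings can be selected \emph{measurably and simultaneously} to form a legitimate coupling model $\C^\star$ in the sense of Definition~\ref{def:coupling} — in particular that the maximal coupling of $\rho(s)$ and $\rho(s')$ on $\R \times \R$ exists as a genuine probability measure realizing the total-variation distance as its off-diagonal mass, and that $\tau'$ and $\rho'$ together determine a well-defined $\Pr[\C^\star]{s,s'}$ via Definition~\ref{def:SMMprob}. The $\tau'$ side is finite-dimensional and harmless; the care is entirely on the $\rho'$ side, where one must invoke the existence of maximal couplings on a general Borel space (Lemma~\ref{lem:dualityvariationandcoupling} furnishes exactly the minimizer $\omega$ with $\omega(\neq_\R) = \tv{\rho(s)}{\rho(s')}$). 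Once $\C^\star$ is known to be a bona fide coupling model, the fixed-point identity $\Gamma^{\C^\star}(f^\M) = f^\M$ follows by direct substitution, and the theorem closes.
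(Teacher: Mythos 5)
Your proof is correct and follows essentially the same route as the paper's: the paper also constructs, for the fixed point $f^\M$, a coupling model whose $\rho'$-components are maximal (Lindvall) couplings realizing $\tv{\rho(s)}{\rho(s')}$ as off-diagonal mass and whose $\tau'$-components are optimal vertices of the transportation polytopes, so that $\Gamma^{\mathcal{C}}(f^\M) = F^\M(f^\M) = f^\M$, and then concludes via Lemma~\ref{lem:post-fixed} and Corollary~\ref{cor:modelstheta}. The only slips are cosmetic: the existence of the maximal coupling for $\rho(s),\rho(s')$ should be credited to Lindvall's duality theorem (as the paper does), not to Lemma~\ref{lem:dualityvariationandcoupling}, which is stated only for the path measures $\Pr{s},\Pr{s'}$ and the separability relation on timed paths; and since $S$ is finite there is no measurable-selection difficulty in assembling the coupling model---fixing one coupling per pair of states suffices.
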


By Lemma~\ref{lem:coupling-dist} we can finally prove that $\vartheta^\M$ is a pseudometric for SMMs. Furthermore, we show that $\vartheta^\M$ is a bisimilarity pseudometric in the sense of~\cite{FernsPP04}.

\begin{theorem} \label{th:fixedbisimdist}
Let $\M = (S, A, \tau, \rho, \ell)$ be an SMM.
\begin{enumerate}[topsep=0.5ex, noitemsep]
\item $\vartheta^\M$ is a pseudometric; 
\item for all $s,s' \in S$, $\vartheta^\M(s,s') = 0$ if and only if  $s \sim_\M s'$.
\end{enumerate}
\end{theorem}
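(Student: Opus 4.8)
The plan is to work entirely through the fixed point characterization, invoking Lemma~\ref{lem:coupling-dist} to replace $\vartheta^\M$ by the least fixed point $f^\M$ of the operator $F^\M$. Both claims then reduce to an invariance property of $F^\M$ together with a prefixed-point/kernel argument, so the heart of the proof is a handful of order-theoretic and algebraic properties of the map $d \mapsto F^\M(d)$.

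For part (1) I would show that $F^\M$ preserves the property of being a $[0,1]$-valued pseudometric on $S$. Write $\mathcal{P} \subseteq \mDom$ for the set of such pseudometrics. Since a pointwise supremum of pseudometrics is again a pseudometric, $\mathcal{P}$ contains $\mathbf{0}$ and is closed under arbitrary suprema, hence is a complete lattice. The least fixed point $f^\M$ is obtained by iterating $F^\M$ from $\mathbf{0}$ (transfinitely, or ordinarily since $d \mapsto \mathcal{K}_d$ is $1$-Lipschitz and thus $F^\M$ is Scott-continuous); as every iterate lies in $\mathcal{P}$ and $\mathcal{P}$ is closed under suprema, $f^\M = \vartheta^\M \in \mathcal{P}$. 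It therefore suffices to prove $F^\M(\mathcal{P}) \subseteq \mathcal{P}$. Vanishing on the diagonal is immediate ($\tv{\rho(s)}{\rho(s)} = 0$ and $\mathcal{K}_d(\tau(s),\tau(s)) = 0$ via the identity coupling, since $d \in \mathcal{P}$ vanishes on the diagonal), and symmetry follows from the symmetry of total variation and of $\mathcal{K}_d$.

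The main obstacle is the triangle inequality for $F^\M(d)$. Here I would rewrite the third branch using the identity $\alpha + (1-\alpha)K = 1 - (1-\alpha)(1-K)$, and reduce, by a short case analysis on the equivalence relation $\equiv$ (which is transitive), to the case where $s,s',s''$ are $\equiv$-equivalent and non-absorbing. In that case I combine the triangle inequality for total variation, $\tv{\rho(s)}{\rho(s'')} \le \tv{\rho(s)}{\rho(s')} + \tv{\rho(s')}{\rho(s'')}$, with the fact that $\mathcal{K}_d$ inherits the triangle inequality from $d$, to bound the relevant $\phi(\alpha,K) = 1 - (1-\alpha)(1-K)$ terms. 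Concretely, when the two pairwise sums of total variations and of Kantorovich values stay below $1$, the desired inequality reduces, after expanding $\phi$, to $0 \le \alpha_1 K_2 + \alpha_2 K_1$; otherwise one uses $\phi(\alpha,K) \ge \max\{\alpha,K\}$ together with $\phi(\cdot,\cdot) \le 1$. The remaining cases where $s \not\equiv s''$ force one summand to equal $1$ by transitivity of $\equiv$. This establishes $f^\M \in \mathcal{P}$, proving part (1).

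For part (2) I would work with the kernel relation $R = \set{(s,s')}{f^\M(s,s') = 0}$, which by part (1) is an equivalence relation. For $s \sim_\M s' \Rightarrow \vartheta^\M(s,s') = 0$ I would exhibit a prefixed point: given a bisimulation $B$, set $d_B(u,v) = 0$ if $(u,v) \in B$ and $1$ otherwise, and check $F^\M(d_B) \sqsubseteq d_B$. The only nontrivial branch uses $\rho(u) = \rho(v)$ and the fact that $\tau(u),\tau(v)$ agreeing on every $B$-class admit a coupling supported on $B$, whence $\mathcal{K}_{d_B}(\tau(u),\tau(v)) = 0$; by Knaster--Tarski $f^\M \sqsubseteq d_B$, so $f^\M$ vanishes on $B$, and hence on $\sim_\M$. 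Conversely, to show $R$ is itself a bisimulation I would unfold $f^\M = F^\M(f^\M)$ at a pair with $f^\M(s,s') = 0$: the first branch forces $s \equiv s'$, and in the non-absorbing case the two nonnegative summands must both vanish, giving $\tv{\rho(s)}{\rho(s')} = 0$ (hence $\rho(s) = \rho(s')$) and $\mathcal{K}_{f^\M}(\tau(s),\tau(s')) = 0$. The latter yields an optimal coupling $\omega$ supported on $R$, and the direct computation $\tau(s)(C) = \omega(C \times C) = \tau(s')(C)$ for every $R$-class $C$ supplies the remaining bisimulation condition. Thus $R \subseteq {\sim_\M}$, and combined with the first direction the kernel of $\vartheta^\M$ coincides with $\sim_\M$.
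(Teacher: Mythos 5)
Your proposal is correct and follows essentially the same route as the paper's own proof: reduce to $f^\M$ via Lemma~\ref{lem:coupling-dist}, prove part (1) by showing $F^\M$ is $\omega$-continuous and maps pseudometrics to pseudometrics so that the Kleene iterates from $\mathbf{0}$ stay pseudometrics, and prove part (2) by showing the kernel of $f^\M$ is a bisimulation (unfolding the fixed point equation) and that any bisimulation induces a prefixed point $d_R$ dominating $f^\M$. The only difference is that you supply details the paper leaves implicit --- the explicit triangle-inequality computation for $F^\M(d)$ and direct coupling constructions where the paper cites \cite[Lemma 3.1]{FernsPP04} --- and these details are sound.
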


\subsection{Complexity Results}
Provided that one has a way to compute the total variation distance between the residence-time probability distributions, Corollary~\ref{cor:modelstheta} allows one to apply the same idea of~\cite{BacciLM:tacas13} to obtained an on-the-fly algorithm for computing $\vartheta^\M$. 
This is the case for a relevant subclass of SMMs that includes MCs, CTMCs, and Markov models with uniformly or normally distributed residence time on states. For instance, the total variation distance between two exponential distributions with parameters $\lambda, \lambda' > 0$ is given by
\begin{equation*}
\tv{\mathit{Exp}(\lambda)}{\mathit{Exp}(\lambda')} = 
\begin{cases}
0 &\text{if $\lambda = \lambda'$} \\
\left| e^{\frac{\lambda' (\log{\lambda} - \log{\lambda'})}{\lambda' - \lambda}} - e^{\frac{\lambda (\log{\lambda} + \log{\lambda'})}{\lambda - \lambda'}} \right| & \text{otherwise}
\end{cases}
\end{equation*}

Moreover, by Lemma~\ref{lem:coupling-dist} we can also establish good theoretical results.

From a theoretical point of view, it is irrelevant whether the transition probabilities have rational values or not. However, for the complexity results that follow we assume that, for all $s \in S \setminus A$ and $s' \in S$, it holds $\tau(s)(s') \in \Q[] \cap [0,1]$.

\begin{lemma} \label{th:bisimpoly}
Let $\M = (S, A, \tau, \rho, \ell)$ be an SMM. Then, $\sim_\M$ can be computed in polynomial time in $\mathit{size}(\M)$, provided that, for all $s,s' \in S \setminus A$, the value $\tv{\rho(s)}{\rho(s')}$ can be computed in polynomial time in $\mathit{size}(\M)$.
\end{lemma}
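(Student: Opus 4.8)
The plan is to compute $\sim_\M$ by partition refinement, the standard technique for obtaining the largest bisimulation as a greatest fixed point, adapted to the probabilistic setting dictated by Definition~\ref{def:pbisim}. Writing $n = |S|$, I would maintain a partition $P$ of $S$ whose induced equivalence always contains $\sim_\M$ (i.e. $P$ is coarser than $\sim_\M$), and refine it until it stabilises; the stable partition will then be shown to coincide with $\sim_\M$.

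First, the initial partition. A bisimulation relates only $\equiv$-equivalent states and, among non-absorbing states, only those with identical residence-time distribution $\rho$. The static conditions $\ell(s) = \ell(s')$ and $s \equiv_A s'$ are checked directly from the finite representation of $\M$. The essential point, and the reason for the hypothesis, is the test $\rho(s) = \rho(s')$: since the total variation distance is a genuine metric on $\Delta(\R)$, we have $\rho(s) = \rho(s')$ if and only if $\tv{\rho(s)}{\rho(s')} = 0$, and the latter is decidable in polynomial time by assumption. Hence the initial partition $P_0$, grouping absorbing states by label and non-absorbing states by the pair $(\ell(s), \rho(s))$, is computed with $O(n^2)$ total-variation evaluations plus label comparisons, i.e.\ in polynomial time. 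Note that $\rho$ is state-local and independent of the partition, so it need only be accounted for here and never again during refinement; in particular every block of $P_0$ is entirely absorbing or entirely non-absorbing.

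Second, the refinement loop. Given the current partition $P$, I would associate to every non-absorbing state $s$ the vector $\big(\tau(s)(C)\big)_{C \in P}$, where $\tau(s)(C) = \sum_{u \in C} \tau(s)(u)$, and split each non-absorbing block according to equality of these vectors; absorbing blocks, having no outgoing transitions, are never split. I repeat until no block is split. Each successful refinement strictly increases the number of blocks, which is bounded by $n$, so at most $n-1$ rounds occur. Within a round, every quantity $\tau(s)(C)$ is a finite sum of the rationals $\tau(s)(u) \in \Q[] \cap [0,1]$, hence a rational of bit-size polynomial in $\mathit{size}(\M)$; computing them for all $s$ and $C$ costs $O(n^2)$ rational additions, and the grouping is done by sorting the value-vectors. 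Thus the whole loop runs in time polynomial in $\mathit{size}(\M)$.

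Finally, correctness. The invariant that $P$ is coarser than $\sim_\M$ holds for $P_0$ (any bisimilar pair satisfies the static and $\rho$ conditions) and is preserved by refinement (bisimilar states assign equal probability to each block, which at every stage is a union of $\sim_\M$-classes). Upon termination the induced equivalence satisfies exactly the transfer conditions of Definition~\ref{def:pbisim}, so it is itself a bisimulation; being coarser than $\sim_\M$ yet a bisimulation, it must equal the largest bisimulation $\sim_\M$. The only genuine obstacle is the equality test on the residence-time distributions, which are arbitrary measures on $\R$; everything else is routine partition refinement. That obstacle is precisely dissolved by the hypothesis, through the observation that the total variation distance separates distinct measures, so $\rho$-equality reduces to a single polynomial-time computation of $\tv{\rho(s)}{\rho(s')}$.
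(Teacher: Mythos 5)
Your proof is correct, but it takes a genuinely different route from the paper's. The paper does not run partition refinement itself: it \emph{reduces} the problem to probabilistic bisimilarity on a Markov chain and then delegates the algorithmic work to \cite{ChenBW12}. Concretely, it uses the same key observation as you --- the total variation distance is a metric on $\Delta(\R)$, so $\rho(s)=\rho(s')$ iff $\tv{\rho(s)}{\rho(s')}=0$, which is decidable in polynomial time by hypothesis --- to build a labelling $l$ on non-absorbing states with $l(s)=l(s')$ iff $\rho(s)=\rho(s')$, and then constructs an MC $\M'$ over the label set $2^{AP\cup X}$ in which each non-absorbing state carries the extra label $l(s)$, each absorbing state is given a self-loop, and bisimulations of $\M'$ are shown to correspond to bisimulations of $\M$ in the sense of Definition~\ref{def:pbisim}. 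You instead inline the algorithm that the paper cites: you fold $\rho$-equality and the absorbing/non-absorbing distinction into the initial partition and run the refinement loop directly, proving the coarseness invariant, termination, and the complexity bound yourself. The paper's reduction buys brevity and the conceptual point that, once $\rho$-equality is decidable, an SMM is for bisimilarity purposes just an MC with richer labels; your approach buys self-containment (you do not rely on the cited result, whose invocation in the paper is even slightly garbled --- it asserts MC bisimilarity is ``\textbf{P}-hard'' where what is actually needed is membership in \textbf{P}), explicit complexity accounting ($O(n^2)$ total-variation evaluations plus polynomially many rational operations over polynomially many rounds), and a cleaner handling of absorbing states, which you simply never split rather than equipping with self-loops.
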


The proof of Theorem~\ref{th:poly} is based on the same idea of~\cite{ChenBW12}. Specifically, $\vartheta^\M$ is characterized as the solution of a linear program that can be solved in polynomial time using the ellipsoid method~\cite{Schrijver86}.

\begin{theorem} \label{th:poly}
Let $\M = (S, A, \tau, \rho, \ell)$ be an SMM. Then, $\vartheta^\M$ can be computed in polynomial time in $\mathit{size}(\M)$, provided that, for all $s,s' \in S \setminus A$, the value $\tv{\rho(s)}{\rho(s')}$ can be computed in polynomial time in $\mathit{size}(\M)$.
\end{theorem}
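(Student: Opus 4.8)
The plan is to reduce the computation of $\vartheta^\M$ to solving a single convex optimisation problem by the ellipsoid method, following the strategy of Chen et al.~\cite{ChenBW12}. By Lemma~\ref{lem:coupling-dist} it suffices to compute the least fixed point $f^\M$ of $F^\M$, and by Corollary~\ref{cor:modelstheta} this value is the minimal discrepancy $\min_\C \discr{C}$ over all coupling models. The only coefficients appearing in $F^\M$ are the numbers $\alpha_{s,s'} = \tv{\rho(s)}{\rho(s')}$, of which there are $O(|S|^2)$; these are computable in polynomial time by the hypothesis of the theorem, so the whole instance of $F^\M$ is available in polynomial time (recall $\tau$ is assumed to take rational values of polynomially bounded bit-size).

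First I would read off the combinatorial structure. For a fixed coupling $\C$, Lemma~\ref{lemma:CouplingReachability} identifies $\discr{C}(s,s')$ with a reachability probability, hence (as in~\cite[Theorem 10.15]{BaierK08}) with the least non-negative solution of the linear system $d = \Gamma^\C(d)$. Minimising over couplings turns each linear term $\sum_{u,v} d(u,v)\,\tau'(s,s')(u,v)$ into the Kantorovich optimum $\mathcal{K}_d(\tau(s),\tau(s'))$, so that $f^\M$ is exactly the \emph{minimum-reachability} value of a Markov decision process whose (continuous) action set at each pair is the transportation polytope $\coupling{\tau(s)}{\tau(s')}$, the target being the separation event. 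By standard theory such a value is the optimum of the linear program
\begin{equation*}
\text{maximise } \textstyle\sum_{s,s'} d(s,s') \quad \text{subject to } d \in \mathcal{P} \subseteq \mDom,
\end{equation*}
where $\mathcal{P}$ is cut out by the constraints $d(s,s') \le \alpha_{s,s'} + (1-\alpha_{s,s'})\,\mathcal{K}_d(\tau(s),\tau(s'))$ for the pairs in the ``otherwise'' case of $F^\M$, together with the boundary conditions $d(s,s') = 1$ for $s \not\equiv s'$ and $d(s,s') = 0$ for the pairs already known to lie at distance $0$.

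The crucial preprocessing is to pin those distance-$0$ pairs. By Theorem~\ref{th:fixedbisimdist} they are precisely the bisimilar pairs $\sim_\M$, which by Lemma~\ref{th:bisimpoly} (whose hypothesis coincides with ours) are computable in polynomial time. I would then prove the key reconciliation lemma: once $d$ is fixed to $0$ on $\sim_\M$ and to $1$ on $\not\equiv$, the greatest and least fixed points of $F^\M$ coincide on $\mathcal{P}$, so that the maximum above indeed equals $f^\M = \vartheta^\M$. This is the analogue, for the present min-reachability problem, of zeroing out the states that cannot reach the target before writing the reachability LP, and it is indispensable exactly because when $\alpha_{s,s'} = 0$ (e.g.\ for MCs, or whenever residence times agree) $F^\M$ is not contractive and the naive maximisation would overshoot to spurious large fixed points.

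It then remains to establish polynomiality. The region $\mathcal{P}$ is convex: each $\mathcal{K}_d(\tau(s),\tau(s'))$ is a pointwise minimum of linear functions of $d$, hence concave, so every defining constraint has the form (convex function)$\,\le 0$. A separation oracle evaluates, for a candidate $d$ and each relevant pair, the transportation problem $\mathcal{K}_d(\tau(s),\tau(s')) = \min_{\omega \in \coupling{\tau(s)}{\tau(s')}} \sum_{u,v}\omega(u,v)\,d(u,v)$ in polynomial time; when a constraint is violated, the minimiser $\omega^*$ yields the valid inequality $d(s,s') \le \alpha_{s,s'} + (1-\alpha_{s,s'})\sum_{u,v}\omega^*(u,v)\,d(u,v)$ separating $d$ from $\mathcal{P}$. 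By the equivalence of separation and optimisation~\cite{Schrijver86}, the ellipsoid method solves the program in polynomial time. The main obstacle, besides the reconciliation lemma, is the bit-complexity bookkeeping: one must bound the encoding size of the vertices of $\mathcal{P}$ polynomially in $\mathit{size}(\M)$, and cope with the fact that the $\alpha_{s,s'}$ may be irrational (as for exponential residence times) by running the ellipsoid method on sufficiently precise rational approximations and controlling the propagated error so that the returned value attains the required precision.
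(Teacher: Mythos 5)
Your proposal follows essentially the same route as the paper's proof: reduce to the least fixed point $f^\M = \vartheta^\M$ via Lemma~\ref{lem:coupling-dist}, compute $\sim_\M$ in polynomial time (Lemma~\ref{th:bisimpoly}) and pin $d$ to $0$ there so that the resulting operator has a unique fixed point --- your ``reconciliation lemma'' is precisely the paper's Lemma~\ref{lem:uniquefixedpoint} --- and then maximise $\sum_{u,v} d(u,v)$ over the post-fixed-point region, solving the program by the ellipsoid method with a separation oracle that evaluates transportation problems. The differences are cosmetic (the MDP/min-reachability framing, and your explicit bit-complexity caveat about irrational values of $\tv{\rho(s)}{\rho(s')}$, which the paper leaves implicit in its computability hypothesis).
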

\vspace{-3ex}
\section{Conclusions and Future Work} \label{sec:conclusions}
In this paper we took a step forward supporting approximate reasoning on Stochastic Markov Models in particular with respect to quantitative verification against linear real-time specifications expressed as MTL formulas or DTAs.

\begin{enumerate*}[label=\emph{(\roman*)}]
\item We showed that the real-time specifications expressed as MTL formulas or DTAs generate the same $\sigma$-algebra. Then using a suitable pseudometric between specifications, we establish the subclass of resetting single-clock DTAs to be dense in the whole $\sigma$-algebra.
\item We propose two behavioral distances between SMMs, defined as the maximal variation w.r.t.\ the probability of attaining respectively any MTL or DTA specification, and we showed that the two coincide to the total variation measured on the above mentioned $\sigma$-algebra.
\item We established the \textbf{NP}-hardness of computing these distances exactly or within a certain absolute error that depends on the size of the system.
\item Nevertheless, we defined a Kantorovich-based bisimilarity distance that over approximates these variation distances and can be computed in polynomial time. 
\end{enumerate*}

Each of the previous points are supported by practical motivations.

As future work we will investigate other possible logical characterization, e.g. considering CSL or $\MTL$ with continuous semantics. From the computational perspective, also motivated by our recent work~\cite{BacciLM:tacas13,BacciLM:mfcs13} on MCs and MDPs, we would like to implement an on-the-fly algorithm for computing $\vartheta^\M$ and develop a compositional theory for these models. 
To the best of our knowledge, whether the distance $\dist^\M$ is computable or not still remains an open problem.

\bibliographystyle{abbrv}
\bibliography{biblio}

\newpage

\appendix

\section{Technical proofs}

This section contains all the technical proofs that have been omitted in the paper and, in addition, some technical lemmas that are required in the technical development but are not the main exposition.

\begin{proof}[of Lemma~\ref{lem:MTLmeas}]
Immediate by Lemma~\ref{lem:sigmaEquiv} and $\sigma(\mathcal{T}_\M) \subseteq \Sigma_{\paths{S, 2^S}}$.
\qed
\end{proof}

\begin{proof}[of Lemma~\ref{lem:DTAmeas}]
Immediate by Lemma~\ref{lem:sigmaEquiv} and $\sigma(\mathcal{T}_\M) \subseteq \Sigma_{\paths{S, 2^S}}$.
\qed
\end{proof}

\begin{lemma} \label{lem:tailmeasurable}
Let $\M = (S,A,\tau,\rho,\ell)$ be an SMM and $tl_\M \colon \paths{S} \to \paths{S}$ be defined as $tl_\M(\pi) = \tail{\pi}{1}$. Then, $tl_\M$ is measurable w.r.t $(\paths{S}, \sigma(\mathcal{T}_\M))$.
\end{lemma}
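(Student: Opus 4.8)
The plan is to show that the tail map $tl_\M$ is measurable by verifying the defining condition directly on a generating family of the $\sigma$-algebra $\sigma(\mathcal{T}_\M)$. Recall the standard measurability criterion: if $\sigma(\mathcal{T}_\M) = \sigma(\mathcal{G})$ for a generator $\mathcal{G}$, then $tl_\M$ is measurable with respect to $(\paths{S}, \sigma(\mathcal{T}_\M))$ as soon as $tl_\M^{-1}(T) \in \sigma(\mathcal{T}_\M)$ for every $T \in \mathcal{G}$. This reduces the problem from all of $\sigma(\mathcal{T}_\M)$ to the comparatively concrete family of trace cylinders $\mathcal{T}_\M = \cyl{S/_{\eqlbl}, \Sigma_{\R}}$.

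First I would fix a trace cylinder $T = \cyl{C_0, R_0, C_1, R_1, \dots, R_{n-1}, C_n}$ with $C_i \in S/_{\eqlbl}$ and $R_i \in \Sigma_{\R}$, and compute its preimage under $tl_\M$. Since $tl_\M(\pi) = \tail{\pi}{1}$ simply discards the first state and the first time delay, a path $\pi = x_0, t_0, x_1, t_1, \dots$ satisfies $tl_\M(\pi) \in T$ exactly when $x_1 \in C_0$, $t_1 \in R_0$, $x_2 \in C_1$, and so on. Hence
\begin{equation*}
  tl_\M^{-1}(T) = \cyl{S/_{\eqlbl}, \set{\R}{}, C_0, R_0, C_1, R_1, \dots, R_{n-1}, C_n} \,,
\end{equation*}
where the prepended block $\set{S/_{\eqlbl}}{}$ (i.e.\ the full state space, expressed as the union of all $\eqlbl$-classes) and the prepended time constraint $\R \in \Sigma_{\R}$ impose no restriction on the zeroth position. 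This set is itself a countable union of trace cylinders, hence an element of $\sigma(\mathcal{T}_\M)$: the block $S/_{\eqlbl}$ is a countable (indeed finite) union of label-classes, and $\R \in \Sigma_{\R}$, so the displayed cylinder expands into a countable union of sets in $\mathcal{T}_\M$.

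With measurability established on the generator, the general conclusion follows by the usual $\sigma$-algebra argument: the family $\set{E \in \sigma(\mathcal{T}_\M)}{tl_\M^{-1}(E) \in \sigma(\mathcal{T}_\M)}$ is a $\sigma$-algebra (preimages commute with complement and countable union), and it contains the generator $\mathcal{T}_\M$, hence it contains all of $\sigma(\mathcal{T}_\M)$. I do not anticipate a serious obstacle here; the only point requiring care is bookkeeping on the state block, namely checking that $tl_\M$ genuinely maps trace cylinders to trace cylinders — in particular that prepending an unconstrained state position and an unconstrained time delay keeps us inside $\mathcal{T}_\M = \cyl{S/_{\eqlbl}, \Sigma_{\R}}$ rather than forcing us out to the finer algebra $\Sigma_{\paths{S,2^S}}$. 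This is fine precisely because $S/_{\eqlbl}$ is a partition of $S$, so the full space $S$ is a finite union of the allowed blocks.
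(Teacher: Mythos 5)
Your proposal is correct and takes essentially the same route as the paper: both reduce measurability to checking preimages on the generator $\mathcal{T}_\M$, compute the preimage of a trace cylinder as the cylinder with an unconstrained first state and first delay, and observe that this expands into a finite union of trace cylinders because $S/_{\eqlbl}$ is a finite partition of $S$. The only difference is cosmetic: you spell out the good-sets argument justifying the reduction to generators, which the paper leaves implicit with ``it suffices to show.''
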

\begin{proof}
It suffices to show that, for any $C \in \mathcal{T}_\M$, it holds $tl_\M^{-1}(C) \in \sigma(\mathcal{T}_\M)$. 
\begin{align*}
  tl_\M^{-1}(C) 
  &= \set{ \pi \in \paths{S} }{ tl_\M(\pi) \in C } \tag{def. pre-image} \\
  &=   
  \set{\pi \in \paths{S} }{\tail{\pi}{1} \in C } \tag{def. $tl_\M$} \,.
\intertext{by definition of $\mathcal{T}_\M$, $C = \cyl{[s_0]_{\eqlbl}, R_0, \dots, R_{n-1}, [s_n]_{\eqlbl}}$ for some $n \in \N$, and some $s_i \in S$ and $R_i \in \Sigma_{\R}$ ($i =0..n$)}
 &= \cyl{S, \R, [s_0]_{\eqlbl}, R_0, \dots, R_{n-1}, [s_n]_{\eqlbl}} \\
 &= \textstyle \bigcup_{s \in S} \cyl{[s]_{\eqlbl}, \R, [s_0]_{\eqlbl}, R_0, \dots, R_{n-1}, [s_n]_{\eqlbl}}
\end{align*}
hence $tl_\M^{-1}(C) \in \sigma(\mathcal{T}_\M)$ since it is a (finite) union of cylinders in $\mathcal{T}_\M$
\qed
\end{proof}

\begin{proof}[of Lemma~\ref{lem:sigmaEquiv}]
\begin{trivlist}
\item Proof of $\sigma(\denot{\MTL}) = \sigma(\mathcal{T}_\M)$: 

($\subseteq$) It suffices to prove that $\denot{\MTL} \subseteq \sigma(\mathcal{T}_\M)$.
We proceed by structural induction on the formulas $\varphi \in \MTL$ showing that $\denot{\varphi} \in \sigma(\mathcal{T}_\M)$.
\begin{itemize}[leftmargin=0.5ex, label={}, topsep=0.5ex,itemsep=0.5ex]
   \item \textbf{Atomic prop.} $\denot{a} = \set{ \pi }{ a \in \ell(\stateat{\pi}{0}) } = \bigcup \set{\cyl{[s]_{\eqlbl}}}{ s \in \ell^{-1}(\set{a}{})}$. Since $S$ is finite and $\cyl{[s]_{\eqlbl}} \in \mathcal{T}_\M$ for all $s \in S$, then $\denot{a}  \in \sigma(\mathcal{T}_\M)$. 
   \item \textbf{False.} $\denot{\false} = \emptyset \in \sigma(\mathcal{T}_\M)$.
   \item \textbf{Implication.} $\denot{\phi \to \psi} = \denot{ \neg \phi \vee \psi} = \denot{\phi}^c \cup \denot{\psi}$. By inductive hypothesis, $\denot{\phi}, \denot{\psi} \in \sigma(\mathcal{T}_\M)$, therefore 
   $\denot{\phi \to \psi} \in \Sigma_\M$.
   \item \textbf{Next.} Consider $\Next[{I}]{\phi}$.  The following hold
  \begin{align}
	\denot{\Next[{I}]{\phi}} &= 
	\set{\pi}{ \timeat{\pi}{0} \in I, \text{ and } \M, \tail{\pi}{1} \models \phi } 
	\tag{by def. of $\Next[]{}$} \\
	&= \set{\pi}{ \timeat{\pi}{0} \in I, \text{ and } \tail{\pi}{1} \in \denot{\phi} } 
	\tag{by def. of $\denot{\cdot}$} \\
	&= \set{\pi}{ \tail{\pi}{1} \in \denot{\phi} } \cap \cyl{S, I, S} \notag \\
	&= tl_\M^{-1}(\denot{\phi})  \cap \cyl{S , I, S}. \tag{by def. $tl_\M$}
  \end{align}
  $\cyl{S , I, S} = \bigcup_{s,s' \in S} \cyl{[s]_{\eqlbl}, I, [s']_{\eqlbl}} \in \sigma(\mathcal{T}_\M)$ since 
  $\cyl{[s]_{\eqlbl}, I, [s']_{\eqlbl}} \in \mathcal{T}_\M$ for all $s, s' \in S$ and $S$ is finite. By inductive hypothesis $\denot{\phi} \in \sigma(\mathcal{T}_\M)$, hence, by Lemma~\ref{lem:tailmeasurable}, $tl_\M^{-1}(\denot{\phi}) \in \sigma(\mathcal{T}_\M)$. Therefore $\denot{\Next[{I}]{\phi}} \in \sigma(\mathcal{T}_\M)$.
   \item \textbf{Until.} Consider $\denot{\Until[{[a,b]}]{\phi}{\psi}}$.  For $k > 0$ we define the set $\mathit{OnTime}@k$ as
\begin{equation*}
	\mathit{OnTime}@k =
	\bigcup \left\{ \cyl{X} \left|  
	\begin{aligned}
	&s_i \in S, \, t^{-}_i, t^{+}_i \in \Q, \text{ for } 0 \leq i \leq k, \\
	&\textstyle \sum_{i = 0}^{k-1} t^{-}_i \geq a, \, \sum_{i = 0}^{k-1} t^{+}_i \leq b, \, t^{-}_i \leq t^{+}_i, \\
	&X = [s_0]_{\eqlbl}, [t^{-}_0, t^{+}_0] ,\dots, [t^{-}_{k-1}, t^{+}_{k-1}],[s_n]_{\eqlbl}
	\end{aligned}
	\right. \right\}
\end{equation*}
Notice that $\mathit{OnTime}@k$ is a countable union of cylinders in $\mathcal{T}_\M$ (the number of unions is bounded by $|(S \times \Q^2)^{k+1}|$), hence it is a measurable set in $\sigma(\mathcal{T}_\M)$.

Now we show that 
\begin{equation}
\mathit{OnTime}@k = \{ \pi \mid  \forall i < k.\, \textstyle\sum_{i = 0}^{k-1} \timeat{\pi}{i} \in [a,b]\} \label{eq:ontime}
\end{equation} 
The inclusion from left to right trivially holds by definition of $\mathit{OnTime}@k$. As for the reverse inclusion, let $\pi$ be a timed path over $S$, such that $\timeat{\pi}{i} = t_i$ ($i=0..k-1$) and $\sum_{i = 0}^{k-1} t_i \in [a,b]$. We have to prove that there exist $t^{-}_i,t^{+}_i \in \Q$ such that 
$t^{-}_i \leq t_i \leq t^{+}_i$, $\sum_{i = 0}^{k-1} t^{-}_i \geq a$, and $\sum_{i = 0}^{k-1} t^{+}_i \leq b$.
When $k = 1$ it suffices to take $t^{-}_1 = a$ and $t^{+}_1 = b$. Assume $k > 1$.
Let $\Delta = 2h / 10^h$ for some $h \in \N$ large enough to satisfy the following two inequalities 
$\sum_{i = 0}^{k-1} t_i - \Delta > a$ and $\sum_{i = 0}^{k-2} t_i + \Delta < b$.
Let $t^{-}_i = t_i = t^{+}_i$ if $t_i \in \Q$, otherwise we choose some $t^{-}_i, t^{+}_i \in \Q$ that satisfy
\begin{align}
t^{-}_i < t_i < t^{+}_i,  && t^{-}_i > t_i - \Delta/2k \text{, and} && t^{+}_i < t_i + \Delta/2k \,.
\label{eq:intervalconstraints}
\end{align}
We proceed by showing that the constraints \eqref{eq:intervalconstraints} are sufficient to prove that $\sum_{i = 0}^{k-1} t^{-}_i \geq a$ and $\sum_{i = 0}^{k-1} t^{+}_i \leq b$, then we show how to pick such $t^{-}_i, t^{+}_i \in \Q$ in order to satisfy \eqref{eq:intervalconstraints}.
The following hold
\begin{align}
  \textstyle\sum_{i = 0}^{k-1} t_i - \Delta &< \textstyle\sum_{i = 0}^{k-1} (t^{-}_i + \Delta/2k) - \Delta =\tag{by \eqref{eq:intervalconstraints}} \\
  &= \textstyle\sum_{i = 0}^{k-1} t^{-}_i - \Delta/2 \leq \textstyle\sum_{i = 0}^{k-1} t^{-}_i \tag{by $\Delta \geq 0$}
\end{align}
By construction $\sum_{i = 0}^{k-1} t_i - \Delta > a$, hence $\sum_{i = 0}^{k-1} t^{-}_i > a$. Analogously, we have that 
\begin{align}
  \textstyle\sum_{i = 0}^{k-2} t_i + \Delta &> \textstyle\sum_{i = 0}^{k-1} (t^{+}_i - \Delta/2k) - \Delta =\tag{by \eqref{eq:intervalconstraints}} \\
  &= \textstyle\sum_{i = 0}^{k-2} t^{+}_i + (k+1)\Delta / 2k \geq \textstyle\sum_{i = 0}^{k-2} t^{+}_i \tag{by $\Delta \geq 0$}
\end{align}
By construction $\sum_{i = 0}^{k-2} t_i + \Delta < b$, hence $\sum_{i = 0}^{k-2} t^{+}_i < b$.

One can check that the constraints \eqref{eq:intervalconstraints} are easily satisfied if we pick 
\newcommand{\floor}[1]{ #1}
\begin{align*}
	 t^{-}_i = \lfloor t_i \rfloor + \frac{\lfloor 10^h \cdot \{ t_i \} \rfloor}{ 10^h } \,,
	 &&
	 t^{+}_i = \lfloor t_i \rfloor + \frac{\lfloor 10^h \cdot \{ t_i \} \rfloor + 1}{ 10^h } \,,
\end{align*}
where $\{ t_i \}$ denotes the fractional part of $t_i \not\in \Q$. 
This proves \eqref{eq:ontime}.
\begin{align}
	&\denot{\Until[{[a,b]}]{\phi}{\psi}} \notag \\
	&= \left\{ \pi \left|
	\begin{aligned}
		&\exists i > 0.\, \textstyle \sum_{k=0}^{i-1} \timeat{\pi}{i} \in [a,b], 
		\text{ and } \M, \tail{\pi}{i} \models \psi, \\
		 &\forall 0 \leq j < i. \, \M, \tail{\pi}{j} \models \phi
	\end{aligned} \right. \right\} \tag{by def. $\Until[]{}{}$} \\
	&= \left\{ \pi \left|
	\begin{aligned}
		&\exists i > 0.\, \textstyle \sum_{k=0}^{i-1} \timeat{\pi}{i} \in [a,b], 
		\text{ and } \tail{\pi}{i} \in \denot{\psi}, \\
		 &\forall 0 \leq j < i. \, \tail{\pi}{j} \in \denot{\phi}
	\end{aligned} \right. \right\} \tag{by def. $\denot{\cdot}$}\\
	&= \bigcup_{i > 0} \bigcap_{0 \leq j < i} ( (tl_\M^j)^{-1}(\denot{\phi}) \cap (tl_\M^i)^{-1}(\denot{\psi}) \cap \mathit{OnTime}@i ) \,. \tag{by def. $tl_\M$ and \eqref{eq:ontime}}
\end{align}
By inductive hypothesis $\denot{\phi}, \denot{\psi} \in \sigma(\mathcal{T}_\M)$, so that, by Lemma~\ref{lem:tailmeasurable} and since $\sigma(\mathcal{T}_\M)$ is closed under countable union and intersection, $\denot{\Until[{[a,b]}]{\phi}{\psi}} \in \sigma(\mathcal{T}_\M)$.
\end{itemize} 

\newcommand{\intcyl}{\cyl{S/_{\eqlbl},\mathcal{I}}}

($\supseteq$) Let $\mathcal{I}$ be the family of closed intervals in $\R$ with rational endpoints.
Clearly, $\intcyl \subseteq \mathcal{T}_\M$, hence $\sigma(\intcyl) \subseteq \sigma(\mathcal{T}_\M)$.
It is standard that $\sigma(\mathcal{I}) = \Sigma_{\R}$, and from it one can easily verify that $\mathcal{T}_\M \subseteq \sigma(\intcyl)$, therefore we have also $\sigma(\mathcal{T}_\M) \subseteq \sigma(\intcyl)$. This proves $\sigma(\intcyl) = \sigma(\mathcal{T}_\M)$. 

From this equality, to show the inclusion $\sigma(\mathcal{T}_\M) \subseteq\sigma(\denot{\MTL})$, it suffices to prove that $\intcyl \subseteq \sigma(\denot{\MTL})$.

Let define $Ap \colon AP \times \intcyl \to \MTL$ as
\begin{align*}
  Ap(a, \cyl{[s]_{\eqlbl}}) &= \left\{
    \begin{aligned}
    &a &&\text{if $a \in \ell(s)$} \\
    &\neg a &&\text{othewise}
    \end{aligned}\right.
  \\
  Ap(a, \cyl{[s]_{\eqlbl},I, X}) &= \left\{
    \begin{aligned}
    &a \wedge \Next[I]{Ap(a, \cyl{X})} &&\text{if $a \in \ell(s)$} \\
    &\neg a \wedge \Next[I]{Ap(a, \cyl{X})} &&\text{otherwise} \,,
    \end{aligned}\right.
\end{align*}

Let $C = \cyl{[s_0]_{\eqlbl}, I_0, \dots, I_{n-1}, [s_{n}]_{\eqlbl}} \in \intcyl$, one can prove by induction on $n$ that $\bigcap_{a \in AP} \denot{Ap(a,C)} = C$.
Since $\sigma(\denot{\MTL})$ is closed under countable intersection, we conclude that $C \in \sigma(\denot{\MTL})$.

\item Proof of $\sigma(\denot{\DTA}) = \sigma(\mathcal{T}_\M)$:

($\subseteq$) It suffices to show that $\denot{\DTA} \subseteq \sigma(\mathcal{T}_\M)$. This is proven in~\cite[Theorem~3.2]{KatoenLMCS11} and the proof can be left unchanged.

\renewcommand{\intcyl}{\cyl{S/_{\eqlbl},\mathcal{I}}}

($\supseteq$) Let $\mathcal{I}$ be the family of closed intervals in $\R$ with rational endpoints.
Clearly, $\intcyl \subseteq \mathcal{T}_\M$, hence $\sigma(\intcyl) \subseteq \sigma(\mathcal{T}_\M)$.
It is standard that $\sigma(\mathcal{I}) = \Sigma_{\R}$, and from it one can easily verify that $\mathcal{T}_\M \subseteq \sigma(\intcyl)$, therefore we have also $\Sigma_\M \subseteq \sigma(\intcyl)$. This proves $\sigma(\intcyl) = \sigma(\mathcal{T}_\M)$. 

Hence, for $\Sigma_\M \subseteq \sigma(\denot{\DTA})$ it suffices to prove $\intcyl \subseteq \sigma(\denot{\DTA})$. Let $C = \cyl{[s_0]_{\eqlbl}, I_0, \dots, I_{n-1}, [s_{n}]_{\eqlbl}} \in \intcyl$. We want to define a DTA $\A = (Q, 2^{AP}, q_0, F, \to)$ such that $\denot{\A} = C$. This can be obtained by setting 
$Q = \set{q_0, \dots, q_n}{}$, $F = \set{q_n}{}$, and, using one shared clock $x \in \clk$ in each guard,
\begin{align*}
  {\to} = 
  &\set{ (q_i, \ell(s_i), g_i, \clk, q_{i+1}) }{ g_i = a \leq x \leq b \text{ for } I_i = [a,b], 0 \leq i \leq n} \\
  &{} \cup \set{ (q_n, a, \emptyset, \clk, q_n) }{ a \subseteq AP } \,.
\end{align*}
It is easy to see that the only accepted timed paths $\pi \in \L{\A}$ are such that 
$\prefix{\pi}{n} = \ell(s_0), t_0, \dots, t_{n-1}, \ell(s_n)$, and $t_i \in I_i$ ($0 \leq i \leq n-1$), because clocks are always resetting. So the thesis.
\qed
\end{trivlist}
\end{proof}

\begin{proof}[of Lemma~\ref{lem:densefield}]
We have to show that $\cl{\mathcal{F}} = \Sigma$. The closure of $\mathcal{F}$ under $d_\mu$ is given by
\begin{equation*}
  \cl{\mathcal{F}} = 
  \set{E \in \Sigma}{ \forall \varepsilon > 0. \; \exists F \in \mathcal{F}. \; d_\mu(E,F) < \varepsilon} \,.
\end{equation*}
Clearly $\cl{\mathcal{F}} \subseteq \Sigma$. The converse inclusion follows by $\mathcal{F} \subseteq \cl{\mathcal{F}}$ and $\Sigma = \sigma(\mathcal{F})$, showing that $\cl{\mathcal{F}}$ is a $\sigma$-algebra, i.e., closed under complement and countable union.
\begin{itemize}[leftmargin=0.5ex, label={}, topsep=0.5ex,itemsep=0.5ex]
  \item \textit{Complement.} Let $E \in \cl{\mathcal{F}}$. We want to show that $E^c \in \cl{\mathcal{F}}$, where $E^c := \Sigma \setminus E$ denotes the complement of $E$ in $\Sigma$.   
Let $\varepsilon > 0$. By $E \in \cl{\mathcal{F}}$, there exists $F \in \mathcal{F}$ such that $d_\mu(E, F) < \varepsilon$. Moreover, note that $E \symdiff F = E^c \symdiff F^c$, so 
\begin{equation*}
  d_\mu(E^c, F^c) = \mu(E^c \symdiff F^c) =  \mu(E \symdiff F) = d_\mu(E,F) \,,
\end{equation*}
and $d_\mu(E^c, F^c) < \varepsilon$. By hypothesis, $\mathcal{F}$ is a field, hence $F^c \in \mathcal{F}$. Due to the generality of $\varepsilon > 0$, this proves $E^c \in \cl{\mathcal{F}}$. 
  
  \item \textit{Countable Union.} Let $\set{E_i}{i \in \N} \subseteq \cl{\mathcal{F}}$. We want to show that $\bigcup_{i \in \N} E_i \in \cl{\mathcal{F}}$. Let $\varepsilon > 0$. To prove the thesis it suffices to show that  following statements hold:
  \begin{enumerate}[label={\alph*)}, noitemsep, topsep=0.5ex, fullwidth]
    \item there exists $k \in \N$, such that 
    $d_\mu(\bigcup_{i \in \N} E_i, \bigcup_{i = 0}^k E_i) < \frac{\varepsilon}{2}$;
    \item for all $n \in \N$, there exist $F_0,\dots, F_n \in \mathcal{F}$, such that $d_\mu(\bigcup_{i = 0}^n E_i, \bigcup_{i = 0}^n F_i) < \frac{\varepsilon}{2}$.
  \end{enumerate}
  Indeed, by applying the triangular inequality on (a) and (b), we have that there exist $k \in \N$ and $F_0,\dots, F_k \in \mathcal{F}$ such that
  \begin{equation*}
    \textstyle
    d_\mu(\bigcup_{i \in \N} E_i, \bigcup_{i = 0}^k F_i) 
    \leq d_\mu(\bigcup_{i \in \N} E_i, \bigcup_{i = 0}^k E_i) + d_\mu(\bigcup_{i = 0}^k E_i, \bigcup_{i = 0}^k F_i)
    < \varepsilon \,.
  \end{equation*}
  Since, by hypothesis, $\mathcal{F}$ is a field, we also have that $\bigcup_{i = 0}^k F_i \in \mathcal{F}$. Therefore, due to the generality of $\varepsilon > 0$, we will obtain that $\bigcup_{i \in \N} E_i \in \cl{\mathcal{F}}$. 
  
  (a). Since $(\bigcup_{i = 0}^n E_i)_{n \in \N}$ is a countable increasing sequence in $\Sigma$ converging to $\bigcup_{i \in \N} E_i$, by $\omega$-continuity from below of $\mu$, we have that $(\mu(\bigcup_{i = 0}^n))_{n \in \N}$ converges in $\R[]$ to $\mu(\bigcup_{i \in \N} E_i)$. This means that there exists and index $k \in \N$ such that 
  \begin{equation*}
    \textstyle
    |\mu(\bigcup_{i \in \N} E_i) - \mu(\bigcup_{i = 0}^k E_i)| < \frac{\varepsilon}{2}
  \end{equation*}
  By $\bigcup_{i \in \N} E_i \subseteq \bigcup_{i = 0}^k E_i$ and monotonicity, additivity, and finiteness of $\mu$,
  \begin{align*}
    \textstyle
    d_\mu(\bigcup_{i \in \N} E_i, \bigcup_{i = 0}^k E_i)
      &= \textstyle  
      \mu(\bigcup_{i \in \N} E_i \symdiff \bigcup_{i = 0}^k E_i) \\
      &= \textstyle  
      \mu(\bigcup_{i \in \N} E_i \setminus \bigcup_{i = 0}^k E_i) \\
      &= \textstyle
      \mu(\bigcup_{i \in \N} E_i) - \mu(\bigcup_{i = 0}^k E_i) < \textstyle \frac{\varepsilon}{2} \,. 
  \end{align*}
  
  (b). Let $n \in \N$. By $E_0, \dots, E_n \in \cl{\mathcal{F}}$, there exists $F_0, \dots, F_n \in \mathcal{F}$ such that $d_\mu(E_i, F_i) < \frac{\varepsilon}{2n}$. Moreover, note that $\bigcup_{i = 0}^n E_i \symdiff \bigcup_{i = 0}^n F_i \subseteq \bigcup_{i = 0}^n (E_i \symdiff F_i)$, so that by monotonicity and sub-additivity of $\mu$ we have
  \begin{align*}
    \textstyle
    d_\mu(\bigcup_{i = 0}^n E_i, \bigcup_{i = 0}^k F_i)
      &= \textstyle  
      \mu(\bigcup_{i = 0}^n E_i \symdiff \bigcup_{i = 0}^n F_i) \\
      &\leq \textstyle  
      \mu(\bigcup_{i = 0}^n (E_i \symdiff F_i) ) \\
      &\leq \textstyle  
      \sum_{i=0}^n \mu(E_i \symdiff F_i) 
      < \textstyle  
      \sum_{i=0}^n \frac{\varepsilon}{2n} = \frac{\varepsilon}{2} \,.
  \end{align*}
\end{itemize}
\qed
\end{proof}

\begin{proof}[of Lemma~\ref{lem:sDTARfield}]
One inclusion follows obviously since $\denot{\sDTAR} \subseteq \denot{\DTA}$. The converse inclusion is already proven in Lemma~\ref{lem:sigmaEquiv} which uses single-clock resetting DTAs only. 
\qed
\end{proof}

\begin{proof}[of Theorem~\ref{th:1RDTAapproximant}]
By Lemmas~\ref{lem:sigmaEquiv} and \ref{lem:sDTARfield}, $\denot{\sDTAR}$ generates $\Sigma_\M$, moreover it is a field. Therefore, by Lemma~\ref{lem:densefield},  $\denot{\sDTAR}$ is dense in $(\Sigma_\M, d_\mu)$, for all finite measures $\mu$ over $(\paths{S},\Sigma_\M)$.
To prove the thesis it suffices to show that the function $\mu \colon \Sigma \to \R[]$ is continuous. Let $E$ and $F$ be arbitrary measurable sets in $\Sigma$, then
\begin{align}
  \mu(E) 
    &= \mu(E \setminus F) + \mu(E \cap F)  \tag{$\mu$ additive} \\
    &\leq \mu((E \setminus F) \cup (F \setminus E)) + \mu(F)  \tag{$\mu$ monotone} \\
    &= \mu(E \symdiff F) + \mu(E)  \tag{by def} \\
    &= \mu(E \symdiff F) +  \nu(E \symdiff F) + \mu(F)  \tag{$\nu$ positive} \\
    &= d_\mu(E, F) + \mu(F) \,.  \tag{by def}
\end{align}
This implies that, for all $E, F \in \Sigma$, $d(E,F) \geq |\mu(E) - \mu(F)|$, hence $\mu \colon \Sigma \to \R[]$ is $1$-Lipschitz continuous.
\qed
\end{proof}

\begin{proposition} \label{prop:supofclosures}
Let $A \subseteq \R[]$ be a bounded nonempty set. Then, 
\begin{enumerate}[label={(\roman*)}, leftmargin=*, topsep=0.5ex, itemsep=0.5ex]
  \item \label{prop:supofclosures1} $\sup A \in \cl{A}$;
  \item \label{prop:supofclosures2} $\sup A = \sup \cl{A}$.
\end{enumerate}
\end{proposition}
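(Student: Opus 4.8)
The plan is to establish the two claims in order, obtaining (ii) as an easy consequence of (i). Since $A$ is nonempty and bounded above, the completeness of $\R[]$ guarantees that $s := \sup A$ exists as a genuine real number; this is the only place where the boundedness (and nonemptiness) hypothesis is actually used.

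For \ref{prop:supofclosures1}, I would invoke the standard $\varepsilon$-characterization of the supremum: for every $\varepsilon > 0$ there is some $a \in A$ with $s - \varepsilon < a \le s$, since otherwise $s - \varepsilon$ would be a strictly smaller upper bound of $A$, contradicting that $s$ is the least upper bound. Consequently every open interval $(s - \varepsilon, s + \varepsilon)$ meets $A$, which is precisely the statement that $s$ belongs to the closure $\cl{A}$ (in the Euclidean topology of $\R[]$).

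For \ref{prop:supofclosures2}, one inequality is immediate: from $A \subseteq \cl{A}$ and monotonicity of $\sup$ we get $\sup A \le \sup \cl{A}$. For the reverse, I would observe that the half-line $(-\infty, s]$ is a closed subset of $\R[]$ containing $A$; as $\cl{A}$ is by definition the smallest closed set containing $A$, this yields $\cl{A} \subseteq (-\infty, s]$, so $s$ is an upper bound for $\cl{A}$ and hence $\sup \cl{A} \le s = \sup A$. Combining the two inequalities gives the equality. Equivalently, \ref{prop:supofclosures2} drops out of \ref{prop:supofclosures1} together with this upper-bound observation: $s \in \cl{A}$ and $s$ bounds $\cl{A}$ from above, so $s = \max \cl{A} = \sup \cl{A}$. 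The argument is wholly elementary; the only step meriting a little care is the inclusion $\cl{A} \subseteq (-\infty, s]$, which rests on the closedness of the half-line and the minimality of the closure rather than on any computation, and this I regard as the main (mild) obstacle.
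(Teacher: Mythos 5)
Your proof is correct, and while part (i) follows essentially the same route as the paper, your part (ii) takes a genuinely different path. For (i) both you and the paper use the standard $\varepsilon$-characterization of the supremum to show that every interval $(s-\varepsilon, s+\varepsilon)$ meets $A$, i.e.\ that $s = \sup A$ is an adherent point of $A$. For (ii) the paper argues by contradiction: it sets $\beta = \sup \cl{A}$, observes via part (i) applied to $\cl{A}$ and idempotence of closure ($\cl{\cl{A}} = \cl{A}$) that $\beta \in \cl{A}$, and then, assuming $\varepsilon := \beta - \alpha > 0$, uses adherence of $\beta$ to $A$ to produce a point $x \in A$ with $x > \beta - \varepsilon = \alpha$, contradicting $\alpha = \sup A$. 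You instead argue directly: the half-line $(-\infty, s]$ is closed and contains $A$, so by minimality of the closure $\cl{A} \subseteq (-\infty, s]$, whence $s$ is an upper bound of $\cl{A}$; combined with (i), this gives $s = \max \cl{A} = \sup \cl{A}$. Your route has two advantages: it avoids proof by contradiction, and the inclusion $\cl{A} \subseteq (-\infty, s]$ supplies, as a byproduct, the boundedness of $\cl{A}$ needed for $\sup \cl{A}$ to exist at all --- a point the paper dispatches with a bare appeal to the Dedekind axiom (one should just make sure this inclusion is stated before the inequality $\sup A \leq \sup \cl{A}$ is written down, so that the latter supremum is known to exist). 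What the paper's route buys in exchange is self-containedness with respect to a single description of the closure: it only ever uses the adherent-point characterization $\cl{B} = ad(B)$, which it recalls explicitly, and never needs the definition of $\cl{A}$ as the smallest closed superset, nor the closedness of half-lines. Both arguments are elementary and complete.
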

\begin{proof}
First, notice that since $A \neq \emptyset$ and is bounded, by Dedekind axiom, the supremum of $A$ (and $\cl{A}$) in $\R[]$ exists. Moreover, recall that, for any $B \subseteq \R[]$,
\begin{equation*}
\cl{B} = ad(B) := 
  \set{x \in \R[]}{\forall \varepsilon > 0. \; (x - \varepsilon, x + \varepsilon) \cap B \neq \emptyset} \,,
\end{equation*}
where $ad(B)$ denotes the set of points \emph{adherent} to $B$.

Let $\alpha = \sup A$. \ref{prop:supofclosures1} We prove that $\alpha \in \cl{A}$. Let $\varepsilon > 0$, then $\alpha - \varepsilon$ is not an upper bound for $A$. This means that there exists $x \in A$ such that $\alpha - \varepsilon < x \leq \alpha$ and, in particular, that $x \in (\alpha-\varepsilon, \alpha+\varepsilon) \cap A$. Therefore $\alpha \in \cl{A}$. \ref{prop:supofclosures2} Let $\beta = \sup \cl{A}$. By $A \subseteq \cl{A} = \cl{\cl{A}}$ and \ref{prop:supofclosures1}, we have $\alpha \leq \beta \in \cl{A}$. We prove that $\alpha = \beta$. Assume by contradiction that $\alpha \neq \beta$ and let $\varepsilon := \beta - \alpha$. Clearly $\varepsilon > 0$, so that, by $\beta \in \cl{A}$, we have that $(\beta - \varepsilon, \beta+\varepsilon) \cap A \neq \emptyset$. This means that there exists $x \in A$ such that $\alpha = \beta - \varepsilon < x$, in contradiction with the hypothesis that $\alpha = \sup A$. 
\qed
\end{proof}

\begin{proposition} \label{prop:closureimage} 
Let $f \colon X \to Y$ be continuous and $A \subseteq X$, then $\cl{f(A)} = \cl{f(\cl{A})}$.
\end{proposition}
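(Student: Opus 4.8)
The plan is to establish the two inclusions of $\cl{f(A)} = \cl{f(\cl{A})}$ separately, the whole argument resting on a single nontrivial ingredient: the standard characterization of continuity which says that for continuous $f$ one has $f(\cl{A}) \subseteq \cl{f(A)}$ for every $A \subseteq X$. Everything else is monotonicity and idempotence of the closure operator.

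First I would dispatch the inclusion $\cl{f(A)} \subseteq \cl{f(\cl{A})}$, which needs no continuity at all. Since $A \subseteq \cl{A}$, monotonicity of the direct image gives $f(A) \subseteq f(\cl{A})$, and then monotonicity of closure yields $\cl{f(A)} \subseteq \cl{f(\cl{A})}$.

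For the reverse inclusion $\cl{f(\cl{A})} \subseteq \cl{f(A)}$, the key step is the continuity fact $f(\cl{A}) \subseteq \cl{f(A)}$, which I would prove directly. Take $y \in f(\cl{A})$, say $y = f(x)$ with $x \in \cl{A}$, and show that every neighborhood $V$ of $y$ meets $f(A)$: by continuity $f^{-1}(V)$ is a neighborhood of $x$, and since $x \in \cl{A}$ it meets $A$, so any $a \in f^{-1}(V) \cap A$ gives $f(a) \in V \cap f(A)$; hence $y \in \cl{f(A)}$. In the metric setting the paper works in, I would phrase this using the adherent-point description already employed in Proposition~\ref{prop:supofclosures}, taking $V$ to be an $\varepsilon$-ball around $y$. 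Applying closure to $f(\cl{A}) \subseteq \cl{f(A)}$ and using idempotence then gives $\cl{f(\cl{A})} \subseteq \cl{\cl{f(A)}} = \cl{f(A)}$, and combining the two inclusions proves the equality.

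I do not anticipate a genuine obstacle here: the only substantive ingredient is the continuity fact $f(\cl{A}) \subseteq \cl{f(A)}$, and the rest is bookkeeping with the closure operator. The only care required is in matching the phrasing of the neighborhood argument to the intended flavor of spaces (general topological versus metric), which is why I would default to the $\varepsilon$-ball formulation consistent with the preceding propositions.
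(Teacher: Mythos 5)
Your proof is correct and follows essentially the same route as the paper: both obtain $\cl{f(A)} \subseteq \cl{f(\cl{A})}$ from $A \subseteq \cl{A}$ and monotonicity, and both obtain the reverse inclusion from the continuity fact $f(\cl{A}) \subseteq \cl{f(A)}$ together with closedness (equivalently, idempotence of closure). The only difference is that you prove the continuity fact directly via a neighborhood argument, whereas the paper simply cites it as the standard characterization of continuity.
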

\begin{proof}
($\supseteq$) A function $f \colon X \to Y$ is continuous iff for all $B \subseteq X$, $f(\cl{B}) \subseteq \cl{f(B)}$. Therefore $f(\cl{A}) \subseteq \cl{f(A)}$. Since $\cl{f(A)}$ is closed, we have $\cl{f(\cl{A})} \subseteq \cl{f(A)}$. ($\subseteq$) The result follows by $A \subseteq \cl{A}$ and monotonicity of $f(\cdot)$ and $\cl{(\cdot)}$. \qed
\end{proof}

\begin{proposition} \label{supremumondense} 
Let $X$ be nonempty, $f \colon X \to \R[]$ be a bounded continuous real-valued function, and $D \subseteq X$ be dense in $X$. Then $\sup f(D) = \sup f(X)$.
\end{proposition}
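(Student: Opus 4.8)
The plan is to reduce the statement to the two preceding propositions by exploiting that density of $D$ means precisely $\cl{D} = X$. First I would record that the relevant suprema all exist: since $f$ is bounded, both $f(X)$ and $f(D)$ are bounded subsets of $\R[]$, and they are nonempty because $X \neq \emptyset$ together with $\cl{D} = X$ forces $D \neq \emptyset$ (the closure of the empty set is empty). Thus Proposition~\ref{prop:supofclosures} is applicable to each of $f(D)$, $f(X)$, $\cl{f(D)}$ and $\cl{f(X)}$.

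The easy half is $\sup f(D) \leq \sup f(X)$, which is immediate from $f(D) \subseteq f(X)$ and monotonicity of $\sup$. For the reverse inequality I would simply chain the two topological lemmas. Since $D$ is dense, $\cl{D} = X$, so Proposition~\ref{prop:closureimage} applied to the continuous map $f$ gives $\cl{f(D)} = \cl{f(\cl{D})} = \cl{f(X)}$. Taking suprema and invoking Proposition~\ref{prop:supofclosures}\ref{prop:supofclosures2} on both sides yields $\sup f(D) = \sup \cl{f(D)} = \sup \cl{f(X)} = \sup f(X)$. In fact this single chain already delivers the full equality, so the first inequality is redundant and serves only as a sanity check.

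The only point requiring care — rather than any genuine difficulty — is to verify that the hypotheses of the cited propositions are actually met: Proposition~\ref{prop:closureimage} needs $f$ continuous, which is assumed, and Proposition~\ref{prop:supofclosures} needs bounded nonempty sets, which is exactly where boundedness of $f$ and nonemptiness of $X$ enter. I do not expect any substantive obstacle: the mathematical content has been front-loaded into the two lemmas on closures and suprema, and this proposition falls out as an immediate corollary once one observes that density is the statement $\cl{D} = X$.
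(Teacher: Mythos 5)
Your proof is correct and follows essentially the same route as the paper's: the paper likewise establishes existence of the suprema via boundedness and nonemptiness (Dedekind axiom) and then chains $\sup f(D) = \sup \cl{f(D)} = \sup \cl{f(\cl{D})} = \sup \cl{f(X)} = \sup f(X)$ using Propositions~\ref{prop:supofclosures} and \ref{prop:closureimage} together with $\cl{D} = X$. Your extra remarks (the redundant easy inequality, and deducing $D \neq \emptyset$ from $\cl{D} = X$) are harmless refinements of the same argument.
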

\begin{proof}
Notice that, since $X \neq \emptyset$ and $f$ is bounded, by Dedekind axiom, both $\sup f(D)$ and $\sup f(X)$ exist. By Propositions~\ref{prop:supofclosures}, \ref{prop:closureimage}, and $\cl{D} = X$, we have
\begin{equation*}
\sup f(D) 
  \stackrel{\text{(Prop.\ref{prop:supofclosures})}}{=} \sup \cl{f(D)} 
  \stackrel{\text{(Prop.\ref{prop:closureimage})}}{=} \sup \cl{f(\cl{D})} 
  = \sup \cl{f(X)} 
  \stackrel{\text{(Prop.\ref{prop:supofclosures})}}{=} \sup f(X) \,,
\end{equation*}
which proves the thesis. \qed
\end{proof}

\begin{proof}[of Lemma~\ref{lem:vardistonfield}]
Consider the pseudometric $d(E,F) = \mu(E \symdiff F) + \nu(E \symdiff F)$ on $\Sigma$, i.e., the Fr\'echet-Nikodym pseudometric w.r.t. $\mu + \nu$. By Lemma~\ref{lem:densefield}, we know that $\mathcal{F}$ is dense in $(\Sigma, d)$. For any nonempty set $Y$ and any bounded continuous real-valued function $f \colon Y \to \R[]$, if $D \subseteq Y$ is dense then $\sup f(D) = \sup f(X)$ (see Proposition~\ref{supremumondense}). Therefore, to prove the thesis it suffices to show that the function $|\mu - \nu| \colon \Sigma \to \R[]$ is bounded and continuous. Let $E$ and $F$ be arbitrary measurable sets in $\Sigma$, then
\begin{align}
  \mu(E) 
    &= \mu(E \setminus F) + \mu(E \cap F)  \tag{$\mu$ additive} \\
    &\leq \mu((E \setminus F) \cup (F \setminus E)) + \mu(F)  \tag{$\mu$ monotone} \\
    &= \mu(E \symdiff F) + \mu(E)  \tag{by def} \\
    &= \mu(E \symdiff F) +  \nu(E \symdiff F) + \mu(F)  \tag{$\nu$ positive} \\
    &= d(E, F) + \mu(F) \,.  \tag{by def}
\end{align}
This implies that, for all $E, F \in \Sigma$, $d(E,F) \geq |\mu(E) - \mu(F)|$, hence $\mu \colon \Sigma \to \R[]$ is $1$-Lipschitz continuous. Analogously, also $\nu \colon \Sigma \to \R[]$ is $1$-Lipschitz continuous. Then, continuity of $|\mu - \nu| \colon \Sigma \to \R[]$ follows by composition of continuous functions. Moreover, $|\mu - \nu|$ is bounded since, by hypothesis, $\mu$ and $\nu$ are finite.
\qed
\end{proof}

\begin{proof}[of Theorem~\ref{thm:equiDist}]
The thesis follows by Lemmas~\ref{lem:vardistonfield} and \ref{lem:sigmaEquiv}, noticing that $\denot{\MTL}$ and $\denot{\DTA}$ are fields. \qed
\end{proof}

\begin{proof}[of Theorem~\ref{th:tracedist}]
($\supseteq$) Immediate by $\mathcal{T}_\M \subseteq \Sigma_\M$. ($\subseteq$) It follows by Hahn-Kolmogorov extension theorem, by noticing that the family $\mathcal{F}$ consisting of all finite unions of trace cylinders in $\cyl{S/_{\equiv_\ell}, \Sigma_{\R}}$ is indeed a field, and since $S/_{\equiv_\ell}$ has only pairwise disjoint subsets, the uniqueness of the extension w.r.t.\ $\mathcal{F}$ implies the uniqueness of the extension w.r.t. $\cyl{S/_{\equiv_\ell}, \Sigma_{\R}}$.
\qed
\end{proof}

\begin{proof}[of Lemma~\ref{lem:LpDist}]
The thesis follows from the following equalities:
\begin{align}
&\textstyle \sup_{E \in \Sigma_\M} | \Pr{s}(E) - \Pr{s'}(E) | = 
\sup_{E \in \sigma(\mathcal{W}_\mathcal{M})} | \Pr{s}(E) - \Pr{s'}(E) | \,, \label{eq:words}
\\[1ex]
& \textstyle 2 \cdot \sup_{E \in \sigma(\mathcal{W}_\mathcal{M})} | \Pr{s}(E) - \Pr{s'}(E) | = 
\sum_{ E \in \mathcal{W}_\mathcal{M} } {|\Pr{s}(E) - \Pr{s'}(E)|} \, . \label{eq:LpDist}
\end{align}

\begin{description}
\item[Equation~\eqref{eq:words}:] ($\geq$) directly follows by $\sigma(\mathcal{W}_\M) \subseteq \sigma(\mathcal{T}_\M) = \Sigma_\M$. 

($\leq$) We prove by induction on $n \in \N$, that for any pair of cylinder sets $C_n,C'_n$
of the form $\cyl{s_0, R_0, \dots, R_{n-1},s_n}$ and $\cyl{s_0, \R, \dots, \R,s_n}$ respectively, 
it holds that $|\Pr{s}(C_n) - \Pr{s'}(C_n)| \leq |\Pr{s}(C'_n) - \Pr{s'}(C'_n)|$. The base case, $n=0$, holds trivially.
For the inductive step, assume that $n \geq 1$ and, w.l.g.\, that $\Pr{s}(C_n) \geq \Pr{s'}(C_n)$. Then, the following hold 
\begin{align*}
&\Pr{s}(C_n) - \Pr{s'}(C_n) \tag{by def of $\Pr{}$} \\
&= P(s_{n-1},R_{n-1},s_{n}) \cdot \big( \Pr{s}(C_{n-1}) - \Pr{s'}(C_{n-1}) \big) \tag{by monotonicity} \\
&\geq P(s_{n-1},\R,s_{n}) \cdot \big( \Pr{s}(C_{n-1}) - \Pr{s'}(C_{n-1}) \big) \tag{by ind.\ hp.} \\
&\geq P(s_{n-1},\R,s_{n}) \cdot \big( \Pr{s}(C'_{n-1}) - \Pr{s'}(C'_{n-1}) \big) \tag{by def of $\Pr{}$} \\
& = \Pr{s}(C'_n) - \Pr{s'}(C'_n) 
\end{align*}
By Hahn-Kolmogorow extension theorem and Hahn decomposition theorem on signed measures it follows that 
for all $E \in \Sigma_\M$ there exists $E' \in \sigma(\mathcal{W}_\M)$ such that $| \Pr{s}(E) - \Pr{s'}(E) | \leq 
| \Pr{s}(E') - \Pr{s'}(E') |$. This proves \eqref{eq:words}.

\item[Equation~\eqref{eq:LpDist}:] 
Since $S$ is finite, $\mathcal{W}_\M$ has countably many elements, moreover, since $\eqlbl$ is an equivalence relation, they are also pairwise disjoint. Therefore, every measurable set $E \in \sigma(\mathcal{W}_\M)$ can be expressed as a countable union of cylinders taken from $\mathcal{W}_\M$. 
Let $\mathcal{P}$ be the family of all cylinders $C \in \mathcal{W}_\M$ such that $\Pr{s}(C) \geq \Pr{s'}(C)$. By Hahn decomposition theorem we have that $P = \bigcup \mathcal{P}$ is a positive set for the signed measure $\Pr{s} - \Pr{s'}$, that is
\begin{equation}
\textstyle \sup_{E \in \sigma(\mathcal{W}_\mathcal{M})} | \Pr{s}(E) - \Pr{s'}(E) | 
= \Pr{s}(P) - \Pr{s'}(P) \, . \label{eq:positiveset}
\end{equation}
Now we are ready to prove \eqref{eq:LpDist}
\begin{align*}
& \textstyle 2 \cdot \sup_{E \in \sigma(\mathcal{W}_\mathcal{M})} | \Pr{s}(E) - \Pr{s'}(E) | =  
	\tag{by \eqref{eq:positiveset}} \\
& = 2 \cdot \big( \Pr{s}(P) - \Pr{s'}(P) \big) = \\
& = \Pr{s}(P) - \Pr{s'}(P) + \Pr{s'}(P^c) - \Pr{s}(P^c) 
	\tag{$\sigma$-additivity} \\
& = \textstyle \sum_{C \in \mathcal{P}} \big( \Pr{s}(C) - \Pr{s'}(C) \big)
+ \sum_{C \not\in \mathcal{P}} \big( \Pr{s'}(C) - \Pr{s}(C) \big) \\
& = \textstyle \sum_{C \in \mathcal{W}_\M} | \Pr{s}(C) - \Pr{s'}(C) | \, .
\end{align*}
\end{description} 
\qed
\end{proof}

\begin{proof}[of Theorem~\ref{th:NPhardness}]
Let $\mathcal{G} = (V,E)$ be an undirected graph with $n$ vertices, i.e., $n = |V|$, and let fix $\kappa \in \Delta(\R)$. 
For simplicity, we assume $V = \{1,\dots, n\}$ and we use increasing chains of vertices $v_1 < \dots < v_n$ to represent their set.

Our construction will make use of some gadgets. Two gadgets can be composed to form a new one, this is done by ``gluing'' together two link points, namely, a sink and point is connected to a source link point forming transitions as depicted below
\begin{align*}
&\begin{tikzpicture}[->,>=stealth',shorten >=1pt,auto,node distance=1.2cm,
  thick,vertex node/.style={circle,draw},link node/.style={draw,fill=black},baseline]
   \node[link node] (linkend) {}; 
   \node[vertex node] (i1) at ($(linkend)+(-2,0.7)$) {$i_1$}; 
   \node[vertex node] (ih) at ($(linkend)-(2,0.7)$) {$i_h$};
   \node at ($(i1.south)!0.4!(ih.north)$) {\vdots};
   \node[link node] (linkstart) at ($(linkend)+(0.4,0)$) {}; 
   \node[vertex node] (o1) at ($(linkstart)+(2,0.7)$)  {$o_1$}; 
   \node[vertex node] (ok) at ($(linkstart)+(2,-0.7)$) {$o_k$};
   \node at ($(o1.south)!0.4!(ok.north)$) {\vdots};
  \path[every node/.style={font=\sffamily\scriptsize}]
    (i1) edge[above] node {$\tau^{i}_{1}$} (linkend) 
    (ih) edge[below] node {$\tau^{i}_{h}$} (linkend) 
    (linkstart) edge[above] node {$\tau^{o}_{1}$} (o1) 
    (linkstart) edge[below] node {$\tau^{o}_{k}$} (ok);
\end{tikzpicture}
&&
\leadsto
&&
\begin{tikzpicture}[->,>=stealth',shorten >=1pt,auto,node distance=1.2cm,
  thick,vertex node/.style={circle,draw},link node/.style={draw,fill=black},baseline]
   \node (link) {}; 
   \node[vertex node] (i1) at ($(link)+(-2,0.7)$) {$i_1$}; 
   \node[vertex node] (ih) at ($(link)-(2,0.7)$) {$i_h$};
   \node at ($(i1.south)!0.4!(ih.north)$) {\vdots};
   \node[vertex node] (o1) at ($(link)+(2,0.7)$)  {$o_1$}; 
   \node[vertex node] (ok) at ($(link)+(2,-0.7)$) {$o_k$};
   \node at ($(o1.south)!0.4!(ok.north)$) {\vdots};
  \path[every node/.style={font=\sffamily\scriptsize}]
    (i1) edge[above] node {$\tau^{i}_{1} \cdot \tau^{o}_{1}$} (o1) 
    (i1) edge[above, pos=0.8] node {$\tau^{i}_{1} \cdot \tau^{o}_{k}$} (ok) 
    (ih) edge[above, pos=0.2] node {$\tau^{i}_{h} \cdot \tau^{o}_{1}$} (o1) 
    (ih) edge[below] node {$\tau^{i}_{h} \cdot \tau^{o}_{k}$} (ok);
\end{tikzpicture}
\end{align*}

Moreover, a gadget can be rescaled by a factor $\varepsilon \in [0,1]$, denoted as $M[\varepsilon]$ as  
\begin{center}
\begin{tikzpicture}[->,>=stealth',shorten >=1pt,auto,node distance=1.2cm,
  thick,vertex node/.style={circle,draw},link node/.style={draw,fill=black}, baseline]
  \node[link node] (start) {};
  \node[link node] (l1) at ($(start)+(2,0.5)$) {};
  \node (l2)at ($(start)+(2,-0.5)$) {};
  \node[link node] (end) at ($(start)+(6,0)$) {};
  \node[link node] (l3) at ($(end)+(-2,0.5)$) {};
  \node (l4)at ($(end)+(-2,-0.5)$) {};
  
  \node (Lv1) at ($(l1)!0.5!(l3)$) {$M$};
  \draw[dashed] ($(l1.center)+(0,.4)$) rectangle ($(l3.center)-(0,.4)$);
  \node[vertex node] (Lvn) at ($(l2)!0.5!(l4)$) {$\beta$};
  
  \path[every node/.style={font=\sffamily\scriptsize}]
    (start) edge node[above] {$\varepsilon$} (l1)
    (start) edge node[below] {$1-\varepsilon$} (Lvn)
    (l3) edge node[above] {1} (end) 
    (Lvn) edge node[below] {1} (end);
\end{tikzpicture}
\end{center}
where $\beta$ is a label that doesn't occur in $M$.

Now we are ready to introduce the construction of the SMMs that will be used in the proof. 
We start by describing the SMM $\M_\mathcal{G}$. In $\M_\mathcal{G}$ each state has residence time distribution $\kappa$ and is labelled over the alphabet $V \cup \{\alpha,\omega\}$. Roughly, the purpose of $\M_\mathcal{G}$ is to allow the identification of each clique in $\mathcal{G}$ by measuring certain word cylinders in $\M_\mathcal{G}$.
The model $\M_\mathcal{G}$ consists of a start-state and an (absorbing) end-state, respectively labelled with $\alpha$ and $\omega$, and $n$ different gadgets $L(v)$ associated with each vertex $v \in V$. Each gadget $L(v)$ is connected with the start-state with entering probability $2^{\mathit{deg}(v)}/\gamma$, where $\gamma = \sum_{v \in V} 2^{\mathit{deg}(v)}$, and to the end-state with exit probability $1$, as depicted below
\begin{center}
\begin{tikzpicture}[->,>=stealth',shorten >=1pt,auto,node distance=1.2cm,
  thick,vertex node/.style={circle,draw},link node/.style={draw,fill=black}, baseline]
  \node[vertex node] (start) {$\alpha$};
  \node[link node] (l1) at ($(start)+(2,0.7)$) {};
  \node[link node] (l2)at ($(start)+(2,-0.7)$) {};
  \node[vertex node] (end) at ($(start)+(6,0)$) {$\omega$};
  \node[link node] (l3) at ($(end)+(-2,0.7)$) {};
  \node[link node] (l4)at ($(end)+(-2,-0.7)$) {};
  
  \node (Lv1) at ($(l1)!0.5!(l3)$) {$L(v_1)$};
  \draw[dashed] ($(l1.center)+(0,.4)$) rectangle ($(l3.center)-(0,.4)$);
  \node (Lvn) at ($(l2)!0.5!(l4)$) {$L(v_n)$};
  \draw[dashed] ($(l2.center)+(0,.4)$) rectangle ($(l4.center)-(0,.4)$);
  \node at ($(Lv1)!0.4!(Lvn)$) {$\vdots$};
  
  \path[every node/.style={font=\sffamily\scriptsize}]
    (start) edge node[above] {$\frac{2^{\mathit{deg}(v_1)}}{\gamma}$} (l1)
    (start) edge node[below] {$\frac{2^{\mathit{deg}(v_n)}}{\gamma}$} (l2)
    (l3) edge node[above] {1} (end) 
    (l4) edge node[below] {1} (end);
\end{tikzpicture}
\end{center}

The purpose of a gadget $L(v)$ is to measure with uniform probability all the cylinders of the form 
$\cyl{s_1,\R,\dots,\R,s_k}$ for some $1 \leq k \leq n$, such that
\begin{itemize}
\item $\ell(s_1) < \dots < \ell(s_k)$ is an increasing sequence of vertices;
\item for all $1 \leq i \leq k$, either $\ell(s_i) = v$ or $(v, \ell(s_i)) \in E$;
\item $\ell(s_i) = v$ for some $1 \leq i \leq k$.
\end{itemize}
$L(v)$ is the sequential composition 
of gadgets $H_v(1)\dots H_v(n)$ where
\begin{equation*}
H_v(u) =
\begin{cases}
\begin{tikzpicture}[->,>=stealth',shorten >=1pt,auto,node distance=1.2cm,
  thick,vertex node/.style={circle,draw},link node/.style={draw,fill=black}, baseline]
  \node[link node] (l1) {};
  \node[vertex node] (v1) [right of=l1] {$v$};
  \node[link node] (l2) [right of=v1] {};
  \path[every node/.style={font=\sffamily\scriptsize}]
    (l1) edge node {1} (v1) 
    (v1) edge node {1} (l2);
\end{tikzpicture} & \text{if $u = v$}
\\[2ex]
\begin{tikzpicture}[->,>=stealth',shorten >=1pt,auto,node distance=1.2cm,
  thick,vertex node/.style={circle,draw},link node/.style={draw,fill=black},baseline]
  \node[link node] (l2) [right of=v1] {};
  \node[vertex node] (v2) [right of=l2] {$u$};
  \node[link node] (l3) [right of=v2] {};
  \path[every node/.style={font=\sffamily\scriptsize}]
    (l2) edge node {$\frac{1}{2}$} (v2) 
    (l2) edge[bend right] node [below] {$\frac{1}{2}$} (l3)
    (v2) edge node {1} (l3);
\end{tikzpicture} & \text{if $(v, u) \in E$}
\\[2ex]
\begin{tikzpicture}[->,>=stealth',shorten >=1pt,auto,node distance=1.2cm,
  thick,vertex node/.style={circle,draw},link node/.style={draw,fill=black}, baseline]
  \node[link node] (l1) {};
  \node (v1) [right of=l1] {};
  \node[link node] (l2) [right of=v1] {};
  \path[every node/.style={font=\sffamily\scriptsize}]
    (l1) edge node {1} (l2);
\end{tikzpicture} & \text{otherwise}
\end{cases}
\end{equation*}
Intuitively, the $L$-gadgets are used to count the number of vertices in a maximum clique of $\mathcal{G}$. Indeed, for any $v \in V$, it holds that
\begin{itemize}
\item all the paths in $L(v)$ have the same probability and, 
\item for each increasing sequence $V'$ of vertices there is at most one path in $L(v)$ that generates it. In particular, those which are recognized always have $v$ and cannot contain any vertex which is not adjacent to $v$ in $\mathcal{G}$.
\end{itemize}

Note that the number of paths in any gadget $L(v)$ is $2^{\mathit{deg}(v)}$. Therefore, $\M_\mathcal{G}$ measures with uniform probability, $1/\gamma$, each path from the start-state, say $s$, to the end-state. Therefore
\begin{equation}
\Pr[\M_\mathcal{G}]{s}(\alpha v_1 \cdots v_k \omega) = k / \gamma
\iff \{v_1,\dots,v_k\} \text{ is a clique in $\mathcal{G}$} \label{eq:clique}
\end{equation}
where, by abuse of notation, we denote with the string $\sigma_1 \cdots \sigma_n$ the word cylinder set $\cyl{ \ell^{-1}(\sigma_1),\R, \dots,\R,\ell^{-1}(\sigma_n) }$. 

Let now describe the construction of another SMM, denoted by $\M_V$. As before, each state in $\M_V$ has residence time distribution $\kappa$ and is labelled over the alphabet $V \cup \{\alpha,\omega\}$. The purpose of $\M_V$ is to generate with uniform probability, $1/ 2^{n}$, all the increasing sequences of vertices in $V$. This is achieved by constructing $\M_V$ as the following sequential composition of gadgets
\begin{center}
\begin{tikzpicture}[->,>=stealth',shorten >=1pt,auto,node distance=1.2cm,
  thick,vertex node/.style={circle,draw},link node/.style={draw,fill=black}]
  \node[vertex node] (start) [left of=l1] {$\alpha$};
  \node[link node] (l1) {};
  \node[vertex node] (v1) [right of=l1] {$1$};
  \node[link node] (l2) [right of=v1] {};
  \node(v2) [right of=l2] {$\cdots$};
  \node[link node] (l3) [right of=v2] {};
  \node[vertex node] (v3) [right of=l3] {$n$};
  \node[link node] (l4) [right of=v3] {};
  \node[vertex node] (end) [right of=l4] {$\omega$};
  %
%
  \path[every node/.style={font=\sffamily\scriptsize}]
    (start) edge node {$1$} (l1)
    (l1) edge node {$\frac{1}{2}$} (v1) 
    (l1) edge[bend right] node [below] {$\frac{1}{2}$} (l2)
    (v1) edge node {$1$} (l2)
    (l3) edge node {$\frac{1}{2}$} (v3) 
    (l3) edge[bend right] node [below] {$\frac{1}{2}$} (l4)
    (v3) edge node {$1$} (l4)
    (l4) edge node {$1$} (end);
\end{tikzpicture}
\end{center}

Now we are ready to show that if one can compute $\dist$ in polynomial time, than he can solve Max Clique in polynomial time too. 
Let $x_i$ be the number of word cylinders $w \in \alpha V^* \omega$ such that $\Pr[\M_\mathcal{G}]{s}(w) = i/\gamma$. If we know the maximum $j$ such that $x_j \neq 0$, by \eqref{eq:clique} we have that the maximum clique size of $\mathcal{G}$ is $j$.
Consider the vector $\vec{x}$, we show that, it represents the solution of a linear system of equations in $n$ unknown where the $i$-th equation is constructed by cases on $1\leq i \leq n$ as:
\begin{description}
\item[if $i 2^n \leq \gamma$)] we construct $\M_i$ as the disjoint union of $\M_\mathcal{G}$ and $\M_V[i 2^n / \gamma]$\footnote{When $\M$ is not a gadget, its rescaling $\M[\varepsilon]$ is obtained by first swapping the start-state (resp.\ end-state) of $\M$ with the source (resp.\ target) link point of the rescaling, then gluing as usual.} where $s$ and $s'$ are their respective start-points. Then the following hold
\begin{align*}
2 \cdot \dist^{\M_i}(s,s') &= \textstyle \sum_{E \in \mathcal{W}_\M} |\Pr{s}(E) - \Pr{s'}(E) | \\
& \textstyle = \Pr{s'}(\alpha\beta) + \sum_{w \in \alpha V^* \omega } |\Pr[\M_\mathcal{G}]{s}(w) - \Pr{s'}(w) | \\
& =  \left( 1 - \frac{i 2^n}{\gamma} \right) + \sum_{w \in \alpha V^* \omega } \left|\Pr[\M_\mathcal{G}]{s}(w) - \frac{i 2^n}{\gamma} \frac{1}{2^n} \right| \\
& =   \left( 1 - \frac{i 2^n}{\gamma} \right) + \frac{1}{\gamma} \sum_{j = 1}^{n} x_j | j - i |
\end{align*}

\item[if $i 2^n > \gamma$)] we construct an SMM $\M_i$ as the disjoint union of $\M_\mathcal{G}[\gamma / i 2^n]$ and $\M_V$ where $s$ and $s'$ are their respective start-points. Then the following hold
\begin{align*}
2 \cdot \dist^{\M_i}(s,s') &= \textstyle \sum_{E \in \mathcal{W}_\M} |\Pr{s}(E) - \Pr{s'}(E) | \\
& \textstyle = \Pr{s}(\alpha\beta) + \sum_{w \in \alpha V^* \omega } |\Pr{s}(w) - \Pr[\M_V]{s'}(w) | \\
& =  \left( 1 - \frac{\gamma}{i 2^n} \right) + \sum_{w \in \alpha V^* \omega } \left| \frac{\gamma}{i 2^n} \Pr[\M_\mathcal{G}]{s}(w) - \frac{1}{2^n} \right| \\
& =   \left( 1 - \frac{\gamma}{i 2^n} \right) + \frac{1}{2^n} \sum_{j = 1}^{n} x_j | j - i |
\end{align*}
\end{description}
In summary, we obtain the linear system $\vec{b} = A \vec{x}$ with equations of the form 
\begin{equation}
\begin{aligned}
&2\gamma \cdot \dist^{\M_i}(s,s') + i 2^n - \gamma = \textstyle\sum_{j = 1}^{n} x_j | j - i |  && \forall 1 \leq i \leq n. \, i 2^n \leq \gamma\\
&2^{n+1} \cdot \dist^{\M_i}(s,s') + \frac{\gamma}{i} - 2^n = \textstyle\sum_{j = 1}^{n} x_j | j - i | && \forall 1 \leq i \leq n. \, i 2^n > \gamma
\end{aligned}
\label{eq:toeplixsystem}
\end{equation}
One can notice that $A$ is an invertible Toeplix matrix. Therefore, provided that one has all the values $\dist^{\M_i}(s,s')$, we can efficiently solve $\vec{b} = A \vec{x}$ in $\Theta(n^2)$. Since the construction of each $\M_i$ is in $O(\mathit{poly}(n))$, we conclude that computing $\dist$ for generic SMMs is \textbf{NP}-hard.
\qed
\end{proof}

\begin{proof}[of Proposition~\ref{prp:inapproximability}]
Let $\mathcal{G}(V,E)$ be an undirected graph with $|V| = n$. In Proof of Theorem~\ref{th:NPhardness} we showed that one can solve Max Clique by solving the Toeplix system $\vec{b} = A \vec{x}$. Specifically, the size of the max clique of $\mathcal{G}$ is $\norm{\vec{x}}$. 

Let define two column vectors $\vec{d},\vec{c} \in \R[]^n$ and a diagonal matrix $D \in \R[]^{n \times n}$ as follows
\begin{align*}
	&d_i = \dist^{\M_i}(s,s')&& 
	c_i = \begin{cases}
		\frac{\gamma - i 2^n}{2 \gamma} & \text{if $i2^n \leq \gamma$} \\
		\frac{1}{2} \left(1 - \frac{\gamma}{i 2^{n+1}} \right) & \text{if $i2^n > \gamma$}
		\end{cases}
	&&
	D_{i,i} = \begin{cases}
		\frac{1}{2 \gamma} & \text{if $i2^n \leq \gamma$} \\
		\frac{1}{2^{n+1}} & \text{if $i2^n > \gamma$}
		\end{cases} \, .
\end{align*} 
The system $\vec{b} = A \vec{x}$ (see \eqref{eq:toeplixsystem} in Proof of Theorem~\ref{th:NPhardness}) can be rewritten as $\vec{d} = D(A\vec{x}) + \vec{c}$. 

Assume that there exists $\vec{d}'$ that over-approximates $\vec{d}$ within an absolute error $\epsilon > 0$, i.e., $\vec{d} \sqsubseteq \vec{d}'$ and $\norm{\vec{d} - \vec{d}'} \leq \epsilon$, and let $\vec{x}'$ be the column vector that solves $\vec{d}' = D(A\vec{x}') + \vec{c}$. Then, the following equalities hold
\begin{align*}
 \vec{d} - \vec{d}' & = D(A\vec{x}) + \vec{c} - D(A\vec{x}') + \vec{c} \\
 	& = D A (\vec{x} - \vec{x}').
\end{align*}
Therefore $\vec{x} - \vec{x}' = A^{-1} D^{-1} (\vec{d} - \vec{d}')$ from which it follows that 
\begin{align*}
  \norm{\vec{x} - \vec{x}'} &= \norm{A^{-1} D^{-1} (\vec{d} - \vec{d}')} \\
  &= \norm{A^{-1}} \norm{D^{-1}} \norm{\vec{d} - \vec{d}'} 
    \tag{sub-multiplicativity} \\
  	& \leq \norm{A^{-1}} \norm{D^{-1}} \epsilon \,.
    \tag{by $\norm{\vec{d} - \vec{d}'} \leq \epsilon$}
\end{align*}
Since $D$ is a diagonal matrix, its inverse is such that $D^{-1}_{i,i} = 1 / D_{i,i}$, therefore $\norm{D^{-1}} = n2^n$. Recalling that $A$ is the Toeplix matrix such that $A_{i,j} = |j-i|$, we let the reader verify that $\norm{A^{-1}} = 1 / (n-1)$. This implies that
\begin{align}
	&|x_i - x'_i| \leq\frac{n2^n}{ n-1}  \epsilon &\forall 1\leq  i \leq n \,. 
	\label{eq:abserr}
\end{align}
The inequality \label{eq:abserr}, states that assuming $\norm{\vec{d} - \vec{d}'} \leq \epsilon$ we can obtain $\vec{x}'$ (as the solution of the system $\vec{d}' = D(A\vec{x}') + \vec{c}$) that approximates the size of the max clique of $\mathcal{G}$ within an absolute error $\frac{n2^n}{ n-1} \epsilon$. Therefore, let $x_j$ be the size of the max clique in $\mathcal{G}$, we can enforce $x'_j$ to become an approximation of $x_j$ within a multiplicative factor $\alpha$, that is $x'_j \leq \alpha x_j$ by enforcing $\epsilon$ to hold $\epsilon \leq (\alpha -1) x_j$. Since $x_j \geq 1$, this requirement is implied by $\epsilon \leq (\alpha -1)$. By this and \eqref{eq:abserr} we obtain that 
\begin{equation}
\epsilon \leq \frac{n-1}{n 2^n} (1 - \alpha) \,. 
\label{eq:alphaerr}
\end{equation}
Since the size of the SMMs used in the reduction is $\Theta(n^2)$ we have that an $\epsilon$ that satisfies \eqref{eq:alphaerr} can be described by a function, say $f$, that depends on the size of the model and the given $\alpha$.
\qed
\end{proof}

\begin{proof}[of Lemma~\ref{lem:couplCorrespondence}] 
First, note that $\eta \colon \pathsC{S} \to \paths{S} \times \paths{S}$ is measurable (the proof is easy and we omit it). 
Let $\C = (S', A', \tau', \rho', \ell')$ be a coupling for $\M$. We show that $\Pr[\C]{s,s'}\#\eta \in \coupling{\Pr{s}}{\Pr{s'}}$. To this end it suffices to prove that for every cylinder $C \in \cyl{S, \Sigma_{\R}}$, the following hold:
\begin{align*}
  \Pr[\C]{s,s'}\#\eta(C \times \paths{S}) = \Pr{s}(C)
  &&\text{and}&&
  \Pr[\C]{s,s'}\#\eta(\paths{S} \times C) = \Pr{s'}(C) \,.
\end{align*}
We proceed by induction on the rank $n \geq 0$ of the cylinders $C \in \cyl{S, \Sigma_{\R}}$.
\begin{itemize}[label={}, fullwidth, topsep=0.5ex, noitemsep]
  \item (Base case $n=0$) Let $C = \cyl{s_0}$, then we have that
  \begin{align*}
    \Pr[\C]{s,s'}\#\eta(\cyl{s_0} \times \paths{S}) 
      &= \Pr[\C]{s,s'}(\cyl{ \set{s_0}{} \times S }) \\
      &= \chi_{\set{(s,s')}{}}(\set{s_0}{} \times S) \\
      &= \chi_{\set{s}{}}(s_0) = \Pr{s}(\cyl{s_0}) \,.
  \end{align*}
  \item (Inductive step $n \geq 0$) Let $C = \cyl{ s_0, R_0, \dots, R_{n}, s_{n+1} }$, then we have
  \begin{align*}
    &\Pr[\C]{s,s'}\#\eta(\cyl{ s_0, R_0, \dots, R_{n}, s_{n+1} } \times \paths{S}) = {} \\
      &= \Pr[\C]{s,s'}( \cyl{ \set{s_0}{} \times S, R_0 \times \R, \dots, R_{n} \times \R, \set{s_{n+1}}{} \times S } ) \\
      &= \Pr[\C]{s,s'}\#\eta(\cyl{ s_0, \dots, s_{n} } \times \paths{S}) \cdot \textstyle
        \sum_{u \notin A}P'((s_{n},u), R_n \times \R, \set{s_{n+1}}{} \times S) \\
      &=\Pr{s}(\cyl{ s_0, \dots, s_{n} }) \cdot \textstyle
        \sum_{u \notin A}P'((s_{n},u), R_n \times \R, \set{s_{n+1}}{} \times S) \,. \\
  \intertext{If $s_n \in A$, the summation $\sum_{u \notin A}P'((s_{n},u), R_n \times \R, \set{s_{n+1}}{} \times S)$ = 0, and we are done, otherwise}
      &=\Pr{s}(\cyl{ s_0, \dots, s_{n} }) \cdot \textstyle
        \sum_{u \notin A} \rho'(s_n,u)(R_n \times \R) \cdot \tau'(s_n,u)(\set{s_{n+1}}{} \times S) \\
      &=\Pr{s}(\cyl{ s_0, \dots, s_n }) \cdot \textstyle
        \sum_{u \notin A} \rho(s_n)(R_n) \cdot \tau(s_n)(s_{n+1}) \\
      &=\Pr{s}(\cyl{ s_0, R_0, \dots, R_{n-1}, s_{n} }) \,.
  \end{align*}
  The right marginal follows similarly. \qed
\end{itemize}
\end{proof}

\begin{proof}[of Lemma~\ref{lem:separability}] 
($\Rightarrow$) Assume $\pi \unsep \pi'$ and let $C \in \mathcal{T}_\M$. We prove that $\pi \in C$ implies  $\pi' \in C$. Let $C = \cyl{[s_0]_{\eqlbl}, R_0, \dots,R_{n-1},[s_n]_{\eqlbl}}$ for some $s_i \in S$ and $R_i \in \Sigma_{\R}$ ($i = 0..n$). By $\pi \in C$ we have that, for all $i \leq n$, $\stateat{\pi}{i} \eqlbl s_i$ and, for all $i < n$, $\timeat{\pi}{i} \in R_i$. From $\pi \unsep \pi'$, we have that $\stateat{\pi}{j} \eqlbl \stateat{\pi'}{j}$ and $\timeat{\pi}{j} = \timeat{\pi'}{j}$ for all $j \in \N$. Hence $\pi' \in C$.

($\Leftarrow$) We proceed by contraposition, proving that if $\pi \unsep \pi'$ then there exists a cylinder $C \in \mathcal{T}_\M$ such that $\pi \in C$ and $\pi' \not\in C$. By $\pi \unsep \pi'$ we have that $\stateat{\pi}{k} \not\eqlbl \stateat{\pi'}{k}$ or $\timeat{\pi}{k} \neq \timeat{\pi'}{k}$ for some $k \in \N$.

If $\stateat{\pi}{k} \not\eqlbl \stateat{\pi'}{k}$ holds we pick $C := \cyl{[\stateat{\pi}{0}]_{\eqlbl},\{\timeat{\pi}{0}\},\dots,\{\timeat{\pi}{k-1}\},[\stateat{\pi}{k}]_{\eqlbl}}$; if $\timeat{\pi}{k} \neq \timeat{\pi'}{k}$ holds we pick $C := \cyl{[\stateat{\pi}{0}]_{\eqlbl},\{\timeat{\pi}{0}\},\dots,\{\timeat{\pi}{k}\},[\stateat{\pi}{k+1}]_{\eqlbl}}$. In both cases $C \in \mathcal{T}_\M$, $\pi \in C$, and $\pi' \notin C$.
\qed
\end{proof}

\begin{proposition} \label{prop:measurablesetAt} 
Let $\M = (S, A, \tau, \rho, \ell)$ be an SMM, $E \subseteq \paths{S}$, and define $E@k \subseteq \paths{S}$ by induction on $k \in \N$ as follows 
\begin{align*}
	E@ 0 = E \,, && 
	E@ (k +1) = tl_\M^{-1}(E@ k) \, .
\end{align*}
Then, for all $k \in \N$, $E@k = \set{\pi}{ \tail{\pi}{k} \in E }$, and if $E \in \Sigma_\M$, then $E@ k \in \Sigma_\M$.
\end{proposition}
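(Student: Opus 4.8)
The plan is to establish both claims by separate inductions on $k \in \N$, relying on the recursive definition $E@(k+1) = tl_\M^{-1}(E@k)$ together with the measurability of $tl_\M$ proved in Lemma~\ref{lem:tailmeasurable}. The two parts are logically independent, so I would keep them apart.

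First I would prove the set-theoretic characterization $E@k = \set{\pi}{\tail{\pi}{k} \in E}$ by induction on $k$. The base case $k = 0$ is immediate, since $E@0 = E$ and $\tail{\pi}{0} = \pi$, so that $\set{\pi}{\tail{\pi}{0} \in E} = E$. For the inductive step, assuming the claim for $k$, I would unfold the definition of the preimage and apply the induction hypothesis:
\begin{align*}
E@(k+1) &= tl_\M^{-1}(E@k) = \set{\pi}{tl_\M(\pi) \in E@k} \\
&= \set{\pi}{\tail{(\tail{\pi}{1})}{k} \in E} \,,
\end{align*}
where the last step uses $tl_\M(\pi) = \tail{\pi}{1}$. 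The crux is then the re-indexing identity $\tail{(\tail{\pi}{1})}{k} = \tail{\pi}{k+1}$: taking the $k$-th tail of the $1$-shift of $\pi$ yields the $(k+1)$-th tail of $\pi$, which follows directly from the definition $\tail{\pi}{i} = x_i, t_i, x_{i+1}, t_{i+1}, \dots$ by shifting the base index. Substituting gives $E@(k+1) = \set{\pi}{\tail{\pi}{k+1} \in E}$, closing the induction.

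Second, I would prove the measurability claim, again by induction on $k$. The base case is just the hypothesis $E \in \Sigma_\M$. For the step, assuming $E@k \in \Sigma_\M$, Lemma~\ref{lem:tailmeasurable} guarantees that $tl_\M$ is measurable with respect to $\sigma(\mathcal{T}_\M) = \Sigma_\M$, hence the preimage $E@(k+1) = tl_\M^{-1}(E@k)$ again lies in $\Sigma_\M$.

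I expect no serious obstacle here, as the statement is routine. The only point requiring genuine care is the bookkeeping of indices in the tail-composition identity $\tail{(\tail{\pi}{1})}{k} = \tail{\pi}{k+1}$; this is the engine driving the first induction and is precisely what makes the recursive definition of $E@k$ agree with the intended meaning ``the $k$-th tail lands in $E$''.
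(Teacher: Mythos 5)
Your proposal is correct and follows essentially the same route as the paper: an induction on $k$ whose step combines the measurability of $tl_\M$ (Lemma~\ref{lem:tailmeasurable}) with unfolding $E@(k+1) = tl_\M^{-1}(E@k)$ and the tail re-indexing identity $\tail{(\tail{\pi}{1})}{k} = \tail{\pi}{k+1}$. The only difference is organizational --- you run two separate inductions where the paper merges both claims into a single one, and you make explicit the re-indexing step that the paper compresses into the annotation ``def.\ $tl_\M$''.
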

\begin{proof}
We proceed by induction on $k \geq 0$.
\begin{itemize}[label={}, topsep=0pt, itemsep=0.5ex]
  \item (Base case: $k = 0$) Trivial.
  \item (Inductive step: $k \geq 0$) $E@(k +1) = tl_\M^{-1}(E@k)$. By inductive hypothesis $E@k \in \Sigma_\M$, so that, by Lemma~\ref{lem:tailmeasurable}, $tl_\M^{-1}(E@k) \in \Sigma_\M$. Moreover,
\begin{align*}
  &E@(k+1) 
  = tl_\M^{-1}(E@k) \tag{def. $@$} \\
  &= tl_\M^{-1}( \set{\pi \in \paths{S}}{ \tail{\pi}{k} \in E }) \tag{ind. hp.} \\
  &= \set{\pi \in \paths{S}}{ \tail{\pi}{k+1} \in E } \tag{def. $tl_\M$}
\end{align*}
\end{itemize}
\qed
\end{proof} 

\begin{proof}[of Lemma~\ref{lem:congmeasurable}] \todo{checked}
We characterize $\sep \subseteq \paths{S} \times \paths{S}$ as a countable union of measurable rectangles in $\Sigma_\M \otimes \Sigma_\M$.

\newcommand{\DiffS}{\parensmathoper{\mathit{DiffS}}}
\newcommand{\DiffT}{\parensmathoper{\mathit{DiffT}}}

Let $\DiffS{k}$ and $\DiffT{k}$ be defined, for $k \geq 0$, as follows
\begin{align*}
  \DiffS{k} &:= \textstyle\bigcup_{C \in S/_{\eqlbl}} \cyl{C}@k \times \cyl{S \setminus C}@k \,,
  \\
  \DiffT{k} &:= \textstyle\bigcup_{ t,t' \in \Q}
    \cyl{S, (t,t'), S}@k \times \cyl{S, \R \setminus (t,t'), S}@k \,.
\end{align*}
where, for $E \subseteq \paths{S}$, $E@k$ is defined as in Proposition~\ref{prop:measurablesetAt}. For any equivalence class $C \in S /_{\eqlbl}$, the cylinders $\cyl{C}$ and $\cyl{S \setminus C}$ are in $\Sigma_\M$; the same holds for $\cyl{S, (t,t'), S}$ and $\cyl{S, \R {\setminus} (t,t'), S}$, for all $t, t' \in \Q$. Therefore, by Proposition~\ref{prop:measurablesetAt}, for all $k \in \N$, $\DiffS{k}$ and $\DiffT{k}$ are a countable union of rectangles in $\Sigma_\M \otimes \Sigma_\M$.
Now we show that $\sep = \bigcup_{k \in \N} \big( \DiffS{k} \cup \DiffT{k} \big)$.
\begin{itemize}[label={}, fullwidth, topsep=0ex, itemsep=0.5ex]

\item ($\subseteq$) Let $\pi \sep \pi'$. By definition $\ell^\omega(\pi) \neq \ell^\omega(\pi')$, that is, there exists an index $k \in \N$ such that $\stateat{\pi}{k} \not\eqlbl \stateat{\pi'}{k}$ or $\timeat{\pi}{k} \neq \timeat{\pi'}{k}$. If $\stateat{\pi}{k} \not\eqlbl \stateat{\pi'}{k}$, by Proposition~\ref{prop:measurablesetAt}, $(\pi, \pi') \in \cyl{[\stateat{\pi}{k}]_{\eqlbl}}@k \times \cyl{S \setminus [\stateat{\pi}{k}]_{\eqlbl}}@k$.

 If $\timeat{\pi}{k} \neq \timeat{\pi'}{k}$ holds, let $\varepsilon = |\timeat{\pi}{k} - \timeat{\pi'}{k}|$. Since $\Q$ is dense in $\R$, every open set has nonempty intersection with $\Q$, so that there exist $t \in \Q \cap (\timeat{\pi}{k} - \varepsilon, \timeat{\pi}{k})$ and $t' \in \Q \cap (\timeat{\pi}{k}, \timeat{\pi}{k}+\varepsilon)$. Clearly, $\timeat{\pi}{k} \in (t,t')$ and $\timeat{\pi'}{k} \notin (t,t')$, therefore, by Proposition~\ref{prop:measurablesetAt}, $(\pi, \pi') \in \cyl{S, (t,t'), S}@k \times  \cyl{S, \R {\setminus} (t,t'), S}@k$.

\item ($\supseteq$) Let $(\pi, \pi') \in \bigcup_{k \in \N} \big( \DiffS{k} \cup \DiffT{k} \big)$, by construction there exists $k \in \N$ such that $(\pi, \pi') \in \DiffS{k}$ or $(\pi,\pi') \in \DiffT{k}$.
\begin{align*}
  \DiffS{k}
    &= \textstyle
    \bigcup_{C \in S/_{\eqlbl}} \cyl{C}@k \times \cyl{S \setminus C}@k  \tag{by def.}\\
    &= \textstyle
    \bigcup_{C \in S/_{\eqlbl}} \set{(\pi, \pi')}{ \pi \in \cyl{C}@k \text{ and } \pi' \in \cyl{S \setminus C}@k}
    \\
    &= \set{(\pi, \pi')}{ \stateat{\pi}{k} \not\eqlbl \stateat{\pi'}{k} } \,,
    \tag{by Prop.~\ref{prop:measurablesetAt}}
\end{align*}
\begin{align*}
  \DiffT{k}
    &= \textstyle\bigcup_{ t,t' \in \Q}
    \cyl{S, (t,t'), S}@k \times \cyl{S, \R {\setminus} (t,t'), S}@k \\
    &= \textstyle\bigcup_{ t,t' \in \Q}
    \left\{ (\pi, \pi') \left| 
      \begin{array}{c}
      \pi \in \cyl{S, (t,t'), S}@k \text{ and } \\
      \pi' \in \cyl{S, \R {\setminus} (t,t'), S}@k
      \end{array}\right. \right\}
    \\
    &= \textstyle\bigcup_{ t,t' \in \Q}
    \set{ (\pi, \pi') }{ \timeat{\pi}{k} \in (t,t') \text{ and } \timeat{\pi'}{k} \notin (t,t') }
     \tag{by Prop.~\ref{prop:measurablesetAt}}
    \\
    &\subseteq \set{(\pi, \pi')}{ \timeat{\pi}{k} \neq \timeat{\pi'}{k}} \,.
\end{align*}
Therefore, there exists an index $k \in \N$ such that $\stateat{\pi}{k} \not\eqlbl \stateat{\pi'}{k}$ or $\timeat{\pi}{k} \neq \timeat{\pi'}{k}$. Thus, $\ell^\omega(\pi) \neq \ell^\omega(\pi')$, that is $\pi \sep \pi'$.
\qed
\end{itemize}
\end{proof} 

\begin{proof}[of Lemma~\ref{lem:dualityvariationandcoupling}] 
We first prove that $\sup_{E \in \Sigma_\M} |\Pr{s}(E) - \Pr{s'}(E)|$ is a lower bound for $\set{ \omega(\sep) }{ \omega \in \coupling{\Pr{s}}{\Pr{s'}} }$. Let $\omega \in \coupling{\Pr{s}}{\Pr{s'}}$ and $E \in \Sigma_\M$, then
\begin{align}
  \Pr{s}(E) 
    &= \omega(E \times \paths{S}) \tag{by $\omega \in \coupling{\Pr{s}}{\Pr{s'}}$} \\
    &\geq \omega(\paths{S} \times E \cap {\unsep}) \tag{by Lemma~\ref{lem:separability}} \\
    &= 1 - \omega((\paths{S} \times E)^c \cup {\sep}) \tag{by complement} \\
    &\geq 1 - \omega((\paths{S} \times E)^c) - \omega({\sep}) \tag{sub additivity} \\
    &= \omega(\paths{S} \times E) - \omega({\sep}) \tag{by complement} \\
    &= \Pr{s'}(E) - \omega({\sep}) \,. \tag{by $\omega \in \coupling{\Pr{s}}{\Pr{s'}}$}
\end{align}
Thus, by the generality of $\omega\in \coupling{\Pr{s}}{\Pr{s'}}$ and $E \in \Sigma_\M$, it immediately follows that $\sup_{E \in \Sigma_\M} |\Pr{s}(E) - \Pr{s'}(E)| \leq \min \set{ \omega(\sep) }{ \omega \in \coupling{\Pr{s}}{\Pr{s'}} }$.

Now we prove that there exists an optimal coupling $\omega^* \in \coupling{\Pr{s}}{\Pr{s'}}$ such that $\omega^*(\sep) = \sup_{E \in \Sigma_\M} |\Pr{s}(E) - \Pr{s'}(E)|$. The following proof is a slightly modification of~\cite[Theorem 5.2]{LindvallBook}. If $\Pr{s} = \Pr{s'}$, $\omega^*$ is the measure that assigns $0$ to all measurable sets contained in $\sep$ and $1$ otherwise. 
Let $\Pr{s} \neq \Pr{s'}$. For each $\unsep$-equivalence class $C \in \paths{S} /_{\unsep}$, fix a representative element $C^* \in C$, and define $\psi \colon \paths{S} \to \paths{S} \times \paths{S}$ by $\psi(\pi) = ([\pi]_{\unsep}^*, [\pi]_{\unsep}^*)$.
We prove that $\psi$ is measurable from $(\paths{S}, \Sigma_\M)$ to $(\paths{S}, \Sigma_\M) \otimes (\paths{S}, \Sigma_\M)$. To do this it suffices to show that for all $E, E' \in \Sigma_\M$, $\psi^{-1}(E \times E') \in \Sigma_\M$. This follows by Lemma~\ref{lem:separability}:
\begin{align*}
  \psi^{-1}(E \times E') 
    &= \set{\pi}{ ([\pi]_{\unsep}^*, [\pi]_{\unsep}^*) \in E \times E'}  \tag{by def. $\psi$} \\
    &= \set{ \pi }{ [\pi]_{\unsep}^* \in E \text{ and } [\pi]_{\unsep}^* \in E' }  \\
    &=  \set{ \pi }{ \pi \in E \text{ and } \pi \in E' } 
    \tag{by Lemma~\ref{lem:separability}} \\
    &= E \cap E' \in \Sigma_\M \,.
\end{align*}
Moreover $\psi^{-1}(\unsep) = \paths{S}$, since all timed paths have their $\unsep$-representative. 

Define the measure $\lambda \colon \Sigma_\M \to \R$ as $\lambda = \Pr{s} + \Pr{s'}$, and the functions $g, g' \colon \paths{S} \to \R$ as the Radon-Nikodym derivatives of $\Pr{s}$ and $\Pr{s'}$, respectively, w.r.t.\ $\lambda$, that is, $g = d \Pr{s}/d \lambda$ and $g' = d \Pr{s'}/d \lambda$. Note that $g$ and $g'$ are well defined, since $\Pr{s}$ and $\Pr{s'}$ are both absolutely continuous w.r.t.\ $\lambda$.
Now, denote by $g \wedge g'$ the point-wise meet of $g$ and $g'$ (i.e., $(g \wedge g')(\pi) = \min \set{g(\pi), g'(\pi)}{}$), and define the measures $\mu \colon \Sigma_\M \to [0,1]$ and $\mu^* \colon \Sigma_\M \otimes \Sigma_\M$ as follows
\begin{align*}
  d \mu/d\lambda = g \wedge g' \,, && \mu^* = \mu\#\psi \,.
\end{align*}
Since $\psi^{-1}(\unsep) = \paths{S}$, $\mu^*$ puts all its mass in $\unsep$. Call this mass $\gamma = \mu^*(\unsep)$, and define $\nu, \nu' \colon \Sigma_\M \to \R[]$ and $\omega^* \colon \Sigma_\M \otimes \Sigma_\M \to [0,1]$ as follows
\begin{align*}
  \nu = \Pr{s} - \mu \,, && \nu' = \Pr{s'} - \mu \,, && \omega^* = \frac{\nu \times \nu'}{1 - \gamma} + \mu^*
\end{align*}
Note that, by the assumption that $\Pr{s} \neq \Pr{s'}$, we have $\gamma < 1$, so that $\omega^*$ is well defined and, in particular, $\omega^*(\unsep) = \gamma$. Now we show that $\omega^* \in \coupling{\Pr{s}}{\Pr{s'}}$. Let $E \in \Sigma_\M$, then
\begin{align}
  \omega^*(E \times \paths{S})
    &= \frac{\nu(E) \cdot \nu'(\paths{S})}{1 - \gamma} + \mu^*(E \times \paths{S}) \tag{by def. $\omega^*$} \\
    &= \frac{\nu(E) \cdot (\Pr{s'}(\paths{S}) - \mu(\paths{S}))}{1 - \gamma} + \mu^*(E \times \paths{S}) 
    \tag{by def. $\nu$} \\
    &= \frac{\nu(E) \cdot (1 - \gamma)}{1 - \gamma} + \mu^*(E \times \paths{S}) 
    \tag{by def. $\mu^*$} \\
    &= \Pr{s}(E) - \mu(E) + \mu^*(E \times \paths{S})     \tag{by def. $\nu$} \\
    &= \Pr{s}(E) - \mu(E) + \mu(E)     \tag{by def. $\mu^*$} \\
    &= \Pr{s}(E) \,. \notag
\end{align}
Similarly $\omega^*(\paths{S} \times E) = \Pr{s'}(E)$, so that the marginals are correct and we indeed have a proper coupling. The following shows that $\omega^*$ is optimal
\begin{align}
  \sup_{E \in \Sigma_\M} |\Pr{s}(E) - \Pr{s'}(E)| 
   &= \sup_{E \in \Sigma_\M} | \int_E g \, d\lambda - \int_E g' \, d\lambda | \tag{Radon-Nikodym} \\
   &= \sup_{E \in \Sigma_\M} \int_E | g - g' | d\lambda \tag{linearity} \\
   &= \frac{1}{2} \int_{\paths{S}} | g - g' | d\lambda \tag{Jordan-Hahn decomposition} \\
   &= \frac{1}{2} \cdot 2 \left( 1 - \int_{\paths{S}} g \wedge g' d\lambda \right) \tag{*} \\
   &= 1 - \mu(\paths{S}) \tag{Radon-Nikodym} \\
   &= 1 - \gamma \tag{def. $\gamma$} \\
   &= 1 - \omega^*(\unsep) \tag{def. $\omega^*$ and $(\nu \times \nu')(\unsep) = 0$} \\
   &= \omega^*(\sep) \tag{complement}
\end{align}
where (*) follows since $2 - \int | g - g' | d\lambda = 2 \int g \wedge g' d\lambda$ (this can be understood considering the geometrical interpretation of integral as ``the area below a function''). This ends the proof, showing that the minimum element exists.
\qed
\end{proof}

\begin{proof}[of Lemma~\ref{lemma:CouplingReachability}]
Let the variable $p_{s,s'}$ denote $(\Pr[\C]{s,s'}\#\eta)(\sep)$. Our goal is to compute $p_{s,s'}$ for all $(s,s') \in S'$. Lemma~\ref{lem:congmeasurable} characterizes $\sep$ as the set of all pairs of times paths $(\pi,\pi')$ s.t.\ $\stateat{\pi}{k} \not\eqlbl \stateat{\pi'}{k}$ or $\timeat{\pi}{k} \neq \timeat{\pi'}{k}$ for some $k \in \N$.

Let consider the set of states $B = \set{(u,v) \in S'}{u \not\equiv v}$.
Clearly, if $B$ is not reachable from $(s,s')$ in the underlying graph of $\C$, then $p_{s,s'} = 0$. Note that this is the case when $s \equiv s$ and $(s,s') \in A'$, that is to say $s \equiv s$ and $s,s' \in A$.

Assume $s \not\equiv s'$, clearly we have that $p_{s,s'} = 1$ since $(s,s') \in B$.

Otherwise, if $(s,s') \not\in B \cup A'$, that is to say $s \equiv s'$ and $s,s' \not\in A$, we have that the probability that a coupled is performed from $(s,s')$ at different time points is the probability associated by $\rho'$ to the event $\{(x,y) \in \R \times \R \mid x \neq y\}$, denoted by $\rho'({\neq})$. Therefore we have that the following equation holds
$$p_{s,s'} = \rho'({\neq}) + (1 - \rho'({\neq})) \cdot \textstyle\sum_{(u,v) \in S'} \tau'(s,s')(u,v) \cdot p_{u,v} \,.$$
The equation states that $p_{s,s'}$ is the probability that one of the two following mutually exclusive events occurs:
\begin{enumerate}[topsep=0.5ex, noitemsep,label=\emph{(\roman*)}]
\item a coupled step from $(s,s')$ occurs at different coupled time points; or 
\item the same occurs starting from some other state $(u,v) \in S'$, after moving there from $(s,s')$ at the same coupled time points. \qed
\end{enumerate} 
\end{proof}

\begin{proof}[of Lemma~\ref{lem:post-fixed}]
Assume $\M = (S, A, \tau, \rho, \ell)$ and $\C = (S', A', \tau', \rho', \ell')$. To prove $f^\M \sqsubseteq d$, it suffices to show that $F^\M(d) \sqsubseteq d$. Indeed, by Tarski's fixed point theorem, $f^\M$ is a lower bound of $\set{d}{F^\M(d) \sqsubseteq d}$. Let $s,s' \in S$, then
\begin{enumerate}[topsep=0.5ex, noitemsep]
  \item if $s \not\equiv s'$, then $F^\M(d)(s,s') = 1 = \Gamma^\C(d)(s,s') = d(s,s')$;
  \item if $s \equiv s'$ and $s,s' \in A$ then $F^\M(d)(s,s') = 0 = \Gamma^\C(d)(s,s') = d(s,s')$;
  \item otherwise, note that for any $0 \leq \alpha \leq \alpha' \leq 1$ and
  $0 \leq \beta \leq \beta' \leq 1$ the following hold
  \begin{align*}
  	\alpha + (1- \alpha) \beta 
	&\leq \alpha + (1- \alpha) \beta' \tag{$\beta \leq \beta'$} \\
	&= \beta' - \beta' + \alpha + (1- \alpha) \beta' \notag \\
	&= \beta' - \alpha \beta' - (1- \alpha) \beta' + \alpha + (1- \alpha) \beta' \tag{$0 \leq \alpha \leq 1$} \\
	&= \beta' - \alpha \beta'  + \alpha  
	= \beta' + (1 - \beta') \alpha \notag \\
	&\leq \beta' + (1 - \beta') \alpha' \tag{$\alpha \leq \alpha'$} \\
	&= \alpha' + (1 - \alpha') \beta'. \tag{$0 \leq \beta \leq 1$}
  \end{align*}
 Moreover, the two following inequalities hold:
  \begin{align*}
  \tv{\rho(s)}{\rho(s')} 
  &= \inf \set{ \bar{\rho}(\neq_{\R})}{\bar{\rho} \in \coupling{\rho(s)}{\rho(s')}} 
  \tag{by \cite[Th.~5.2]{LindvallBook}} \\
  &\leq \rho'(s,s')(\neq_{\R}) \,, \tag{$\rho'(s,s') \in \coupling{\rho(s)}{\rho'(s')}$}
  \end{align*}
and
  \begin{align*}
  \mathcal{K}_d(\tau(s),\tau(s')) 
  &= \textstyle
  \min_{\omega \in \coupling{\tau(s)}{\tau(s')}} \textstyle\sum_{u,v \in S} d(u,v) \cdot \omega(u,v) 
  \tag{by def.} \\
  &\leq \textstyle\sum_{u,v \in S} d(u,v) \cdot \tau'(s,s')(u,v) \,. 
  \tag{$\tau'(s,s') \in \coupling{\tau(s)}{\tau(s')}$} 
  \end{align*}
From the above we have 
 \begin{align}
 	& F^\M(d)(s,s') = \notag \\
 	&= \tv{\rho(s)}{\rho(s')} + (1 - \tv{\rho(s)}{\rho(s')}) \, \mathcal{K}_d(\tau(s),\tau(s')) \tag{by def. $F^\M$} \\
 	&\leq \rho'(s,s')({\neq_{\R}}) + (1 - \rho'(s,s')({\neq_{\R}})) \, \textstyle \sum_{u,v \in S} d(u,v) \cdot \tau'(s,s')(u,v) \notag \\
 	&= \Gamma^\C(d)(s,s') \tag{by def. $\Gamma^\C$} \\
 	&= d(s,s'). \tag{by hypothesis}
 \end{align}
\end{enumerate}
This proves that $F^\M(d) \sqsubseteq d$.
\qed
\end{proof} 

\begin{proof}[of Lemma~\ref{lem:coupling-dist}]
We firstly show that, for any fixed $d \colon S \times S \to [0,1]$, there exists a coupling $\C = (S', A', \tau', \rho', \ell')$ for $\M$ such that $\Gamma^\C(d) = F^\M(d)$. We construct  $\C$ as follows: for each $s,s' \in S$ such that $s \eqlbl s'$, $s,s' \not\in A$, applying \cite[Theorem 5.2]{LindvallBook}, we can fix $\rho'(s,s')$ as the coupling in $\coupling{\rho(s)}{\rho(s')}$ such that ${\rho'(s,s')(\neq_{\R})}= \tv{\rho(s)}{\rho(s')}$; and we fix $\tau'(s,s')$ as one vertex of the transportation polytope $\coupling{\tau(s)}{\tau(s')}$ that achieves the value $\mathcal{K}_d(\tau(s),\tau(s'))$. The sets $S'$, $A'$, and the function $\ell'$ are fixed according to Definition~\ref{def:coupling}. For such a coupling $\C$, its easy to verify that $\Gamma^\C(d) = F^\M(d)$ holds.

Let $\mathcal{D}$ be a coupling for $\mathcal{M}$ such that $\Gamma^\mathcal{D}(f^\M) = F^\M(f^\M)$. By definition $F^\M(f^\M) = f^\M$, therefore $f^\M$ is a fixed point for $\Gamma^\mathcal{D}$. By Lemma~\ref{lem:post-fixed}, $f^\M$ is a lower bound of the set of fixed points of $\Gamma^\mathcal{D}$, so that $f^\M = \discr{D}$. By Lemma~\ref{lem:post-fixed}, we have also that, for any coupling $\mathcal{C}$ for $\mathcal{M}$, $f^\M \sqsubseteq \discr{C}$. Therefore, for $D = \set{\discr{C}}{\text{$\mathcal{C}$ coupling  for $\mathcal{M}$}}$, we have that $f^\M \in D$ and $f^\M$ is a lower bound for $D$. Hence, $f^\M = \min D$. Then, the thesis follows by Corollary~\ref{cor:modelstheta}. \qed
\end{proof}

\begin{proof}[of Theorem~\ref{th:fixedbisimdist}]
By Lemma~\ref{lem:coupling-dist} it suffices to show the two points for $f^\M$.
\begin{enumerate}
\item By showing that $F^\M$ is $\omega$-continuous and proving that, if $d \colon S \times S \to [0,1]$ is a pseudometric then is so $F^\M(d)$. 
\item ($\Rightarrow$) We prove that 
$R = \set{ (s,s') }{ f^\M(s,s') = 0 }$ is a 
bisimulation on $\M$. Clearly, $R$ is an equivalence. Assume $(s,s') \in R$. By definition of $F^\M$, 
it holds:
\begin{enumerate}[topsep=0.5ex, noitemsep]
	\item \label{itm:th:fixedbisimdist1} $s \equiv s'$ and $s,s \in A$, or
	\item \label{itm:th:fixedbisimdist2} $s \equiv s'$ and $s,s \notin A$, and
	$\alpha + (1 - \alpha) \mathcal{K}_{f^\M}(\tau(s),\tau(s')) = 0$ where $\alpha = \tv{\rho(s)}{\rho(s')}$.
\end{enumerate}
If \eqref{itm:th:fixedbisimdist2} holds, we have $\tv{\rho(s)} {\rho(s')} = 0$ and $\mathcal{K}_{f^\M}(\tau(s),\tau(s')) = 0$. The total variation distance is a metric in $\Delta(\R)$, thereore $\rho(s) = \rho(s')$. By \cite[Lemma 3.1]{FernsPP04}, $\mathcal{K}_{f^\M}(\tau(s),\tau(s')) = 0$ implies that, for all $C \in S/_R$, $\tau(s)(C) = \tau(s')(C)$. Therefore $\mathrel{R}$ is a bisimulation.

($\Leftarrow$) 
Let $R$ be a bisimulation on $\M$, and define $d_R \colon S \times S \to [0,1]$ by
$d_R(s,s') = 0$ if $(s,s') \in R$ and $d_R(s,s') = 1$ if $(s,s') \notin R$, for all $s,s' \in S$. We prove that $F^\M(d_R) \sqsubseteq d_R$. If $(s,s') \notin R$, then $d_R(s,s') = 1 \geq F^\M(d_R)(s,s')$. If 
$(s,s') \in R$, then $\ell(s) = \ell(s')$ and one of the following holds:
\begin{enumerate}[topsep=0.5ex, noitemsep]
	\item \label{itm:th:fixedbisimdist3} $s,s' \in A$, or
	\item \label{itm:th:fixedbisimdist4} $s,s' \notin A$, $\rho(s) = \rho(s')$ and,
	$\forall C \in S/_{R}. \, \tau(s)(C) = \tau(s')(C)$.
\end{enumerate}

If \eqref{itm:th:fixedbisimdist3} holds, then $F^\M(d_R)(s,s') = 0 = d_R(s,s')$. 

If \eqref{itm:th:fixedbisimdist4} holds, $\tv{\rho(s)}{\rho(s')} = 0$ and by \cite[Lemma 3.1]{FernsPP04} and the fact that, for all $C \in S/_{R}$, $\tau(s)(C) = \tau(s')(C)$, we have $\mathcal{K}_{d}(\tau(s),\tau(s')) = 0$. Therefore, $F^\M(d_R)(s,s') = 0 = d_R(s,s')$. By the generality of the chosen $R$ and Tarski's fixed point theorem, we have that $s \sim_\M s'$ implies $f^\M(s,s') = 0$. 
\qed
\end{enumerate}
\end{proof}

\begin{lemma} \label{lem:uniquefixedpoint}
Let $\mathcal{M}$ be an SMM and let $G^\M \colon \mDom \to \mDom$ be defined by 
$G(d)(s,s') = 0$ if $s \sim_\M s'$, and $G(d)(s,s') = F^\M(d)(s,s')$ otherwise.
Then, $G$ has a unique fixed point, and it corresponds to $f^\M$.
\end{lemma}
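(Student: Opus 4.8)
The plan is to establish three facts in turn: that $G$ is order preserving (so fixed points exist), that $f^\M$ is one of them, and that every fixed point equals $f^\M$. Order preservation of $G$ is immediate, since $G$ agrees with the monotone operator $F^\M$ everywhere except on the bisimilar pairs, where it is the constant $0$; hence by Tarski's fixed point theorem the set of fixed points of $G$ is nonempty and a complete lattice. That $f^\M$ is a fixed point follows directly from Theorem~\ref{th:fixedbisimdist}: on a pair with $s \sim_\M s'$ we have $f^\M(s,s') = 0 = G(f^\M)(s,s')$, whereas on a pair with $s \not\sim_\M s'$ the operator $G$ coincides with $F^\M$, so $G(f^\M)(s,s') = F^\M(f^\M)(s,s') = f^\M(s,s')$ because $f^\M$ is a fixed point of $F^\M$.

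To prove uniqueness I would fix an arbitrary $d = G(d)$ and show $f^\M \sqsubseteq d$ and $d \sqsubseteq f^\M$. For the first inequality the key observation is that $d$ is in fact a fixed point of $F^\M$, not merely of $G$: the two operators can disagree only on bisimilar pairs, and there I would verify $F^\M(d)(s,s') = 0$. Indeed, on such a pair $\tv{\rho(s)}{\rho(s')} = 0$, and $\tau(s),\tau(s')$ assign equal mass to every $\sim_\M$-class, so there is a coupling of $\tau(s),\tau(s')$ supported on bisimilar pairs, where $d$ vanishes; hence $\mathcal{K}_d(\tau(s),\tau(s')) = 0$ and $F^\M(d)(s,s') = 0 = d(s,s')$. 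Since $f^\M$ is the \emph{least} fixed point of $F^\M$, this gives $f^\M \sqsubseteq d$.

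For the reverse inequality $d \sqsubseteq f^\M$ — the hard part — I would set $M = \max_{s,s'}\bigl(d(s,s') - f^\M(s,s')\bigr)$ and suppose $M > 0$, letting $Z$ be the set where this maximum is attained. Since $d$ and $f^\M$ coincide (at $1$) on pairs with $s \not\equiv s'$ and (at $0$) on bisimilar pairs, $Z$ can only contain pairs with $s \equiv s'$, $s,s' \notin A$, and $s \not\sim_\M s'$. For $(s,s') \in Z$, writing $\alpha = \tv{\rho(s)}{\rho(s')}$ and choosing the coupling $\omega$ optimal for $\mathcal{K}_{f^\M}(\tau(s),\tau(s'))$, the defining identity of $F^\M$ together with $\mathcal{K}_d(\tau(s),\tau(s')) \le \sum_{u,v} d(u,v)\,\omega(u,v)$ yields $M \le (1-\alpha)\sum_{u,v}\bigl(d(u,v)-f^\M(u,v)\bigr)\,\omega(u,v) \le (1-\alpha)M$. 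As $M > 0$, this forces both $\alpha = 0$ (so $\rho(s) = \rho(s')$) and, since each summand is bounded by $M$, that $\omega$ is supported entirely on $Z$.

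Finally I would turn this closed class $Z$ into a contradiction with non-bisimilarity. Assembling the diagonal couplings of $\rho(s) = \rho(s')$ (which realise $\rho'(\neq_{\R}) = 0$) together with the couplings $\omega$ above (supported on $Z$) into a coupling model $\mathcal{C}$, the coupled process started in any $(s,s') \in Z$ remains in $Z$ forever and never performs a step at distinct times; as all pairs in $Z$ are equally labelled and non-absorbing, the discrepancy event $\sep$ has probability $0$, i.e.\ $\discr{C}(s,s') = 0$ by Lemma~\ref{lemma:CouplingReachability}. By Corollary~\ref{cor:modelstheta} this gives $\vartheta^\M(s,s') = 0$, whence $f^\M(s,s') = 0$ and $s \sim_\M s'$ by Theorem~\ref{th:fixedbisimdist}, contradicting $(s,s') \in Z$. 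Thus $M \le 0$, so $d \sqsubseteq f^\M$, and combined with $f^\M \sqsubseteq d$ the fixed point is unique and equals $f^\M$. I expect this last reachability/closed-class step — disposing of the pairs with vanishing residence-time discrepancy $\alpha = 0$, where the naive $(1-\alpha)$-contraction degenerates — to be the main obstacle.
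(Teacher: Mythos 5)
Your proposal is correct and its heart is the same as the paper's: the uniqueness argument via the maximal gap, the $(1-\alpha)$-contraction inequality obtained by evaluating the coupling optimal for the subtrahend, and the resolution of the degenerate $\alpha = 0$ case by assembling the gap-maximizing couplings (supported on the argmax set, together with diagonal couplings of the residence-time distributions) into a coupling model with zero discrepancy, which forces bisimilarity through Corollary~\ref{cor:modelstheta} and Theorem~\ref{th:fixedbisimdist} and yields the contradiction. The only structural difference is organizational: the paper compares the least and greatest fixed points of $G$ (so comparability comes for free from Tarski), whereas you compare an arbitrary fixed point $d$ with $f^\M$ and hence need the additional --- correct --- observation that every fixed point of $G$ is also a fixed point of $F^\M$ (via a coupling of $\tau(s),\tau(s')$ supported on bisimilar pairs), which gives $f^\M \sqsubseteq d$ by leastness; your write-up of the $\alpha=0$ step is in fact more explicit than the paper's rather terse version of the same construction.
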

\begin{proof}[of Lemma~\ref{lem:uniquefixedpoint}]
Let $\M = (S, A, \tau, \rho, \ell)$. 
The proof follows the same idea of~\cite[Proposition 17 and Corollary 18]{ChenBW12}. 

We first prove that $G$ has a unique fixed point.
Since $F^\M$ is monotone, it can be easily deduced that $G$ is monotone as well. By Tarski's fixed point theorem, $G$ has a least and a greatest fixed point. Therefore it suffices to prove that if $d \sqsubseteq d'$ are both fixed point of $G$ then $d = d'$. Let 
\begin{align*}
m = \textstyle\max_{s,s' \in S} \set{ d'(s,s') - d(s,s')}{}, 
&& s \mathrel{M} s' \iff d'(s,s') - d(s,s') = m \,. 
\end{align*}
We show that $m = 0$, that is $d = d'$. 
Assume $s \mathrel{M} s'$, we distinguish 3 cases:
\begin{enumerate}
\item if $s \sim_\M s'$, then $d'(s,s') - d(s,s') = 0 - 0 = 0$. Note this covers also the case when $s \equiv s'$ and $s,s' \in A$.
  \item if $s \not\equiv s'$, then $d'(s,s') - d(s,s') = F^\M(d')(s,s') - F^\M(d)(s,s') = 1 - 1 = 0$.
\item otherwise, we have $s \equiv s'$ but $s \not\sim_\M s'$ (note this implies $s,s' \not\in A$). Let $\alpha = \tv{\rho(s)}{\rho(s')}$ and assume that $G(d)(s,s')$ is achieved on $\omega \in \coupling{\tau(s)}{\tau(s')}$, i.e., $G(d)(s,s') = \alpha + (1 - \alpha) \sum_{u,v \in S} d(u,v) \cdot \omega(u,v)$. Then 
\begin{align}
  m &= d'(s,s') - d(s,s') \tag{by $s \mathrel{M} s'$} \\
  &= G(d')(s,s') - G(d)(s,s') \tag{by hp.\ on $d$ and $d'$} \\
  &= (1-\alpha) \big( \mathcal{K}_{d'}(\tau(s),\tau(s')) - \sum_{u,v \in S} d(u,v) \cdot \omega(u,v) \big)\tag{by hp.\ on $\omega$}\\
  &\leq  (1-\alpha) \big(\sum_{u,v \in S} d'(u,v) \cdot \omega(u,v) - \sum_{u,v \in S} d(u,v) \cdot \omega(u,v) \big)
  \tag{by def.\ $\mathcal{K}$} \\
  &=  (1-\alpha) \big(\sum_{u,v \in S} (d'(u,v) - d(u,v)) \cdot \omega(u,v) \big) \notag
\end{align}
By hypothesis on $m$ and $\omega$ we have respectively that $d'(s,s') - d(s,s') \leq m$, and $\sum_{u,v \in S} \omega(u,v) = 1$, thus, from the above inequality, that is 
\begin{equation}
\textstyle m \leq (1-\alpha) \big(\sum_{u,v \in S} (d'(u,v) - d(u,v)) \cdot \omega(u,v) \big) \,,
\label{eq:inequalitym}
\end{equation} we have that 
\begin{itemize}
\item if $0 < \alpha \leq 1$, then $0 \leq 1-\alpha < 0$. Since the left hand side of \eqref{eq:inequalitym} is bounded by $(1-\alpha) m$ we have that \eqref{eq:inequalitym} holds only for $m=0$;
\item if $\alpha = 0$ we have that \eqref{eq:inequalitym} holds only if $d'(u,v) - d(u,v) = m$ whenever $\omega(u,v) > 0$. 
Thus $\omega$ has support contained in $M$. This implies that there exists a coupling model $\C$ for $\M$ (constructed using $\omega$) such that $\discr{C}(s,s') = 0$. By Lemma~\ref{lem:coupling-dist} and Theorem~\ref{th:fixedbisimdist} we have that $s \sim_\M s'$ therefore $m = 0$. 
\end{itemize} 

It remains to prove that $f^\M$ is a fixed point for $G$, that is $f^\M = G(f^\M)$. On the one hand, suppose that $s \sim_\M s'$. Then, by Theorem~\ref{th:fixedbisimdist}, $f^\M(s,s') = 0 = G(f^\M)(s,s')$. On the other hand, if $s \not\sim_\M s'$, $f^\M(s,s') = F(f^\M)(s,s') = G(f^\M)(s,s')$. \qed
\end{enumerate}
\end{proof}

\begin{proof}[of Lemma~\ref{th:bisimpoly}]
In~\cite{ChenBW12} it has been shown that deciding probabilistic bisimilarity over an MC is \textbf{P}-hard. We proceed showing that the problem of computing bisimilarity on an SMM $\M$ can be turned to the  problem of computing probabilistic bisimilarity on an MC $\M'$. 

Recall that the total variation distance is a metric over $\Delta(\R)$, therefore, given two states $s,s' \notin A$, the problem of checking $\rho(s) = \rho(s')$  corresponds to verify $\tv{\rho(s)}{\rho(s')} = 0$. Let $X$ be a set of labels disjoint from $AP$, and let $l \colon S \setminus A \to X$ be a map such that $l(s) = l(s')$ iff $\rho(s) = \rho(s')$. By hypothesis, $l$ can be constructed in polynomial time in $\mathit{size}(\M)$.

Let $\M' = (S,L,\tau',\ell')$ be an MC defined as follows. The set of labels $L$ is $2^{AP \cup X}$; the transition probability $\tau' \colon S \to \Distr{S}$ is defined as $\tau'(s) = \tau(s)$ if $s \not\in A$, and $\tau'(s) = \chi_{\{s\}}$ if $s \in A$; the labeling function $\ell' \colon S \to L$ is defined as $\ell'(s) = \ell(s) \cup \{ l(s) \}$ if $s \not\in A$, and $\ell'(s) = \ell(s)$ if $s \in A$. 

Now we show that any bisimulation $R \subseteq S \times S$ for $\M'$ in the sense of~\cite{ChenBW12} is also a bisimulation for $\M$ in the sense of Definition~\ref{def:pbisim}. Let $R$ be a bisimulation for $\M'$ and assume $s \mathrel{R} s'$. By \cite{ChenBW12} $\ell'(s) = \ell'(s')$ and $\forall C \in S/_{R}. \, \tau'(s)(C) = \tau'(s')(C)$. Clearly, $\ell'(s) = \ell'(s')$ implies that $\ell(s) = \ell(s')$ and either $s,s' \in A$ or $s,s' \notin A$ and $\rho(s) = \rho(s')$. If $s,s' \notin A$, $\forall C \in S/_{R}. \, \tau'(s)(C) = \tau'(s')(C)$ also implies that $\forall C \in S/_{R}. \, \tau(s)(C) = \tau(s')(C)$, by construction of $\tau'$. \qed
\end{proof}

\begin{proof}[of Theorem \ref{th:poly}] 
The proof uses the same idea of~\cite{ChenBW12}. Proceed by showing that $\vartheta^\M$ can be characterized as the solution of a linear program that can be solved in polynomial time in $\mathit{size}(\M)$. 

By Lemmas~\ref{lem:coupling-dist} and \ref{lem:uniquefixedpoint}, $\vartheta^\M$ is the unique fixed point of $G \colon \mDom \to \mDom$. Thus, by Tarski's fixed point theorem we have that 
$$\vartheta^\M = \bigsqcup \set{d \in \mDom }{ d \sqsubseteq G(d)} \,.$$ 
This allows us to characterize $\vartheta^\M$ as the solution of the following linear program 
\begin{align*}
	\text{maximize } & \textstyle\sum_{u,v \in S} d_{u,v} && \\
	\text{such that } & d_{u,v} = 0 && u \sim_\M v \\
				& d_{u,v} = 1 && u \not\equiv v \\
				& \alpha_{u,v} = \tv{\rho(u)}{\rho(v)} && u \equiv v \text{ and } u,v \not\in A \\
				& d_{u,v} \leq \alpha_{u,v} + (1 - \alpha_{u,v}) 
				\sum_{x,y \in S} d_{x,y} \cdot \omega_{x,y}
					&& \begin{aligned}[t]
					& u \equiv v \text{ and } u,v \not\in A, \\
					&\vec{\omega} \in \Omega_{u,v}, \; u \not\sim_\M v
					\end{aligned}
\end{align*}
where, for arbitrary states $s,s' \not\in A$, $\Omega_{s,s'}$ denotes the set of vertices of the transportation polytope $\coupling{\tau(s)}{\tau(s')}$ described by the following linear constraints
\begin{align*}
	&\textstyle\sum_{u \in S} \omega_{u,v} = \tau(s)(u) && u \in S \\
	&\textstyle\sum_{v \in S} \omega_{u,v} = \tau(s')(u) && v \in S \\
	&\omega_{u,v} \geq 0 && u,v \in S
\end{align*}
Indeed, as noticed in \cite{ChenBW12}, for each fixed $\vec{d}$ the linear function mapping a feasible $\vec{\omega}$ to $\sum_{u,v} d_{u,v} \cdot \omega_{u,v}$ achieves its minimum on $\coupling{\tau(s)}{\tau(s')}$ as some vertex. Thus, the (finite) set o vertices $\Omega_{u,v}$ suffices to describe the same feasible region obtained by using the (infinite) set of $\coupling{\tau(s)}{\tau(s')}$.

As showed in \cite{ChenBW12}, even though the above linear program may have exponentially many constraints in the number of states ($\Omega_{u,v}$ may have exponentially many elements), it admits a polynomial-time separation algorithm based on the computation of a transportation problem (see~\cite[Proposition 20]{ChenBW12}) that, given any instance of the variables $d$, it can check for its feasibility in polynomial time and, whether it is not feasible, it returns one of the constraints that is not satisfied. This permits to solve the above linear program in polynomial time in $\mathit{size}(\M)$ using the ellipsoid method.\cite{Schrijver86}.
\qed
\end{proof}

\section{Transportation Problem}

In 1941 Hitchcock and, independently, in 1947 Koopmans considered the problem which 
is usually referred to as the (homogeneous) \emph{transportation problem}. This problem can be 
intuitively described as: a homogeneous product is to be shipped in the amounts $a_1,\dots, a_m$ 
respectively, from each of $m$ shipping \emph{origins} and received in amounts $b_1,\dots,b_n$ 
respectively, by each of $n$ shipping \emph{destinations}. 
The cost of shipping a unit amount from the $i$-th origin to the $j$-th destination is $c_{i,j}$ and 
is known for all combinations $(i,j)$. 
The problem is to determine an optimal \emph{shipping schedule}, i.e.\ the amount $x_{i,j}$ to be 
shipped over all routes $(i,j)$, which minimizes the total cost of transportation.

It can be easily formalized as a linear programming problem
\begin{align*}
\textsf{minimize } & \textstyle\sum_{i = 1}^{m} \sum_{j = 1}^{n} c_{i,j} \cdot x_{i,j} \\
\textsf{such that } 
    & \textstyle\sum_{j=1}^{n} x_{i,j} = a_i && (i = 1,\dots, m) \\
    & \textstyle\sum_{i=1}^{m} x_{i,j} = b_j && (j = 1,\dots, n) \\
    & x_{i,j} \geq 0 && (i = 1,\dots, m \text{ and } j = 1,\dots, n)
\end{align*}
The set of schedules feasible for a transportation problem, which is formalized as
a conjunction of linear constraints, describes a (bounded) convex polytope in 
$\R[2]$, often called transportation polytope. 

There are several algorithms in literature which efficiently solve (not necessarily 
homogeneous) transportation problems. Among these we recall \cite{Dantzig51,FordF56}.

\end{document}